\tikzset{font=\footnotesize}
\tikzstyle{none}=[inner sep=0pt]
\newcommand\z{{\mathbb Z}}
\renewcommand\k{\mathsf{k}}   
\newcommand\R{{\mathbb R}}
\newcommand\C{{\mathbb C}}
\newcommand\nn{{\mathbb N}}
\newcommand\bb{{\mathscr B}}
\newcommand{\idn}{\mathrm{id}}
\newcommand{\tw}{\mathrm{tw}}
\newcommand{\tm}{\tau}
\newcommand\pr{{\R[s,s^{-1}]}}
\newcommand\pk{{\k[s,s^{-1}]}}
\newcommand\maps{{\colon}}
\newcommand{\define}[1]{{\bf \boldmath #1}}
\newcommand{\Defeq}{\stackrel{\mathrm{def}}{=}}
\newcommand{\vectfun}{\theta}
\newcommand{\cospanfun}{\Theta}
\newcommand{\cospanfunrest}{\overline{\Theta}}
\newcommand{\ha}{\mathbb{HA}}
\newcommand{\ih}{\mathbb{IH}}
\newcommand{\ihcsp}{\ih^{\mathsf{Csp}}}
\newcommand{\ihcor}{\ih^{\mathsf{Cor}}}
\DeclareMathOperator{\mat}{\mathsf{Mat}}
\DeclareMathOperator{\vect}{\mathsf{Vect}}
\DeclareMathOperator{\ltids}{\mathsf{LTI}}
\DeclareMathOperator{\im}{im}
\DeclareMathOperator{\cospan}{\mathsf{Cospan}}
\DeclareMathOperator{\corel}{\mathsf{Corel}}
\DeclareMathOperator{\linrel}{\mathsf{LinRel}}
\DeclareMathOperator{\fmod}{\mathsf{FMod}}
\DeclareMathOperator{\bit}{bit}
\newcommand\addgen{\lower8pt\hbox{$\includegraphics[height=0.7cm]{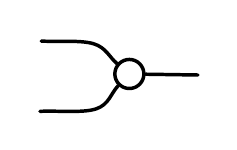}$}}
\newcommand\zerogen{\lower5pt\hbox{$\includegraphics[height=0.5cm]{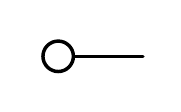}$}}
\newcommand\copygen{\lower8pt\hbox{$\includegraphics[height=0.7cm]{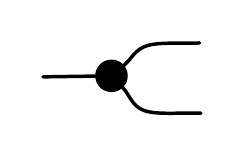}$}}
\newcommand\discardgen{\lower5pt\hbox{$\includegraphics[height=0.5cm]{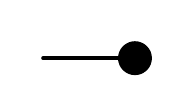}$}}
\newcommand\delaygen{\lower6pt\hbox{$\includegraphics[height=0.6cm]{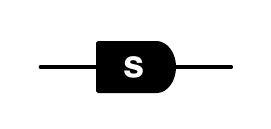}$}}
\newcommand\minonegen{\lower6pt\hbox{$\includegraphics[height=0.6cm]{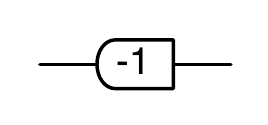}$}}
\newcommand\delayopgen{\lower6pt\hbox{$\includegraphics[height=0.6cm]{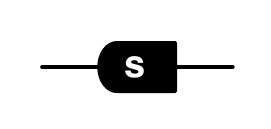}$}}
\newcommand\scalargen{\lower6pt\hbox{$\includegraphics[height=0.6cm]{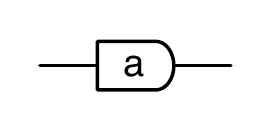}$}}
\newcommand\addopgen{\lower8pt\hbox{$\includegraphics[height=0.7cm]{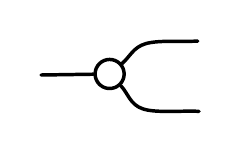}$}}
\newcommand\zeroopgen{\lower5pt\hbox{$\includegraphics[height=0.5cm]{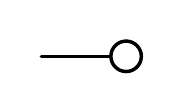}$}}
\newcommand\copyopgen{\lower8pt\hbox{$\includegraphics[height=0.7cm]{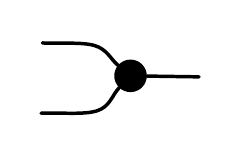}$}}
\newcommand\discardopgen{\lower5pt\hbox{$\includegraphics[height=0.5cm]{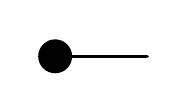}$}}
\newcommand\scalaropgen{\lower6pt\hbox{$\includegraphics[height=0.6cm]{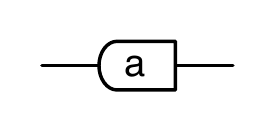}$}}
\newcommand\delaygenl{\lower6pt\hbox{$\includegraphics[height=0.6cm]{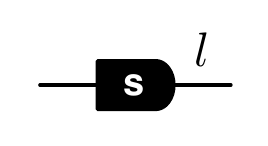}$}}
\newcommand\delayopgenl{\lower6pt\hbox{$\includegraphics[height=0.6cm]{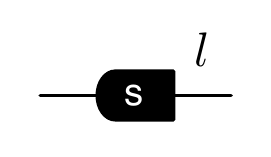}$}}
\newcommand\delaygenk{\lower6pt\hbox{$\includegraphics[height=0.6cm]{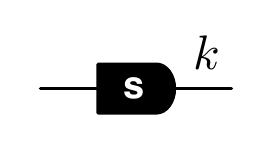}$}}
\newcommand\delayopgenk{\lower6pt\hbox{$\includegraphics[height=0.6cm]{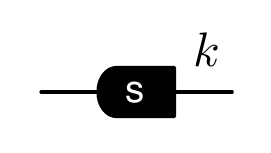}$}}
\newcommand\twist{\lower6pt\hbox{$\includegraphics[height=0.6cm]{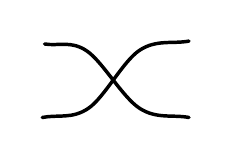}$}}
\newcommand\id{\lower3pt\hbox{$\includegraphics[height=0.3cm]{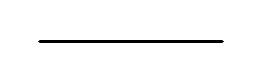}$}}
\newcommand\syntax{\mathbb{S}}
     \def\moverlay{\mathpalette\mov@rlay}
     \def\mov@rlay#1#2{\leavevmode\vtop{%
        \baselineskip\z@skip \lineskiplimit-\maxdimen
        \ialign{\hfil$#1##$\hfil\cr#2\crcr}}}
\newcommand\twarr[2]{%
\mathrel{\mathop{\moverlay{\scriptstyle\xrightarrow{\,#1\,}\cr{\lower.2em\hbox{$\scriptstyle{}_{#2}$}}}}}}
\newcommand{\dtrans}[2]{\hbox{$\;\twarr{#1}{#2}\;$}}
\newcommand{\labelSep}{\,}
\newtheorem{clm}{Claim}[section]
\newtheorem{corollary}[clm]{Corollary}
\newtheorem{proposition}[clm]{Proposition}
\newtheorem{definition}[clm]{Definition}
\newtheorem{lemma}[clm]{Lemma}
\newtheorem{theorem}[clm]{Theorem}
\newtheorem{exm}[clm]{Example}
\newtheorem{remark}[clm]{Remark}
\begin{document}   

%------------------------------------------------------------------------------------	
\title{A categorical approach to open and interconnected dynamical systems}
\authorinfo{Brendan Fong}{Department of Computer Science \\ University of
  Oxford, and
\\ Department of Mathematics \\ University of Pennsylvania}{}
\authorinfo{Pawe{\l} Soboci{\'n}ski}{School of Electronics and \\ Computer
  Science\\
University of Southampton}{}
\authorinfo{Paolo Rapisarda}{School of Electronics and \\ Computer Science\\
University of Southampton}{}

\maketitle

\begin{abstract}
In his 1986 Automatica paper Willems introduced the influential behavioural
approach to control theory with an investigation of linear time-invariant (LTI)
discrete dynamical systems. The behavioural approach places open systems at its
centre, modelling by ‘tearing, zooming, and linking’. In this paper, we show
that these ideas are naturally expressed in the language of symmetric monoidal
categories. 

Our main result implies an intuitive sound and fully complete string diagram
algebra for reasoning about LTI systems. These string diagrams are closely
related to the classical notion of signal flow graphs, but in contrast to
previous work, they endowed with semantics as multi-input multi-output
transducers that process streams with an \emph{infinite past} as well as an
infinite future. At the categorical level, the algebraic characterisation is
that of the prop of corelations, instead of a pushout of the props of spans and
cospans. 

Using this framework, we then derive a novel structural characterisation of
controllability, and consequently provide a methodology for analysing
controllability of networked and interconnected systems. We argue that this
methodology has the potential of providing elegant, simple, and efficient
solutions to problems arising in the analysis of systems over networks, a
vibrant research area at the crossing of control theory and computer science. 
\end{abstract}

\section{Introduction}

The remit of this paper is the development of a sound and fully complete
equational theory of linear time-invariant (LTI) dynamical systems. As in
previous work on linear systems~\cite{BSZ1,BSZ2,BSZ3,BE,Za}, the terms of this
theory---representing the LTI systems themselves---are best represented as
diagrams that closely resemble signal flow graphs~\cite{Sh}. 

Before exploring the underlying mathematics, we illustrate our contribution by
considering the behaviour of the signal flow graph below, rendered in
traditional, directed notation.
\begin{equation}\label{eq:examplesfg}
\lower12pt\hbox{$\includegraphics[height=1.5cm]{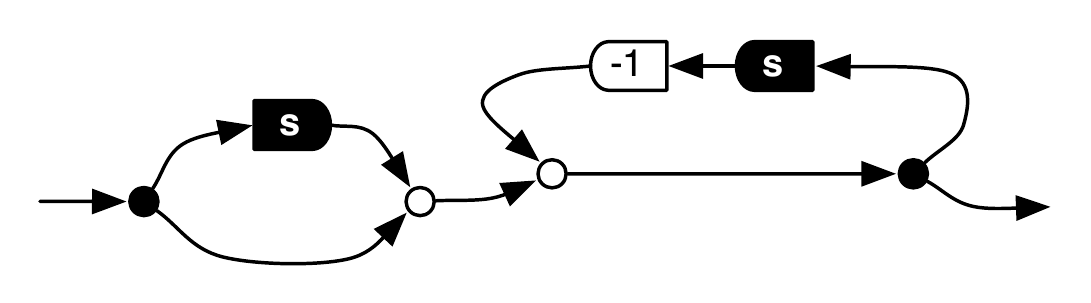}$}
\end{equation}
The circuit inputs a stream of values from a field $\k$, e.g.\ the rational
numbers, and outputs a stream of values. All processing is done synchronously
according to a global clock.  The white circles are adders, the black circles
are duplicators, the `$s$' gates are 1-step delays and the `$-1$' gate is an
instance of an amplifier that outputs $-1$ times its input.

We can express \eqref{eq:examplesfg} as a string diagram, in the sense of Joyal
and Street \cite{JS}, by forgetting the directionality of wires and composing
the following basic building blocks using the operations of monoidal categories.
\begin{multline*}
\copygen \mathrel{;} 
\!\!\!\!\!\!\!
\begin{array}{c} \delaygen \\ \oplus \\ \id \end{array} 
\!\!\!\!\!\!\!\mathrel{;}\!\! \addgen
\mathrel{;} \!\!\!\!\!\!\! \begin{array}{c} \discardopgen \\ \oplus \\ \id \end{array} 
\!\!\!\!\!
\mathrel{;} 
\!\!\!\!\!
 \begin{array}{c} \copygen \\ \oplus \\ \id \end{array} 
\!\!\!\!\!
\mathrel{;}
\\
\!\!\!\!\! 
 \begin{array}{c} \minonegen \\ \oplus \\ \id \end{array} 
\!\!\!\!\! 
 \mathrel{;}
\!\!\!\!\! 
 \begin{array}{c} \delayopgen \\ \oplus \\ \id \end{array} 
 \!\!\!\!\! 
 \mathrel{;}
\!\!\!\!\! 
 \begin{array}{c} \id \\ \oplus \\ \copygen \end{array} 
  \!\!\!\!\! 
 \mathrel{;}
\!\!\!\!\! 
 \begin{array}{c} \copyopgen \\ \oplus \\ \id \end{array} 
 \!\!\!\!\! 
 \mathrel{;}
\!\!\!\!\! 
 \begin{array}{c} \discardgen \\ \oplus \\ \id \end{array} 
\end{multline*}
The building blocks come from the signature of an algebraic theory---a \emph{symmetric monoidal theory} to be exact---$\ih_{k[s]}$ of interacting Hopf algebras~\cite{BSZ2,BSZ3,Za}. Terms as the one above can also be given an operational semantics~\cite{BSZ3} that accounts for the behavior of each basic component, and can thus be considered as the terms of a process algebra for signal flow graphs. The ideas of understanding complex dynamical systems by ``tearing'' them into more basic components, ``zooming'' to understand their individual behaviour and ``linking'' to obtain a composite system is at the core of the behavioural approach in control, originated by Willems~\cite{Wi3}. The algebra of symmetric monoidal categories thus seems a good fit for a formal account of these compositional principles.

This paper is the first to make this link explicit. Although~\cite{BSZ2,BSZ3,Za} made the connection between signal flow graphs and string diagrams, the operational semantics of~\cite{BSZ3,Za} demands that, initially, all the registers contain the value $0$, which is \emph{not} the standard systems theoretic semantics.  
Indeed, with this restriction, it is not difficult to see that the behaviours of~\eqref{eq:examplesfg} are those where the output stream is the same as the input stream. Operationally speaking, the input/output behaviour is thus that of a wire. Indeed, as shown in~\cite{BSZ3}, the operational and algebraic interpretations are closely related, and indeed, in $\ih_{k[s]}$ it is easy to prove that:
\begin{equation}\label{eq:exampleproof}
\lower12pt\hbox{$\includegraphics[height=1.2cm]{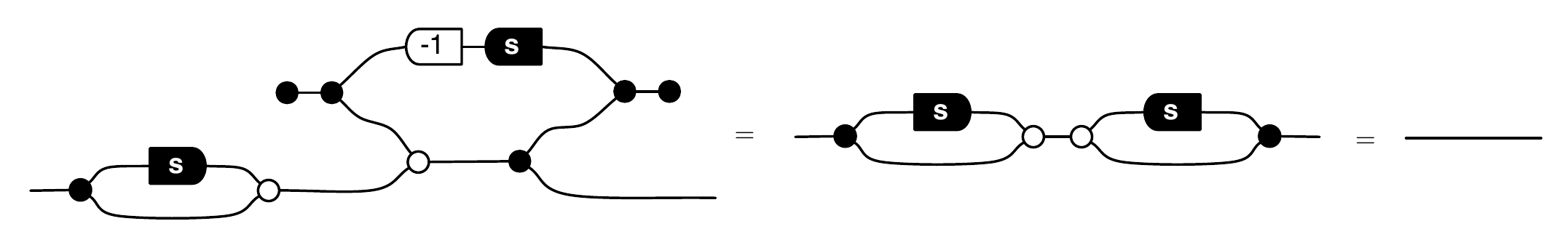}$} 
\end{equation}
The standard systems theoretic semantics allows registers to hold
\emph{arbitrary} values at the beginning of a computation. This extended notion
of behaviour is not merely a theoretical curiosity: it gives the class of
\emph{complete LTI discrete dynamical systems}~\cite{Wi3}, the class of
behaviours that is practically the lingua franca of control theory. The interest
of systems theorists is due to practical considerations: physical systems seldom
evolve from zero initial conditions.

For the sake of concreteness, consider the behaviour where the left register of~\eqref{eq:examplesfg} initially holds a $1$, the other a $2$. Now, inputting the stream $-1,1,-1,1,\dots$ results in the output $-2,2,-2,2,\dots$. This behaviour is clearly not permitted by a stateless wire, so the proof of~\eqref{eq:exampleproof} is \emph{not sound} for reasoning about circuits with this more liberal operational semantics. In this paper, we give a sound and complete theory to do just that.

%due to its practical relevance: engineers are loathe to impose assumptions about well-behaved initial conditions on the physical world.

%These behaviours are a standard notion of linear time-invariant dy semantics of flow graphs considered in control and system theory~\cite{??,??};

Technically, the difference from previous work~\cite{BSZ1,BSZ3}---where streams
were handled with Laurent (formal power) series---is that here we must consider
\emph{biinfinite} streams: those sequences of elements of $\k$ that are infinite
in the past \emph{as well} as in the future, that is, elements of $\k^\z$.
Starting with the operational description, one obtains a biinfinite trajectory
by executing circuits forwards and backwards in time, for some initialisation of
the registers. The dynamical system defined by a circuit is the set of
trajectories obtained by considering all possible initialisations.

%\smallskip
We obtain the equations by first showing that there is a full, but not faithful,
morphism from the prop $\cospan\mat\pk$ of cospans of matrices over the ring
$\pk$ to the prop $\ltids$ of complete LTI discrete dynamical systems. We rely
on~\cite{BSZ2,Za} for a presentation of $\cospan\mat\pk$. The result is a sound,
but not complete, proof system. The last ingredient is restricting our attention
from cospans to \emph{corelations}: this gives a faithful morphism, and allows
us to present the prop of corelations as a symmetric monoidal theory.

% paragraph about control

\smallskip 
%We will combine the work of Willems on the behavioural approach to discrete time
%linear time-invariant dynamical systems with the work of Bonchi, Soboci\'nski,
%and Zanasi on signal flow graphs.

The biinfinite trajectories admitted by~\eqref{eq:examplesfg} are a textbook
example of non-controllable behaviour, whereas in~\cite{BSZ1,BSZ3,Za,BE} all
definable behaviours were controllable.  We show that controllability has an
elegant structural characterisation in our setting.  Compositionality pays off
here, giving a new technique for reasoning about control of compound systems.
From the systems theoretic point of view, the results are promising since the
compositional, diagrammatic techniques we bring to the subject seem well-suited
to problems such as controllability of interconnections, of primary interest for
multiagent and spatially interconnected systems~\cite{OFM}.

\smallskip
Summing up, the original technical contributions of the paper are:
\begin{itemize}
\item a characterisation of the class of LTI systems as the category of corelations of matrices
\item a presentation of categories of corelations of matrices as symmetric monoidal theories
\item an operational semantics for the underlying syntax, which agrees with the standard systems theoretic
semantics of signal flow graph
\item a characterisation of controllability
\end{itemize}

\smallskip

Our work lies in the intersection of computer science, mathematics, and systems
theory.  From computer science, we use concepts  of formal semantics of
programming languages, with an emphasis on compositionality and a firm
denotational foundation for operational definitions.  From a mathematical
perspective, we obtain presentations of several relevant domains, and identify
the rich underlying algebraic structures.  For systems theory, our insight is
that mere matrices are not optimised for discussing behaviour; instead it is
profitable to use signal flow graphs, which treat linear subspaces rather than
linear transformations as the primitive concept and are thus closer to the idea
of system as a set of trajectories. At the core is the maxim---perhaps best
understood by computer scientists---that the right language allows deeper
insights into the underlying structure.

\paragraph{Structure of the paper.}
%We first review the prerequisite mathematical aspects. 
In \S\ref{sec.systems} we develop a categorical account of complete LTI discrete
dynamical systems, which serve as a denotational semantics for the graphical
language, introduced in \S\ref{sec.diagrams}, where we also derive the
equational characterisation. 
% This allows us to develop a signal flow calculus for
%these systems, and show that it is sound and complete.  
In \S\ref{sec.opsem} we discuss the operational semantics, and conclude  
in \S\ref{sec.control} with a structural account of controllability.

\section{Preliminaries}
We assume familiarity with basic concepts of linear algebra and category theory.
A \define{split mono} is a morphism $m\colon X\to Y$ such that there exists
$m'\colon Y\to X$ with $m'm=\idn_X$. 
For the remainder of the paper $\k$ is a field: for concreteness one may take
this to be the rationals $\mathbb{Q}$, the reals $\R$ or the booleans $\z_2$. 

A \define{prop} is a strict symmetric monoidal category where the set of objects
is the natural numbers $\nn$, and monoidal product ($\oplus$) on objects is
addition. Homomorphism of props are identity-on-objects strict symmetric
monoidal functors.

A \define{symmetric monoidal theory} (SMT) is a presentation of a prop: a pair
$(\Sigma,E)$ where $\Sigma$ is a set of \define{generators} $\sigma\colon m\to
n$, where $m$ is the \define{arity} and $n$ the \define{coarity}. A
$\Sigma$-term is a obtained from $\Sigma$, identity $\idn\colon 1\to 1$ and
symmetry $\tw\colon 2\to 2$ by composition and monoidal product, according to
the grammar
\[
  \tm\ ::=\ \sigma\ |\ \idn\ |\ \tw\ |\ \tm\mathrel{;}\tm\ |\ \tm\oplus \tm 
\]
where $\mathrel{;}$ and $\oplus$ satisfy the standard typing discipline that
keeps track of the domains (arities) and codomains (coarities)
\[
\frac{\tm: m\to d \quad \tm': d\to n}
{\tm\mathrel{;}\tm': m\to n}
\quad
\frac{\tm: m\to d \quad \tm': m'\to d'}
{\tm\oplus \tm': m+m'\to n+n'}
\]
The second component $E$ of an SMT is a set of \define{equations}, where an
equation is a pair $(\tm,\mu)$ of $\Sigma$-terms with compatible types, i.e.
$\tm,\mu\colon m\to n$ for some $m,n\in\mathbb{N}$.

Given an SMT $(\Sigma,E)$, the prop $\mathbf{S}_{(\Sigma,E)}$ has as arrows the
$\Sigma$-terms quotiented by the smallest congruence that includes the laws of
symmetric monoidal categories and the equations $E$. We will sometimes abuse
notation by referring to $\mathbf{S}_{(\Sigma,E)}$ as an SMT. Given an arbitrary prop
$\mathbb{X}$, a \define{presentation} of $\mathbb{X}$ is an SMT $(\Sigma,E)$
such that $\mathbb{X}\cong\mathbf{S}_{(\Sigma,E)}$.

String diagrams play an important role in this paper. Given a set of generators
$\Sigma$, we consider string diagrams to be arrows of
$\mathbf{S}_{(\Sigma,\varnothing)}$, that is, syntactic objects constructed by
composition from the generators and quotiented by the laws of symmetric monoidal
categories. For example, consider generators $2\to 1$ and $0\to 1$, which we
will draw as follows, with `dangling wires' on either side accounting for the
arity and coarity:
\[
\addgen \quad \zerogen
\]
Armed with the graphical notation, we can present sets of equations as equations
between diagrams. For example, the SMT of commutative monoids consists of the
generators above together with equations
\[
\includegraphics[height=.9cm]{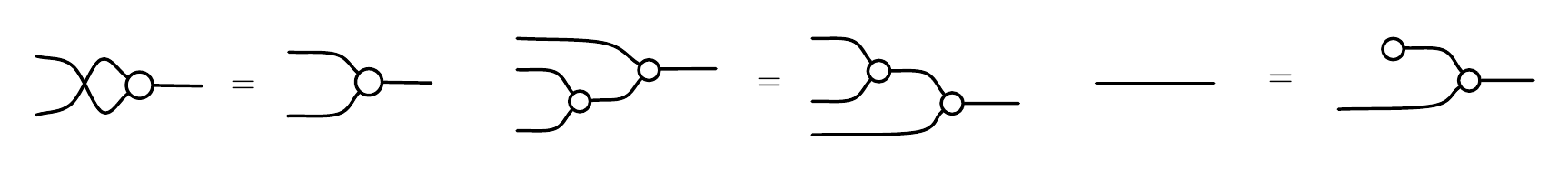}
\]
that respectively account for commutativity, associativity, and the unit law.

\section{Linear time-invariant dynamical systems} \label{sec.systems}

This section focusses on  %the  denotational semantics:
the mathematical domains of interest for the remainder of the paper. We rely on
fundamental definitions of the \emph{behavioural approach} in control, which is
informed by compositional considerations~\cite{Wi}. The concepts are standard in
systems theory, but our categorical insights are, to the best of our knowledge,
original.

%\smallskip
%The mathematical objects of interest are certain dynamical systems.
Following Willems~\cite{Wi3}, a \define{dynamical system} $(T, W,\bb)$ is: a
\define{time axis} $T$, a \define{signal space} $W$, and a \define{behaviour}
$\bb \subseteq W^T$. We refer to $w \in \bb$ as \define{trajectories}. 
We consider discrete trajectories that are \define{biinfinite}: infinite in past
and future.  Our time axis is thus the integers $\z$. The signal space is
$\k^n$, where $n$ is the number of \define{terminals} of a system. These, in
engineering terms, are the interconnection variables that enable interaction
with an environment.

The dynamical systems of concern to us are thus specified by some natural number
$n$ and a subset $\bb$ of $(\k^n)^\z$. The sets $(\k^n)^\z$ are $\k$-vector
spaces, with pointwise addition and scalar multiplication; we restrict attention
to \emph{linear} systems, meaning that $\bb$ is a $\k$-linear subspace---i.e.
closed under pointwise addition and multiplication by $\k$-scalars---of
$(\k^n)^\z$. 

We partition terminals into a \emph{domain} and \emph{codomain}.  This may seem
artificial, in the sense that the assignment is arbitrary.  In particular, it is
crucial not to confuse the domains (codomains) with inputs (outputs). In spite
of the apparent contrivedness of choosing such a partition, Willems and others
have argued that it is vital for a sound theory of system \emph{decomposition};
indeed, it enables the ``tearing'' of Willems' tearing, zooming and linking
modelling methodology~\cite{Wi}.
%. We
%use this structure to partition the system into a part we wish to connect to
%another system, and a part which we do not. It is an artifice in the sense that
%we may take any terminal in the domain and move it to the codomain, and vice
%versa. This ability is called compactness for a monoidal category.

Once the domains and codomains have been chosen, systems are linked by
connecting terminals. In models of physical systems this means variable coupling
or sharing~\cite{Wi}; in our discrete setting where behaviours are subsets of a
cartesian product---i.e.\ relations---it amounts to relational composition.
Since behaviours are both relations and linear subspaces, a central underlying
mathematical notion---as in previous work~\cite{BSZ1,BE}---is a linear relation. 
\begin{definition}
The monoidal category $\linrel_\k$ of $\k$-linear relations has $\k$-vector
spaces as objects, and as arrows from $V$ to $W$, linear subspaces of $V\oplus
W$, considered as $\k$-vector spaces.  Composition is relational: given $A\colon
U\to V$, $B\colon V\to W$, $A\mathrel{;}B\colon U\to W$ is the  relation
\[
  \left\{\,(u,w)\,\middle|\,\exists v\in V \textrm{ s.t. } \,(u,v)\in A \textrm{ and } (v,w)\in B\,\right\} 
\] 
that is easily checked to be linear. Finally, the monoidal product on both
objects and morphisms is direct sum.
\end{definition}

\smallskip
%The next restriction is the property of \define{time-invariance}. 
A behaviour is \define{time-invariant} when for every trajectory $w \in \bb$ and
fixed $i\in\z$, the trajectory $w'$ whose value at time $t\in\z$ is
\[
  w'(t) \Defeq w(t+i)
\]
is also in $\bb$.  Time-invariance brings with it a connection with the algebra
of polynomials.  Following the standard approach in control theory, going back
to Rosenbrock~\cite{Ro}, we work with polynomials over an indeterminate $s$ as
well as its formal inverse $s^{-1}$---i.e.\ the elements of the ring $\pk$.  The
introduction of the formal inverse $s^{-1}$ is a departure from previous
work~\cite{BSZ1,BSZ3} that dealt with Laurent streams (finite in the past,
infinite in the future); this is necessary as it does not make sense to define
the action of a polynomial fraction on a biinfinite stream. 

The indeterminate $s$ acts on a given biinfinite stream $w \in \k^\z$ as a
one-step delay, and $s^{-1}$ as its inverse, a one step acceleration: 
\[ 
  (s\cdot w) (t) \Defeq w(t-1),\quad (s^{-1}\cdot w)(t) \Defeq w(t+1).
\]
We can extend this, in the obvious linear, pointwise manner, to an action of any
polynomial $p\in \pk$ on $w$.  Since $\k^\z$ is a $\k$-vector space, any such
$p$ defines a $\k$-linear map $\k^\z\to \k^\z$, where $w \mapsto p\cdot w$.

Given this, we can view $n\times m$ matrices over $\pk$ as $\k$-linear maps from
$(\k^m)^\z$ to $(\k^n)^\z$. This can be explained succinctly as a functor from
the prop $\mat\pk$, defined below, to the category of $\k$-vector spaces and
linear transformations $\vect_\k$.

\begin{definition}
  The prop $\mat\pk$ has as arrows $m \to n$ the $n\times m$-matrices over
  $\pk$. Composition is matrix multiplication, and the monoidal product of $A$
  and $B$ is $\left[\begin{smallmatrix} A & 0 \\ 0 & B
  \end{smallmatrix}\right]$. The symmetries are permutation matrices.
\end{definition}

The functor of interest
\[
  \vectfun\maps \mat\pk \longrightarrow \vect_\k
\]
takes a natural number $n$ to the vector space $(\k^n)^\z$, and an $n\times m$
matrix to the induced linear transformation $(\k^m)^\z \to (\k^n)^\z$. Note that
$\vectfun$ is faithful.

\smallskip
The final restriction on the set of behaviours is called \emph{completeness},
and is a touch more involved. For $t_0,t_1 \in \z$, $t_0 \le t_1$, write
$w|_{[t_0,t_1]}$ for the restriction of $w: \z \to \k^n$ to the set $[t_0,t_1] =
\{t_0, t_0+1, \dots, t_1\}$. Write  $\bb|_{[t_0,t_1]}$ for the set of the
restrictions of all trajectories $w \in \bb$ to $[t_0,t_1]$.  Then $\bb$ is
\define{complete} when $w|_{[t_0,t_1]} \in \bb|_{[t_0,t_1]}$ for all $t_0,t_1
\in \z$ implies $w \in \bb$. 

%We shall introduce one additional structure: a partition of the terminals of a
%dynamical system into two subsets, called a domain and a codomain. This will
%provide us with language to talk about interconnection of dynamical systems.
%We thus make the following definition.

\begin{definition}
  A \define{linear time-invariant (LTI) behaviour} comprises a domain
  $(\k^m)^\z$, a codomain $(\k^n)^\z$, and a subset $\bb \subseteq (\k^m)^\z
  \oplus (\k^n)^\z$ such that $(\z,\k^m \oplus \k^n,\bb)$ is a complete, linear,
  time-invariant dynamical system.
\end{definition}

The algebra of LTI behaviours is captured concisely as a prop.
\begin{proposition} \label{prop.ltidsiswelldefined}
  There exists a prop $\ltids$ 
  % objects the vector spaces of the form $(\k^n)^\z$ for $n \in \nn$, and 
  % morphisms $(\k^n)^\z \to (\k^m)^\z$ the LTIDS with domain
  with morphisms $m \to n$ the LTI behaviours with domain $(\k^m)^\z$ and
  codomain $(\k^n)^\z$. Composition is relational. The monoidal product is
  direct sum.
\end{proposition}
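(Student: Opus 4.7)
The plan is to exhibit $\ltids$ as arising from $\linrel_\k$: take the objects to be the natural numbers $n \in \nn$, interpreted as the vector spaces $(\k^n)^\z$ in $\linrel_\k$, and morphisms $m \to n$ to be those linear relations $(\k^m)^\z \to (\k^n)^\z$ that are additionally time-invariant and complete. Under this setup, associativity and unit laws of composition, coherence of the monoidal structure, and naturality of the symmetries are all inherited from $\linrel_\k$, so the verification reduces to showing that the class of LTI behaviours is closed under identities, relational composition, direct sums, and symmetries.

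First I would dispatch linearity and time-invariance, both of which are routine. Linearity is preserved automatically because all the operations in question are already internal to $\linrel_\k$, which is defined on linear subspaces. Time-invariance is preserved because the shift $s \colon \k^\z \to \k^\z$ acts diagonally on each $(\k^n)^\z$, hence commutes with direct sum and with the existential quantification defining relational composition: if $v$ witnesses $(u,w) \in A\mathrel{;}B$, then $s \cdot v$ witnesses $(s \cdot u, s \cdot w) \in A\mathrel{;}B$. The identity relations and symmetries (permutations of coordinates) are manifestly time-invariant.

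The main obstacle is that completeness is preserved under relational composition. Suppose $A \colon m \to n$ and $B \colon n \to p$ are complete LTI behaviours, and let $(u,w) \in (\k^m \oplus \k^p)^\z$ be a candidate trajectory whose every finite restriction $(u,w)|_{[t_0,t_1]}$ agrees with the restriction of some $(u',w') \in A\mathrel{;}B$. We must supply a single intermediate $v \in (\k^n)^\z$ with $(u,v) \in A$ and $(v,w) \in B$. On each window $[t_0,t_1]$ a local candidate $v_{[t_0,t_1]}$ exists, but distinct windows may furnish mutually incompatible local witnesses, so a direct gluing is unavailable. The approach I favour is to invoke the classical structure theorem of the behavioural approach: complete LTI subspaces of $(\k^n)^\z$ are exactly the kernels of polynomial matrices over $\pk$. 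Under this identification, $A\mathrel{;}B$ is obtained by eliminating the intermediate variable $v$ from a pair of systems of Laurent-polynomial equations; since $\pk$ is a PID (a localisation of $\k[s]$ at $s$), Smith normal form shows that the resulting projection is again the kernel of some polynomial matrix, and hence complete.

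Finally, closure under direct sum is an interval-by-interval argument: restriction to a window commutes with direct sum, so completeness of $A \oplus B$ reduces componentwise to completeness of $A$ and $B$, and linearity and time-invariance pass through trivially. Together with the inherited symmetric monoidal structure from $\linrel_\k$, this establishes that the chosen class of morphisms forms the desired prop $\ltids$.
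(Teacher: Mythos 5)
Your proposal is correct in outline and it isolates exactly the right crux --- closure of completeness under relational composition --- but it reaches it by a genuinely different route from the paper. The paper does not verify the closure properties of the class of LTI behaviours directly: it first builds the functor $\cospanfun\maps\cospan\mat\pk\to\linrel_\k$ sending a cospan $\xrightarrow{A}\xleftarrow{B}$ to $\ker\vectfun[A\ -B]$, proves functoriality (Proposition~\ref{prop.funct}, whose nontrivial inclusion rests on a property of pushouts in $\mat\pk$ imported from Bonchi--Soboci\'nski--Zanasi), and then observes that by Willems' kernel representation theorem (Theorem~\ref{thm.kernelreps}) the image of this functor is precisely the class of LTI behaviours; since the image of a prop morphism is a prop, the proposition follows. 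You instead invoke the same kernel representation theorem and then argue closure under composition by classical variable elimination over the PID $\pk$ via Smith normal form. This works, and is arguably closer to the systems-theory literature (it is the ``elimination theorem'' of the behavioural approach, and the paper itself gestures at this style of argument in \S 5.3), but note that Smith normal form alone does not finish the job: after diagonalising you still need that every \emph{nonzero} Laurent polynomial acts surjectively on $\k^\z$ (solve the recurrence forwards and backwards from arbitrary data), so that the image of a diagonal polynomial matrix is itself a kernel; without this divisibility fact the projected system need not be kernel-representable. That is a one-line addition, not a gap in strategy. What the paper's route buys in exchange is that the same functor $\cospanfun$ is reused immediately afterwards to set up the corelation analysis and the main theorem, whereas your elimination argument is self-contained but single-use.
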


%\subsection{Kernel representations}

The proof of Proposition~\ref{prop.ltidsiswelldefined} relies on \emph{kernel
representations} of LTI  systems.
The following result % is a central result that
 lets us pass between
behaviours and polynomial matrix algebra.
%\[
%\mathfrak{B}:=\{w:\mathbb{Z} \rightarrow \k^q \mid \mbox{\rm s.t.~} R(\sigma,\sigma^{-1})w=0\}\; .
%\]
%and (for \emph{controllable} behaviors) the \emph{image representation}
%\begin{equation}\label{eq:im}
%w=M(\sigma,\sigma^{-1})\ell\; ,
%\end{equation}
%where $M\in\R^{q\times m}[s,s^{-1}]$, defining the behavior
%\[
%\mathfrak{B}:=\{w:\mathbb{Z} \rightarrow \R^q \mid  \mbox{\rm exists } \ell:\mathbb{Z}\rightarrow \R^m \mbox{\rm s.t. (\ref{eq:ker}) holds}\}\; ,
%\]}
%
%
\begin{theorem}[Willems {\cite[Th. 5]{Wi3}}] \label{thm.kernelreps}
  Let $\bb$ be a subset of $(\k^n)^\z$ for some $n \in \mathbb N$. Then $\bb$ is
  an LTI behaviour iff there exists $M \in
  \mat\pk$ such that $\bb = \mathrm{ker}(\vectfun M)$.
\end{theorem}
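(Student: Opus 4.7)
The plan is to prove the two directions separately. The ``if'' direction is a direct verification; the ``only if'' direction relies on a Noetherian argument together with a careful use of completeness.

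For the ``if'' direction, suppose $\bb = \mathrm{ker}(\vectfun M)$ for some $M \in \mat\pk$. Linearity is immediate since $\vectfun M$ is $\k$-linear. Time-invariance holds because the shift $w \mapsto (t \mapsto w(t-1))$ is precisely the action of $s$, which commutes with every entry of $M$. Completeness is the substantive point: each row of $M$ is a Laurent polynomial, so the constraint $(\vectfun M \cdot w)(t) = 0$ at a fixed time $t$ depends only on the values of $w$ on a finite window around $t$; hence if $w$ fails to lie in $\mathrm{ker}(\vectfun M)$, the failure is witnessed on some finite window, and the contrapositive gives completeness.

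For the ``only if'' direction, given an LTI behaviour $\bb$, I would associate its annihilator
\[
R \Defeq \{\, r \in \pk^n \mid r \cdot w = 0 \text{ for all } w \in \bb \,\},
\]
where $r \cdot w = \sum_{i} r_i \cdot w_i$ uses the pointwise polynomial action on each coordinate. Linearity and time-invariance of $\bb$ ensure that $R$ is a $\pk$-submodule of $\pk^n$. The ring $\pk$ is a localisation of the principal ideal domain $\k[s]$, so it is itself a Noetherian PID, and hence $R$ is finitely generated. Choosing generators $r_1,\dots,r_k$ and assembling them as rows of a matrix $M \in \mat\pk$, we obtain $\bb \subseteq \mathrm{ker}(\vectfun M)$ by construction.

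The main obstacle is the reverse inclusion $\mathrm{ker}(\vectfun M) \subseteq \bb$, which is exactly where completeness is needed. The strategy is to prove the following window characterisation: for every finite interval $[t_0,t_1]$, the restriction $\bb|_{[t_0,t_1]}$ coincides with the set of finite sequences annihilated by those polynomial rows in $R$ whose support fits inside $[t_0,t_1]$. The delicate point is that any $\k$-linear constraint holding on all of $\bb|_{[t_0,t_1]}$ must lift to an element of $R$: starting from such a windowed constraint, time-invariance says its shifts hold on the corresponding shifted windows of $\bb$, and packaging these shifts as coefficients of a Laurent polynomial row exhibits it as an annihilator of the whole of $\bb$. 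Granting the window characterisation, any $w \in \mathrm{ker}(\vectfun M)$ has $w|_{[t_0,t_1]} \in \bb|_{[t_0,t_1]}$ for every window, and completeness then forces $w \in \bb$, completing the argument.
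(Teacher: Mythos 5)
Your proof is correct, but note that the paper does not prove this statement at all: it is imported as a black box from Willems' 1986 Automatica paper (\cite[Th.\ 5]{Wi3}), so there is no internal proof to compare against. What you have written is essentially a faithful reconstruction of the classical behavioural-theory argument. The ``if'' direction is fine: each row of $M$ imposes a constraint at time $t$ that reads only a finite window of $w$, so a violation of $\vectfun M w = \mathbf{0}$ is witnessed on a finite restriction, giving completeness by contraposition. In the ``only if'' direction, your three ingredients are exactly the right ones: the annihilator $R \subseteq \pk^n$ is a submodule (this already follows from $(pr)\cdot w = p\cdot(r\cdot w)$, without needing time-invariance of $\bb$ --- time-invariance is what you use later, in the lifting step), it is finitely generated because $\pk$ is a Noetherian PID, and the reverse inclusion $\ker(\vectfun M) \subseteq \bb$ follows from the window characterisation plus completeness. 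The one step worth making fully explicit is the $\supseteq$ half of your window characterisation: $\bb|_{[t_0,t_1]}$ is a subspace of the \emph{finite-dimensional} space $(\k^n)^{[t_0,t_1]}$, hence equals the common kernel of the linear functionals annihilating it, and your shift-and-package argument shows each such functional comes from an element of $R$ supported in the window; since $R$ is generated over $\pk$ by the rows of $M$, any $w \in \ker(\vectfun M)$ is killed by all of $R$, hence passes every window test. With that spelled out the argument is complete.
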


%Completeness is an important property. An example of a non-complete linear
%time-invariant subspace of $\k^\z$ is the set of all finitely supported
%functions $\z \to \k$.  This is not the kernel of the action of any matrix.

%\smallskip
The prop $\mat \pk$ is equivalent to the category $\fmod\pk$ of
finite-dimensional free $\pk$-modules. Since $\fmod R$ over a principal ideal
domain (PID) $R$ has finite colimits \cite{BSZ2}, and $\pk$ is a PID, $\mat \pk$
has finite colimits, and pushouts in particular.

We can thus define the prop $\cospan\mat\pk$ where arrows are equivalence
classes of cospans of matrices. Up to isomorphism, this means we may consider
arrows $m \to n$ in this prop to comprise a natural number
$d$ together with a $d\times m$ matrix $A$ and a $d\times n$ matrix $B$; we
write this $m\xrightarrow{A} d \xleftarrow{B}n$. Composition is given by
pushout; see Appendix~\ref{app.cospans} for further details.

%Two cospans $m\xrightarrow{A} p
%\xleftarrow{B}n$, $m\xrightarrow{A'} q \xleftarrow{B'}n$ are equivalent when
%$p=g$ and there exists an invertible $p\times p$ matrix $U$ such that 
%\[
%\xymatrix@R=10pt{
%& p \ar[dd]^U \\
%m \ar[dr]_{A'} \ar[ur]^{A} & &  \ar[ul]_B n \ar[dl]^{B'} \\
%& p 
%}
%\]
%commutes: i.e. $A'=UA$ and $B'=UB$. 
%
%Composition in the category $\cospan\mat\pk$ is given by pushout of
%representatives of equivalence classes: if $m\xrightarrow{A} p
%\xleftarrow{B}n$ and $n\xrightarrow{C}q \xleftarrow{D}l$ are morphisms in
%$\cospan\mat\pk$, then  
We can then extend $\vectfun$ to the functor
\[
  \cospanfun \maps \cospan\mat\pk \longrightarrow \linrel_\k
\]
where on objects $\cospanfun(n)=\vectfun(n)=(\k^n)^\z$, and on arrows
%defined by mapping each cospan to a linear relation:
\[
  m\xrightarrow{A} d \xleftarrow{B}n
\]
maps to
\begin{equation}\label{eq.K}
  \big\{(\mathbf{x},\mathbf{y}) \,\big|\,
  (\vectfun A)\mathbf{x} = (\vectfun B)\mathbf{y}\big\} 
  \subseteq (\k^m)^\z \oplus (\k^n)^\z.
\end{equation}
It is straightforward to prove that this is well-defined.
%\begin{remark}\rm
%From a system theory point of view, the definition~\eqref{eq.K}, can be interpreted as associating with $R_1(\sigma,\sigma^{-1}), R_2(\sigma,\sigma^{-1})$, where 
%$R_i\in\R^{p\times n_i}[s,s^{-1}]$, the set of trajectories  
%\begin{multline}\label{eq:mycospan}
%\mathcal{Z}_{R_1,R_2}:= \\
%\{ w:\mathbb{Z}\rightarrow \k^{n_1+n_2} \mid 
%w=(w_1,w_2),\,
%R_1(\sigma,\sigma^{-1})w_1 = R_2(\sigma,\sigma^{-1})w_2 \}
%\end{multline}
%where $w=(w_1,w_2)$ is a terminal partition.
%\end{remark}
\begin{proposition}\label{prop.funct}
$\cospanfun$ is a functor.
\end{proposition}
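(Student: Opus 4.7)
The plan is to check the three functoriality requirements: well-definedness on cospan equivalence classes, preservation of identities, and preservation of composition. For well-definedness, if $\phi\maps d \to d'$ is an iso with $\phi A = A'$ and $\phi B = B'$, then $\vectfun(\phi)$ is an iso in $\vect_\k$, so the relations specified by $(A, B)$ and $(A', B')$ coincide. The identity cospan $n \xrightarrow{\idn} n \xleftarrow{\idn} n$ is sent to the diagonal on $(\k^n)^\z$, the identity in $\linrel_\k$. The substantive task is composition.

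For composition, let $(q, \alpha\maps d \to q, \beta\maps e \to q)$ denote the pushout of $B$ and $C$ in $\mat\pk$, so that the composite of $m \xrightarrow{A} d \xleftarrow{B} n$ and $n \xrightarrow{C} e \xleftarrow{D} p$ is $m \xrightarrow{\alpha A} q \xleftarrow{\beta D} p$. The goal is to show that $\cospanfun$ of this composite coincides with $\cospanfun(A, B) \mathrel{;} \cospanfun(C, D)$. The $(\supseteq)$ inclusion is immediate from functoriality of $\vectfun$: given $\mathbf{y}$ with $\vectfun(A)\mathbf{x} = \vectfun(B)\mathbf{y}$ and $\vectfun(C)\mathbf{y} = \vectfun(D)\mathbf{z}$, applying $\vectfun(\alpha)$ and $\vectfun(\beta)$ respectively and using $\alpha B = \beta C$ yields $\vectfun(\alpha A)\mathbf{x} = \vectfun(\beta D)\mathbf{z}$.

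The reverse inclusion is the main obstacle, and amounts to showing that $\vectfun$ sends the pushout of $B$ and $C$ in $\mat\pk$ to the pushout of $\vectfun(B)$ and $\vectfun(C)$ in $\vect_\k$. The key fact is that $\vectfun$ is naturally isomorphic to $- \otimes_\pk \k^\z$, where $\k^\z$ is regarded as a $\pk$-module via the shift action; this tensor functor is right exact. In the ambient category of $\pk$-modules, the pushout is given by $P = \mathrm{coker}\left(\left[\begin{smallmatrix}B \\ -C\end{smallmatrix}\right]\right)$, and the induced right exact sequence $n \to d \oplus e \to P \to 0$ remains right exact after applying $\vectfun$. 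Over the PID $\pk$ the pushout $q$ in $\fmod\pk$ is the free part of $P$, which in general differs from $P$ by a torsion summand, but every nonzero $p\in \pk$ acts surjectively on $\k^\z$ (one solves the linear recurrence $p(s)\mathbf{v} = \mathbf{w}$ forwards and backwards in time), so $\vectfun$ annihilates torsion and $\vectfun(q) = \vectfun(P)$. Hence if $\vectfun(\alpha A)\mathbf{x} = \vectfun(\beta D)\mathbf{z}$, the pair $(\vectfun(A)\mathbf{x}, -\vectfun(D)\mathbf{z})$ is annihilated by $[\vectfun(\alpha), \vectfun(\beta)]$, and by exactness lies in the image of $\left[\begin{smallmatrix}\vectfun(B) \\ -\vectfun(C)\end{smallmatrix}\right]$, producing the required witness $\mathbf{y} \in (\k^n)^\z$.
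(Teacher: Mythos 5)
Your proof is correct, and its skeleton matches the paper's: identities are immediate, and the easy inclusion for composition (from a witness $\mathbf{y}$ to the composite cospan's relation) is handled exactly as in the paper, by applying the pushout legs and using $\alpha B=\beta C$. Where you genuinely diverge is the hard inclusion (existence of the witness), which the paper simply delegates to properties of pushouts in $\mat\pk$ via a citation of \cite[Prop.~5.7]{BSZ2}, whereas you prove it outright: you identify $\vectfun$ with $-\otimes_\pk\k^\z$, use right-exactness of the tensor to compute the kernel of the induced map out of $\vectfun(d\oplus e)$, and observe that the discrepancy between the module-theoretic pushout $\mathrm{coker}\bigl[\begin{smallmatrix}B\\-C\end{smallmatrix}\bigr]$ and the pushout in $\fmod\pk$ (its torsion-free quotient) is invisible to $\vectfun$ because every nonzero $p\in\pk$ acts surjectively on $\k^\z$ (solve the recurrence forwards via the leading coefficient and backwards via the trailing one). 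This last point is the genuinely substantive one, and it is pleasing that your argument makes explicit exactly where the biinfinite-stream semantics enters: divisibility of $\k^\z$ as a $\pk$-module is what makes $\vectfun$ kill torsion, and it is the same phenomenon that powers Proposition~\ref{prop.magic} and the faithfulness half of Theorem~\ref{thm.main}. What the paper's route buys is brevity and reuse of published machinery; what yours buys is a self-contained proof that localises the analytic input of the semantics to a single, checkable surjectivity claim.
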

\begin{proof}
Identities are clearly preserved; it suffices to show that composition is too.
Consider the diagram below, where the pushout is in $\mat\pk$.
\[
\xymatrix{
  {}\ar[dr]_{A_1} & & \ar[dl]^{B_1}  
  \ar[dr]_{A_2} & & \ar[dl]^{B_2} {} 
 \\
& \ar[dr]_{C} & & \ar[dl]^{D}  \\
& & {\save*!<0cm,-.5cm>[dl]@^{|-}\restore}
}
\]
To show that $\cospanfun$ preserves composition we must verify that 
\begin{multline*}
\{(\mathbf{x},\mathbf{y})\,|\, \theta CA_1\mathbf{x} = \theta DB_2\mathbf{y}\}
= \\
\{(\mathbf{x},\mathbf{z})\,|\, \theta A_1\mathbf{x} = \theta B_1\mathbf{z}\} ; 
\{(\mathbf{z},\mathbf{y})\,|\, \theta A_2\mathbf{z} = \theta B_2\mathbf{y}\}.
\end{multline*}
The inclusion $\subseteq$ follows from properties of pushouts in $\mat\pk$ (see
\cite[Prop. 5.7]{BSZ2}).  To see $\supseteq$, we need to show that if there
exists $\mathbf{z}$ such that $\theta A_1\mathbf{x} = \theta B_1\mathbf{z}$ and
$\theta A_2\mathbf{z} = \theta B_2\mathbf{y}$, then $\theta CA_1\mathbf{x} =
\theta DB_2\mathbf{y}$.  But $\theta CA_1\mathbf{x}=\theta CB_1\mathbf{z}=\theta
DA_2\mathbf{z}=\theta DB_2\mathbf{y}$.
\end{proof}

%Paolo's original remark
%\begin{remark}\rm
%\textcolor{red}{The result of Prop. \ref{prop:Kfunct} can be interpreted from a system theory point of view as identifying with a terminal decomposition $w=(w_1,w_2)$ of the variables of a LTI behavior $\bb$ and associated kernel representation
%\begin{equation}\label{eq:ker}
%R_1(\sigma,\sigma^{-1})w_1=R_2(\sigma,\sigma^{-1})w_2\; ,
%\end{equation} 
%where $R_i\in\R^{p\times n_i}[s,s^{-1}]$, the set of trajectories 
%\begin{equation}\label{eq:mycospan}
%\mathcal{Z}_{R_1,R_2}:=\{z:\mathbb{Z}\rightarrow \R^p \mid \exists w_i:\mathbb{Z}\rightarrow \R^{n_i}\mbox{~s.t.~} z=R_i(\sigma,\sigma^{-1})w_i,~i=1,2\}\; .
%\end{equation}}
%\end{remark}

Rephrasing the definition of $\cospanfun$ on morphisms~\eqref{eq.K}, the
behaviour consists of those $(\mathbf{x},\mathbf{y})$ that satisfy
\[
  \vectfun\left[\begin{array}{cc} A & -B\end{array}\right]
  \left[\begin{array}{c}\mathbf{x}\\\mathbf{y}\end{array}\right] 
  = \mathbf{0},
\]
so one may say---ignoring for a moment the terminal domain/codomain assignment---that 
\[ 
  \cospanfun (\xrightarrow{A}\xleftarrow{B}) = \ker \vectfun\left[
  \begin{array}{cc} A & -B\end{array}\right].
\]

With this observation, as a consequence of Theorem~\ref{thm.kernelreps},
$\cospanfun$ has as its image (essentially) the prop $\ltids$. This proves
Proposition \ref{prop.ltidsiswelldefined}. We denote the corestriction of
$\cospanfun$ by $\cospanfunrest$. We thus have a full functor:
\[
  \cospanfunrest \maps \cospan\mat\pk \longrightarrow \ltids.
\]

\begin{remark}\label{rmk:faithfulness}
It is important for the sequel to note that $\cospanfunrest$ is \emph{not} faithful.
For instance, $\cospanfunrest(\xrightarrow{[1]}\xleftarrow{[1]}) =
\cospanfunrest(\xrightarrow{ \left[\begin{smallmatrix} 1\\ 0\end{smallmatrix}\right]
} \xleftarrow{ \left[\begin{smallmatrix} 1\\ 0\end{smallmatrix}\right]} )$, yet
the cospans are not equivalent.
\end{remark}

%We end this section with a technical remark that encapsulates the difference between considering Laurent streams and biinfinite streams; in \cite{??,??} the soundness and completeness of the diagrammatic language relies on the existence of a faithful functor $\linrel_{\k(x)}\to \linrel_{\k((x))}$, where $k(x)$ is the field of polynomial fractions and $k((x))$ is the field of Laurent streams. 
%We could try to define a functor $S: \Span\mat\pk \to \vect_\k$.

\section{Presentation of $\ltids$} \label{sec.diagrams}
In this section we give a presentation of $\ltids$ as an SMT. This means both
that \emph{(i)} we obtain a syntax---conveniently expressed using string
diagrams---for specifying every LTI behaviour, and \emph{(ii)} a sound and
fully complete equational theory for reasoning about them.

\subsection{Syntax}
We start by describing the graphical syntax of dynamical systems, the arrows of
the category $\syntax = \mathbf{S}_{(\Sigma,\varnothing)}$, where $\Sigma$ is
the set of generators:
\begin{multline}\label{eq:generators}
\{
\addgen,
\zerogen,
\copygen,
\discardgen,
\delaygen, \\
\addopgen,
\zeroopgen,
\copyopgen,
\discardopgen,
\delayopgen
\} \\
\cup \{ \scalargen \mid a\in\k \,\} \cup \{ \scalaropgen \mid a\in\k \,\}
\end{multline}
For each generator, we give its \define{denotational semantics}, an LTI
behaviour, thereby defining a prop morphism $\llbracket - \rrbracket:
\syntax \to\ltids$.
\[
\addgen \!\!\mapsto \{\,( \left( 
  {\begin{smallmatrix}\tau \\ \upsilon\end{smallmatrix}} \right),\, \tau+\upsilon) \mid \tau,\upsilon \in \k^\z \,\}\maps 2\to 1
\]
\[
\quad
\zerogen \!\!\mapsto
\{\,((),0)\,\} \subseteq \k^\z\maps 0\to 1
\]
\[
\copygen \!\!\mapsto 
\{\, (\tau, \left( \begin{smallmatrix} \tau \\ \tau\end{smallmatrix} \right)) \mid \tau\in \k^\z \,\}\maps1\to 2
\]
\[
\quad
\discardgen \!\!\mapsto
\{\,(\tau,()) \mid \tau \in \k^\z\,\} \maps 1\to 0
\]
\[
\scalargen  \!\!\mapsto
\{\, (\tau, a\cdot\tau) \mid \tau\in\k^\z \, \} \maps 1 \to 1 \quad (a\in\k)
\]
\[
\ 
\delaygen \!\!\mapsto
\{\, (\tau, s\cdot\tau) \mid \tau\in\k^\z\,\} \maps 1 \to 1
\]

% In Fig.~\ref{fig:generators} we introduce half of the
%generators and their denotations.
%\begin{figure*}[ht]
%\[
%\addgen \!\!\mapsto \{\,( \left( 
%  {\begin{smallmatrix}\tau \\ \upsilon\end{smallmatrix}} \right),\, \tau+\upsilon) \mid \tau,\upsilon \in \k^\z \,\}\maps 2\to 1
%\]
%\[
%\quad
%\zerogen \!\!\mapsto
%\{\,((),0)\,\} \subseteq \k^\z\maps 0\to 1
%\]
%\[
%\copygen \!\!\mapsto 
%\{\, (\tau, \left( \begin{smallmatrix} \tau \\ \tau\end{smallmatrix} \right)) \mid \tau\in \k^\z \,\}\maps1\to 2
%\]
%\[
%\quad
%\discardgen \!\!\mapsto
%\{\,(\tau,()) \mid \tau \in \k^\z\,\} \maps 1\to 0
%\]
%\[
%\scalargen  \!\!\mapsto
%\{\, (\tau, a\cdot\tau) \mid \tau\in\k^\z \, \} \maps 1 \to 1 \quad (a\in\k)
%\]
%\[
%\ 
%\delaygen \!\!\mapsto
%\{\, (\tau, s\cdot\tau) \mid \tau\in\k^\z\,\} \maps 1 \to 1
%\]
%\caption{Generators and their denotations in $\ltids$. \label{fig:generators}}
%\end{figure*}
\noindent 
The denotations of the mirror image generators are the opposite relations.
Parenthetically, we note that a finite set of generators is possible if working
over a finite field, or the field $\mathbb{Q}$ of rationals, \emph{cf.}\
\S\ref{subsec:eqhopf}.

The following result guarantees that the syntax is fit for purpose: every
behaviour in $\ltids$ has a syntactic representation in $\syntax$.
\begin{proposition}\label{prop.syntaxfull}
  $\llbracket - \rrbracket: \syntax\to\ltids$ is full.
\end{proposition}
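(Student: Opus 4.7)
The plan is to factor through cospans. As noted immediately after Theorem~\ref{thm.kernelreps}, the corestricted functor $\cospanfunrest \maps \cospan\mat\pk \to \ltids$ is full, so it suffices to exhibit, for each cospan $m \xrightarrow{A} d \xleftarrow{B} n$ in $\mat\pk$, a term $t \in \syntax(m,n)$ with $\llbracket t \rrbracket$ equal to the behaviour $\{(\mathbf{x}, \mathbf{y}) \mid (\vectfun A)\mathbf{x} = (\vectfun B)\mathbf{y}\}$ attached to that cospan.

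The first main step is to realise matrices over $\pk$ inside $\syntax$. Every Laurent polynomial $p \in \pk$ can be assembled from $\scalargen$, $\delaygen$, $\delayopgen$ and the commutative monoid generators $\addgen, \zerogen$, with $\copygen$ duplicating wires as needed; this produces a term $\underline p \maps 1 \to 1$ whose denotation is the graph of the linear map $\vectfun p$. Stacking these gadgets into the standard ``copy, act, sum'' pattern then yields, for every matrix $A \in \mat\pk(m,d)$, a term $\underline A \maps m \to d$ in $\syntax$ with $\llbracket \underline A\rrbracket = \{(\mathbf{x}, (\vectfun A)\mathbf{x}) \mid \mathbf{x}\}$: the familiar encoding of a matrix as a bialgebra diagram, lifted to polynomial coefficients via the delay gates.

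The second step is to bend the incoming leg of the cospan into an outgoing one. Define a mirror operation $(-)^\dagger \maps \syntax(m,n) \to \syntax(n,m)$ by structural recursion: it fixes $\idn$ and $\tw$, commutes with $\oplus$, reverses the order of $\mathrel{;}$, and swaps each generator with its opposite ($\addgen \leftrightarrow \addopgen$, $\copygen \leftrightarrow \copyopgen$, $\zerogen \leftrightarrow \zeroopgen$, $\discardgen \leftrightarrow \discardopgen$, $\delaygen \leftrightarrow \delayopgen$, $\scalargen \leftrightarrow \scalaropgen$). Since the SMC axioms are self-dual this is well-defined on $\syntax$, and inspecting the generator denotations shows that $\llbracket t^\dagger \rrbracket$ is the opposite relation of $\llbracket t\rrbracket$ for every term $t$. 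In particular $\llbracket \underline B^\dagger\rrbracket = \{((\vectfun B)\mathbf{y}, \mathbf{y}) \mid \mathbf{y}\}$, so relational composition gives
\[
  \llbracket \underline A \mathrel{;} \underline B^\dagger\rrbracket
  = \{(\mathbf{x},\mathbf{y}) \mid (\vectfun A)\mathbf{x} = (\vectfun B)\mathbf{y}\} = \cospanfunrest(m \xrightarrow{A} d \xleftarrow{B} n),
\]
which completes the reduction.

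The main obstacle will be the first step. Because $\syntax$ carries no equational theory, one cannot appeal to any Hopf-algebraic manipulation to rearrange wires: the construction of $\underline A$ must fix an explicit sum-of-monomials template for each polynomial entry and verify directly, from the definitions of $\vectfun$ and of the generator denotations, that the resulting diagram semantically computes matrix multiplication with Laurent-polynomial coefficients on biinfinite streams. The verification is routine, but it is the only place where genuinely concrete stream-level calculation is needed; everything else is categorical bookkeeping on top of this fact and the fullness of $\cospanfunrest$.
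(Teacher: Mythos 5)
Your proof is correct and follows essentially the same route as the paper: both factor $\llbracket-\rrbracket$ through $\cospan\mat\pk$ and reduce fullness to the fullness of $\cospanfunrest$ (via the kernel representation theorem) together with the representability of every cospan of matrices by a term of $\syntax$. The only difference is presentational: where you construct the preimage term $\underline{A}\mathrel{;}\underline{B}^{\dagger}$ explicitly, the paper delegates this step to Zanasi's presentation theorem for $\cospan\mat\pk$, which in particular yields a full morphism $\syntax\to\cospan\mat\pk$.
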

\begin{proof}
  The fact that $\llbracket - \rrbracket$ is a prop morphism is immediate since
  $\mathbb{S}$ is free on the SMT $(\Sigma,\varnothing)$ and there are no
  equations.  Fullness is a consequence of the fact that $\llbracket - \rrbracket$
  factors as the composite of two full functors as follows:
  \[
    \xymatrixrowsep{2pc}
    \xymatrix{
      \syntax \ar[d] \ar[dr]^{\llbracket-\rrbracket} \\
      \cospan\mat\pk \ar[r]_-{\cospanfunrest} & \ltids
    }
  \]
  The functor $\cospanfunrest$ is full by definition. The existence and fullness
  of the functor $\mathbb{S}\to\cospan\pk$ follows from \cite[Th. 3.41]{Za}. We
  give details in the next two subsections.
\end{proof}

Having defined the syntactic prop $\syntax$ to represent arrows in $\ltids$, our
task for this section is to identify an equational theory that
\emph{characterises} $\ltids$: i.e. one that is sound and fully complete. The
first step is to use the existence of $\cospanfun$: with the results
of~\cite{BSZ2,Za} we can obtain a presentation for $\cospan\mat\pk$. This is
explained in the next two subsections: in \S\ref{subsec:eqhopf} we present
$\mat\pk$ and in \S\ref{subsec:eqcospan}, the equations for cospans of matrices.

%%%%%%%%%%%%%%%%%%%%%%%%%%%%%%%%%%%%%%%%%%%%%%%%%%%%%%%%%%%

\subsection{Presentation of $\mat\pk$}\label{subsec:eqhopf}

%%%%%%%%%%%%%%%%%%%%%%%%%%%%%%%%%%%%%%%%%%%%%%%%%%%%%%%%%%%

To obtain a presentation of $\mat\pk$ as an SMT we only require 
some of the generators:
\begin{multline*}
\{
\addgen,
\zerogen,
\copygen,
\discardgen,
\delaygen,
\delayopgen\} \\
\cup \{ \scalargen \mid a\in\k \,\}
\end{multline*}
and the
% of
%Fig.\ref{fig:generators}, $\lower6pt\hbox{$\includegraphics[height=0.6cm]{pics/delayop.pdf}$}$
%\[
%\{
%\lower8pt\hbox{$\includegraphics[height=0.7cm]{pics/add.pdf}$},
%\lower5pt\hbox{$\includegraphics[height=0.5cm]{pics/zero.pdf}$},
%\lower8pt\hbox{$\includegraphics[height=0.7cm]{pics/copy.pdf}$},
%\lower5pt\hbox{$\includegraphics[height=0.5cm]{pics/discard.pdf}$},
%\lower6pt\hbox{$\includegraphics[height=0.6cm]{pics/delay.pdf}$},
%\lower6pt\hbox{$\includegraphics[height=0.6cm]{pics/delayop.pdf}$}
% \}
% \cup
% \{
% \lower6pt\hbox{$\includegraphics[height=0.6cm]{pics/scalar.pdf}$}
%\mid a \in \k\,
%\}
%\]
%and the 
following equations. %, see \cite{BSZ2,Za}. 
First, the white and the black structure forms a (bicommutative) bimonoid:
\[
\includegraphics[width=.4\textwidth]{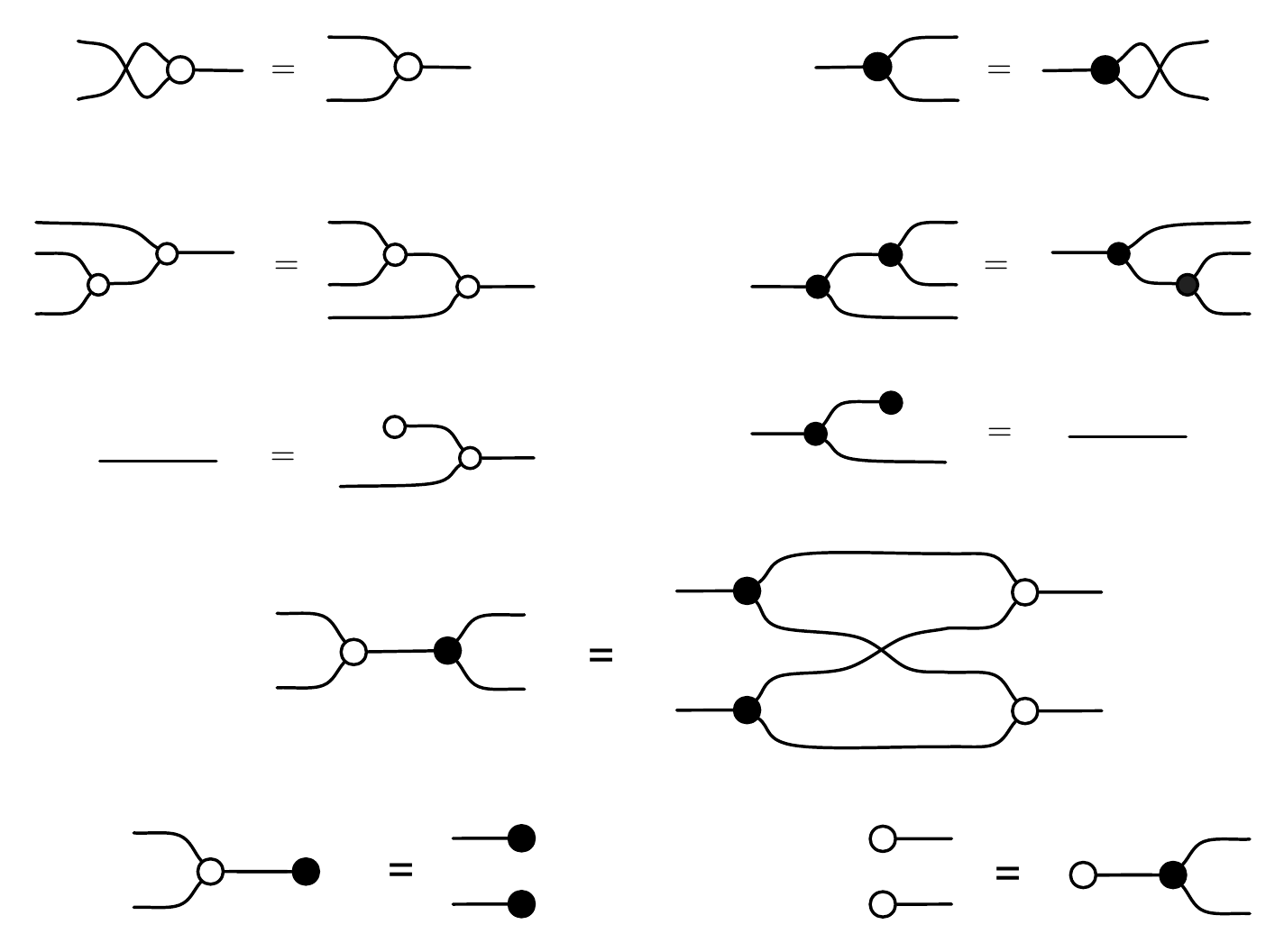}
\]
Next, the formal indeterminate $s$ is compatible with the bimonoid structure
and has its mirror image as a formal inverse.
\[
\includegraphics[height=3.1cm]{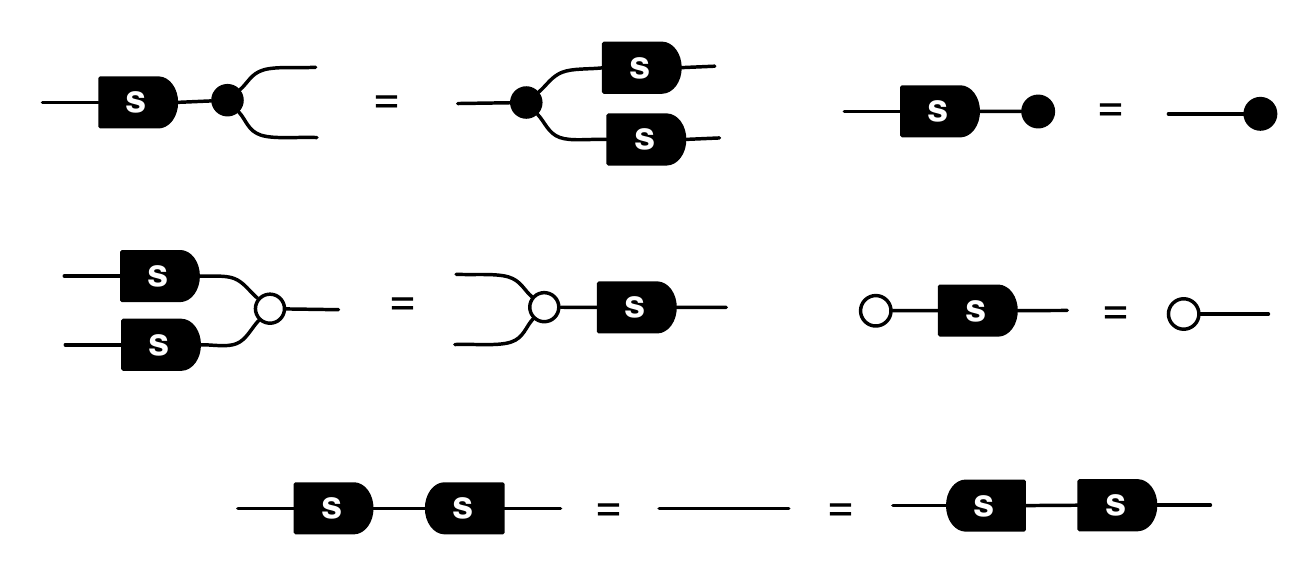}
\]
Finally, we insist that the algebra of $\k$ be compatible with the bimonoid structure and commute with $s$.
The equations below, in which $a,b\in\k$, are redundant if $\k=\mathbb{Q}$; instead,
one needs an additional generator, the antipode, identified with the scalar
$-1$. The details are not important; we merely mention that henceforward  
\lower6pt\hbox{$\includegraphics[height=0.6cm]{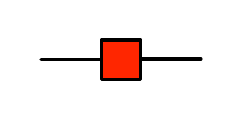}$}
is synonymous with
\lower6pt\hbox{$\includegraphics[height=0.6cm]{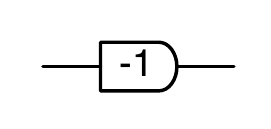}$}. 
\[
\includegraphics[height=5.3cm]{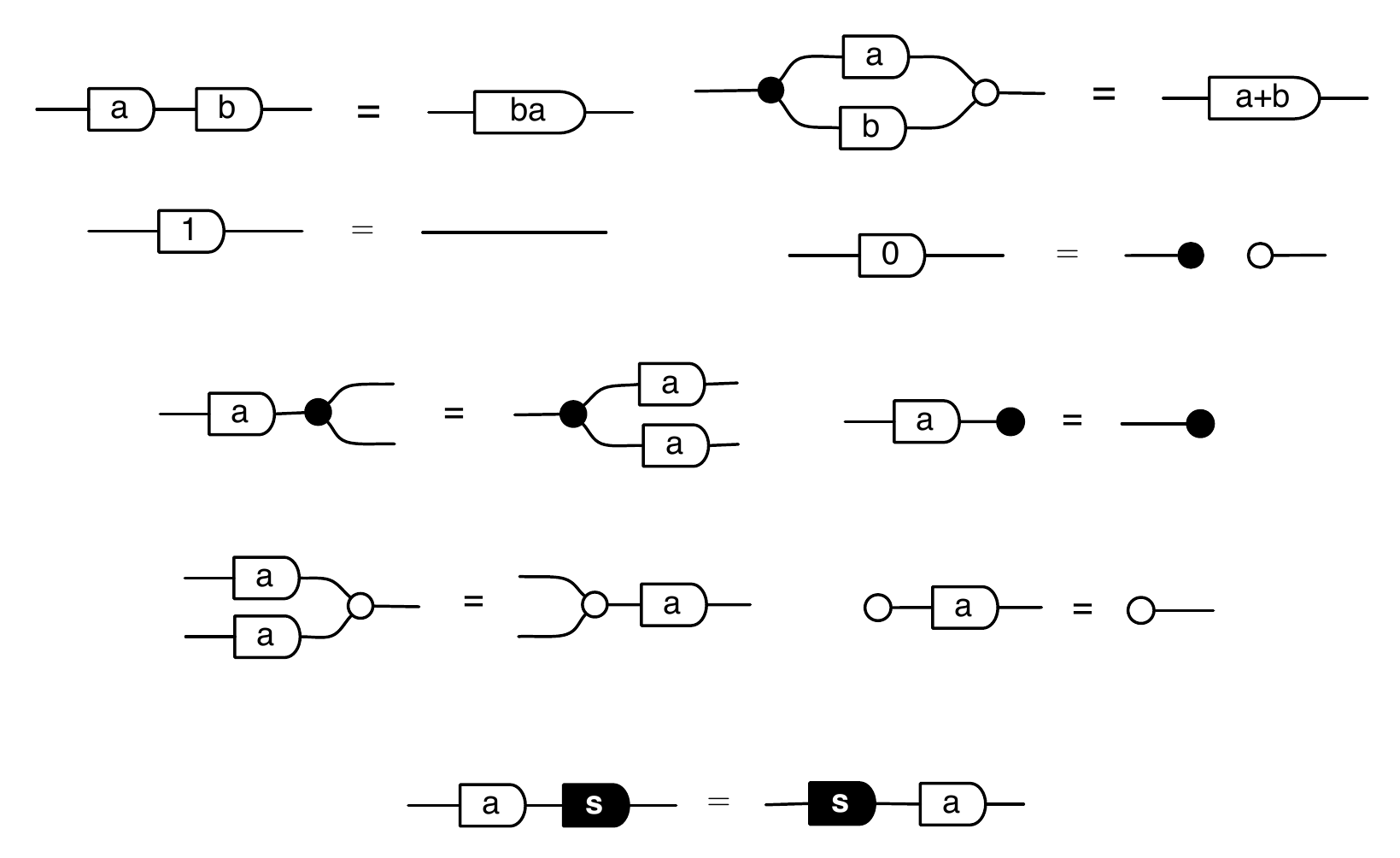}
\]
Let $\ha_\pk$ be the prop induced by the SMT consisting of the equations
above. The following follows from~\cite[Prop.~3.9]{Za}.
\begin{proposition}
$\mat\pk\cong\ha_\pk$.
\end{proposition}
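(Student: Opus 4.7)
The plan is to define a prop morphism $\Phi \maps \ha_\pk \to \mat\pk$ by specifying its action on generators: the adder $\addgen$ goes to $[1\ 1]$, the copier $\copygen$ to its transpose, the unit $\zerogen$ and counit $\discardgen$ to the empty $1\times 0$ and $0 \times 1$ matrices, the delays $\delaygen$ and $\delayopgen$ to $[s]$ and $[s^{-1}]$, and each scalar $\scalargen$ to the $1\times 1$ matrix $[a]$. This extends uniquely to a prop morphism from the \emph{free} prop on these generators, and soundness amounts to checking that every SMT equation---bimonoid axioms, $s\cdot s^{-1}=1$, commutation of $s$ and of scalars with addition/copying, and the $\k$-algebra axioms---becomes a matrix identity in $\mat\pk$, which is straightforward linear algebra over $\pk$.

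For fullness, I would show that every matrix $M \in \mat\pk$ of type $m \to n$ admits a diagrammatic realisation in a canonical three-layer shape. The white comonoid structure duplicates each of the $m$ input wires $n$ times; in the middle, each of the resulting $m \cdot n$ strands carries a polynomial coefficient $M_{ij} \in \pk$ built from scalars, $s$ and $s^{-1}$ using the bimonoid (and antipode) structure; and the black monoid structure collects, for each $i$, the $m$ strands feeding output $i$ into a single sum. The $\Phi$-image of this diagram is exactly $M$.

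Faithfulness is the main obstacle and will occupy most of the proof. The strategy is a \emph{matrix normal form} result: every morphism of $\ha_\pk$ can be rewritten using the SMT axioms alone into precisely the three-layer copy-scale-add shape used in the fullness argument. The key moves are, first, to push all copies to the left and all adds to the right via the bimonoid laws, producing a biproduct layout; second, to slide delays and inverse delays through the bimonoid structure using the $s$-compatibility equations and collect them into a single polynomial per strand; and third, to merge consecutive scalars and delays on each strand into a single $\pk$-coefficient using the scalar compatibility laws. Once this normal form is established, two diagrams with the same $\Phi$-image agree strand-by-strand in normal form and are hence equal in $\ha_\pk$.

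The hardest part is the delay-pushing step, since $s$ must interact coherently with both the white and the black bimonoid structure and must truly be invertible via $s^{-1}$; this is where the antipode (synonymous with the $-1$ scalar) plays a decisive role. As noted in the text, the full argument is carried out in \cite[Prop.~3.9]{Za}; the plan above follows the same structure, replacing the appeal with an explicit normal-form derivation.
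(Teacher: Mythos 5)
Your outline is correct and is essentially the argument behind the result: the paper itself offers no proof, deferring entirely to \cite[Prop.~3.9]{Za}, and the soundness/fullness/faithfulness-via-normal-form strategy you describe (three-layer copy--scale--add shape, with bimonoid laws to separate the black and white structure and the compatibility laws to collect Laurent-polynomial weights on each strand) is precisely the one carried out in that reference and in \cite{BSZ2}. One small quibble with your closing remark: the antipode is not what makes the delay-pushing step work---commuting $s$ and $s^{-1}$ through the bimonoid structure uses only the stated compatibility equations---rather, the antipode is only needed to supply the scalar $-1$ when one economises on scalar generators, as the paper notes for the case $\k=\mathbb{Q}$.
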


Arrows of $\mat\pk$ are matrices with polynomial entries, but it may not
be immediately obvious to the reader to the reader how polynomials arise with
the string diagrammatic syntax. We illustrate this in the following remark.

\begin{remark}\label{rem:polys}
Any polynomial $p=\sum_{i=u}^v a_i s^i$, where $u\leq v\in\z$ and with
coefficients $a_i\in\k$, can be written graphically using the building blocks of
$\mathbb{HA}_\pk$. Rather than giving a tedious formal construction, we
illustrate this with an example for $\k=\R$. A term for $3s^{-3}-\pi
s^{-1}+s^{2}$ is:
\[
\includegraphics[height=1.7cm]{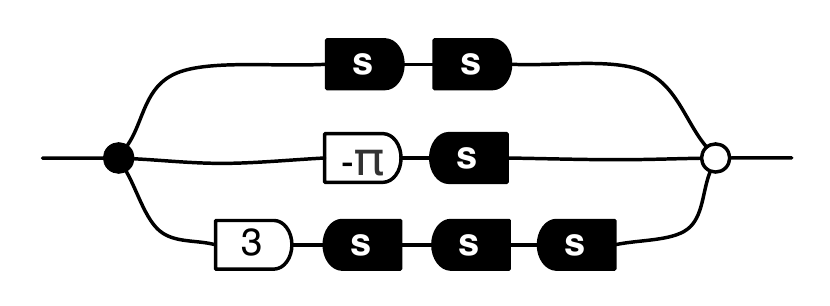}
\]

As an arrow $1 \to 1$ in $\mat\pr$, the above term represents a $1\times
1$-matrix over $\pr$. To demonstrate how higher-dimensional matrices can be
written, we also give a term for the $2 \times 2$-matrix $\begin{bmatrix} 2 & 3s
  \\ s^{-1} & s+1 \end{bmatrix}$:
\[
\includegraphics[height=3cm]{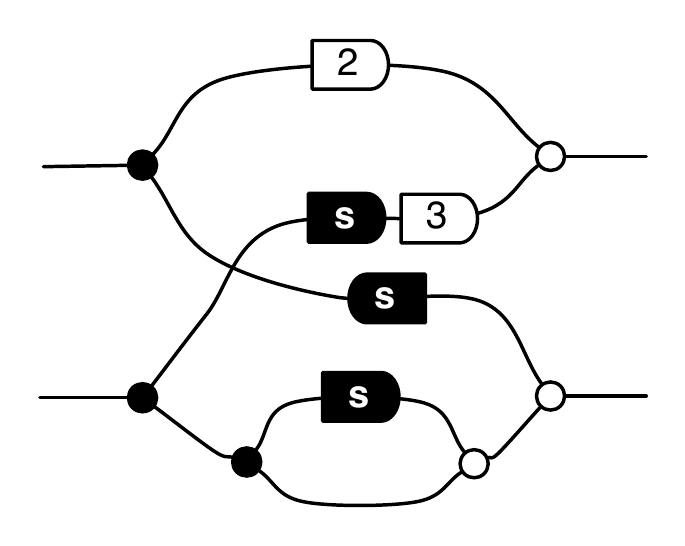}
\]
The above examples are intended to be suggestive of a normal form for terms in
$\ha_\pk$; for further details see \cite{Za}.
\end{remark}

%%%%%%%%%%%%%%%%%%%%%%%%%%%%%%%%%%%%%%%%%%%%%%%%%%%%%%%%%%%%%%%%%%%%

\subsection{Presentation of $\cospan\mat\pk$\label{subsec:eqcospan}}

%%%%%%%%%%%%%%%%%%%%%%%%%%%%%%%%%%%%%%%%%%%%%%%%%%%%%%%%%%%%%%%%%%%%

To obtain the equational theory of $\cospan\mat \pk$ we need the full set of
generators~\eqref{eq:generators},
%\begin{gather*}
%\{
%\lower8pt\hbox{$\includegraphics[height=0.7cm]{pics/add.pdf}$},
%\lower5pt\hbox{$\includegraphics[height=0.5cm]{pics/zero.pdf}$},
%\lower8pt\hbox{$\includegraphics[height=0.7cm]{pics/copy.pdf}$},
%\lower5pt\hbox{$\includegraphics[height=0.5cm]{pics/discard.pdf}$}
% \}
% \cup
% \{
% \lower6pt\hbox{$\includegraphics[height=0.6cm]{pics/scalar.pdf}$}
%\mid a \in \k\,
%\} \;\cup \\
%\{
%\lower8pt\hbox{$\includegraphics[height=0.7cm]{pics/addop.pdf}$},
%\lower5pt\hbox{$\includegraphics[height=0.5cm]{pics/zeroop.pdf}$},
%\lower8pt\hbox{$\includegraphics[height=0.7cm]{pics/copyop.pdf}$},
%\lower5pt\hbox{$\includegraphics[height=0.5cm]{pics/discardop.pdf}$}
% \}
% \cup
% \{
% \lower6pt\hbox{$\includegraphics[height=0.6cm]{pics/scalarop.pdf}$}
%\mid a \in \k\,
%\}  \;\cup \\
%\{
%\lower6pt\hbox{$\includegraphics[height=0.6cm]{pics/delay.pdf}$},
%\lower6pt\hbox{$\includegraphics[height=0.6cm]{pics/delayop.pdf}$}
%\}
%\end{gather*}
along with the equations of $\ha_\pk$, their mirror images, and the
following
\[
\includegraphics[height=4cm]{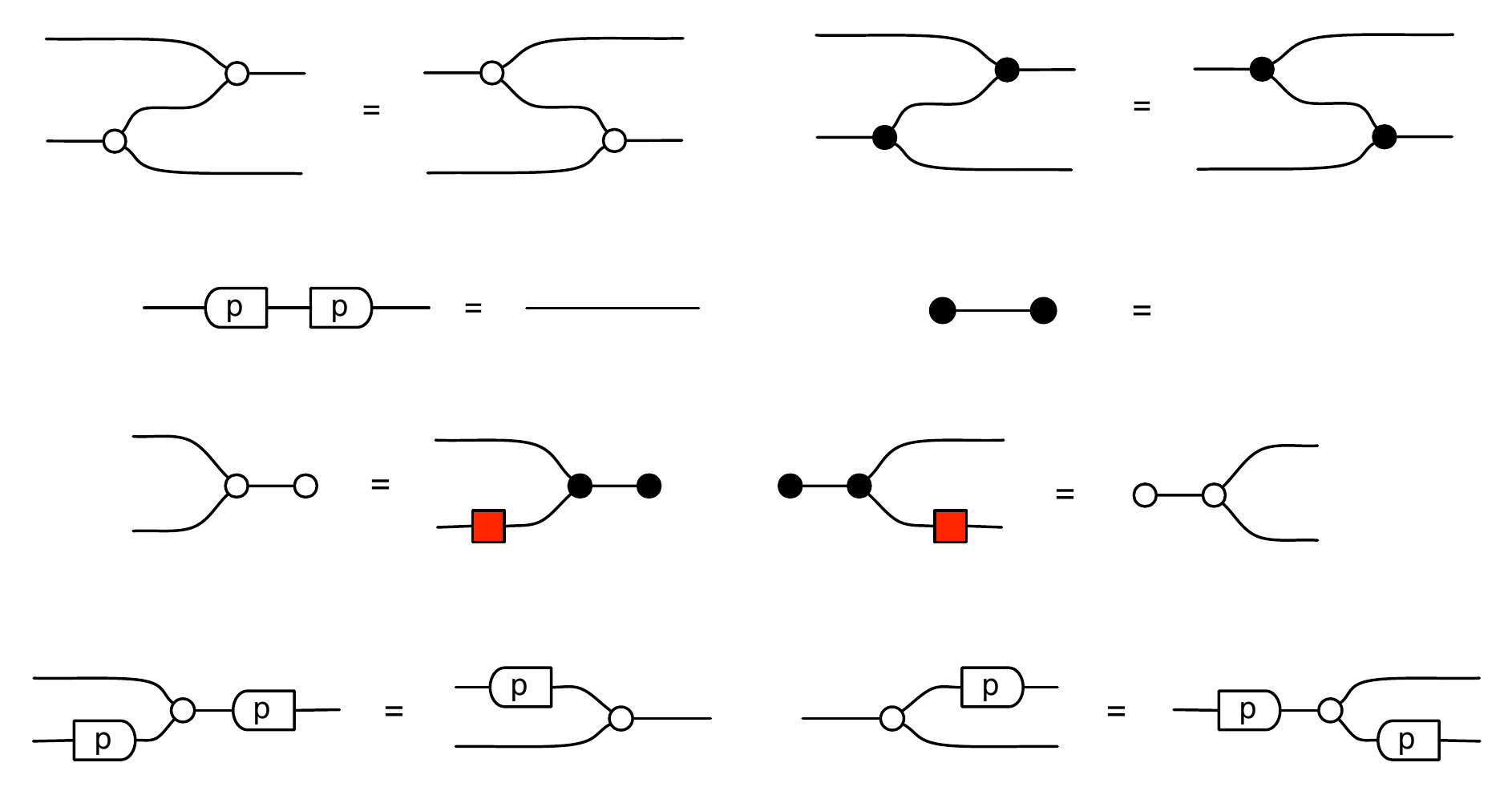}
\]
where $p$ ranges over the nonzero elements of $\pk$ (see
Remark~\ref{rem:polys}). The equations of $\ha_\pk$ ensure the generators of
$\ha_\pk$ behave as morphisms in $\mat\pk$, while their mirror images ensure the
remaining generators behave as morphisms in the opposite category
$\mat\pk^{\mathrm{op}}$. The addition equations above govern the interaction
between these two sets of generators, axiomatising pushouts in $\mat\pk$.

Let $\ihcsp$ denote the resulting SMT.
The procedure for obtaining the equations from a distributive law of props 
is explained in~\cite[\S{3.3}]{Za}.
\begin{proposition}[Zanasi~{\cite[Th.~3.41]{Za}}]\label{prop:cospanpresentation}
$\cospan \mat\pk \cong \ihcsp$.
\end{proposition}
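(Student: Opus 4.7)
The plan is to reduce the statement to the general theorem of Zanasi on composition of prop presentations via distributive laws. We already know from the previous proposition that $\mathbb{HA}_\pk$ presents $\mat\pk$; by formal mirror symmetry the reflected presentation presents $\mat\pk^{\op}$. A cospan $m \xrightarrow{A} d \xleftarrow{B} n$ factors as a morphism $A^{\op}\colon m \to d$ of $\mat\pk^{\op}$ followed by $B\colon d \to n$ in $\mat\pk$, so I expect $\cospan\mat\pk$ to arise as the composite prop $\mat\pk^{\op} \otimes \mat\pk$ modulo the pushout identifications. The cospan interaction equations of $\ihcsp$ should be exactly the equations induced by this distributive law.

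First I would define a candidate prop morphism $\phi\colon \ihcsp \to \cospan\mat\pk$ by sending each white generator to its usual interpretation as a cospan in $\mat\pk$, and each black mirror generator to the reverse cospan obtained from the corresponding generator in $\mat\pk^{\op}$. To show $\phi$ is well defined I would verify that each axiom of $\ihcsp$ holds in $\cospan\mat\pk$: the $\ha_\pk$ equations hold because the image lies in (the cospan image of) $\mat\pk$, the mirror equations hold dually, and the explicit cospan interaction equations, parameterised by nonzero $p\in\pk$, are direct translations of concrete pushout squares in $\mat\pk$. Fullness of $\phi$ follows from the existence of a normal form $\alpha;\beta$ with $\alpha$ built from mirror generators and $\beta$ from $\ha_\pk$ generators: any cospan is the image of such a factorisation.

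The main obstacle is faithfulness. Here I would invoke Zanasi's general framework: the pushout functor $\mat\pk^{\op} \otimes \mat\pk \Rightarrow \mat\pk \otimes \mat\pk^{\op}$ is a distributive law of props (existing because $\mat\pk$, being equivalent to $\fmod\pk$ over the PID $\pk$, has pushouts), and the composite prop it induces has a presentation equal to the disjoint union of the two given presentations together with the equations encoding $\lambda$ on generators. The technical work is twofold: \emph{(i)} exhibiting $\lambda$ explicitly on generating pairs so that the induced equations are precisely the cospan axioms displayed in the excerpt; and \emph{(ii)} proving the distributive law property---that $\lambda$ is a lax natural transformation compatible with the prop structure---which essentially asks that pushouts in $\mat\pk$ be compatible with matrix composition and direct sum, and is a routine check using the PID structure on $\pk$.

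With $\lambda$ in hand, Zanasi's theorem yields an isomorphism between the presented SMT and the composite prop, and it remains only to identify the composite prop with $\cospan\mat\pk$. The latter identification is standard: a cospan is, up to equivalence, a pair of arrows sharing apex, i.e.\ a pair $(A^{\op},B)\in \mat\pk^{\op}\otimes \mat\pk$, modulo the equivalence generated by change of apex, which is precisely the relation imposed by $\lambda$. Thus the whole argument sits on top of the pushout construction in $\mat\pk$, and aside from the bookkeeping of translating pushout squares into the cospan equations, the real content is the verification of the distributive law axioms.
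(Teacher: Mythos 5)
Your strategy is the one the paper itself relies on: the paper gives no proof of this proposition, but cites Zanasi's Theorem~3.41, and that theorem is proved exactly as you outline --- pushout in $\mat\pk$ (available because $\mat\pk\simeq\fmod\pk$ and $\pk$ is a PID) furnishes a distributive law turning spans into cospans, Lack's composition theorem identifies the resulting composite prop with $\cospan\mat\pk$, and the displayed interaction equations present the law. So the route is the intended one.

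Two points to fix. First, you have the factorisation the wrong way round: in the cospan $m\xrightarrow{A}d\xleftarrow{B}n$ the first leg $A\colon m\to d$ already lives in $\mat\pk$, and it is the \emph{second} leg that is a morphism $B^{\mathrm{op}}\colon d\to n$ of $\mat\pk^{\mathrm{op}}$. Consequently the normal form for cospans is ($\ha_\pk$-generators first, mirror generators second) --- what the paper later calls ``cospan form'' $\ha_\pk;\ha_\pk^{\mathrm{op}}$ --- whereas the order you wrote down, mirrors first, is the \emph{span} form. Your statement of the distributive law $\lambda\colon\mat\pk^{\mathrm{op}}\otimes\mat\pk\Rightarrow\mat\pk\otimes\mat\pk^{\mathrm{op}}$ as pushout is the correct one for cospans, so the two halves of your write-up are inconsistent with each other; orienting the factorisation correctly repairs this. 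Second, Lack's theorem a priori presents the composite prop by \emph{all} instances $(g;f,\lambda(g,f))$ of the law, not just those on generators; the substantive content of Zanasi's proof --- which you mention only in passing as item~(i) --- is showing that the finitely many displayed equation schemes (one family per nonzero $p\in\pk$) suffice to derive every instance. That is where the normal-form/rewriting argument actually does its work, beyond merely establishing fullness.
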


Using Prop.~\ref{prop:cospanpresentation} and the existence of $\cospanfunrest$,
%\[
%  \cospanfun \maps \cospan\mat\pk \longrightarrow \ltids
%\]
the equational theory of $\ihcsp$ is a sound proof system for reasoning about
$\ltids$. Due to the fact that $\cospanfunrest$ is not faithful (see
Remark~\ref{rmk:faithfulness}), however, the system is not complete. Achieving
completeness is our task for the remainder of this section.

%%%%%%%%%%%%%%%%%%%%%%%%%%%%%%%%%%%%%%%%%%%%%%%%%%%%%%%%%

\subsection{Corelations}

%%%%%%%%%%%%%%%%%%%%%%%%%%%%%%%%%%%%%%%%%%%%%%%%%%%%%%%%%

%We have already discussed the fact that $\mat\pk$ has finite colimits. 
%Moreover,

In the category of sets, relations can be identified with jointly-mono spans of
functions; that is, those spans $X\xleftarrow{p}R\xrightarrow{q}Y$ where the
induced map $R\xrightarrow{\langle p,q\rangle}X\times Y$ is injective.
Corelations are a dual concept: we consider cospans
$X\xrightarrow{i}S\xleftarrow{j}Y$ where the induced map
$X+Y\xrightarrow{[i,j]}S$ is surjective. To make sense of this more generally,
one needs a category with a factorisation system. (See
Appendix~\ref{app.factorisation} for a definition.)  In this subsection we
identify a factorisation system in $\mat\pk$, and show that the induced prop
$\corel\mat\pk$ of corelations is isomorphic to $\ltids$.  We then give a
presentation of $\corel\mat\pk$ and arrive at a sound and fully complete
equational theory for $\ltids$.

%Every morphism in $\mat\pk$ admits a factorisation into an epi followed
%by a split mono. 
%%In this section we show that this factorisation occurs in a functorial way. 
%This will allow us to define a PROP of \emph{corelations}
%$\corel \mat \pk$.

\begin{proposition}\label{prop.matfactorisation}
  Every morphism $R \in \mat\pk$ can be factored as $R = BA$, where $A$ is an
  epi and $B$ is a split mono.
\end{proposition}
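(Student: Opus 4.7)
The plan is to exploit the Smith normal form, which is available because $\pk = \k[s,s^{-1}]$, as a localization of the Euclidean domain $\k[s]$, is itself a principal ideal domain. Given a morphism $R\maps m\to n$ in $\mat\pk$, one writes $R = PDQ$ with $P\maps n\to n$ and $Q\maps m\to m$ invertible and $D$ an $n\times m$ matrix supported on the main diagonal, with nonzero entries $d_1,\ldots,d_r\in\pk$, where $r$ is the rank of $R$.

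I would then factor $D$ through $\pk^r$: set $D = B_0 A_0$, where $B_0\maps r \to n$ is the $n\times r$ matrix $\left[\begin{smallmatrix}I_r \\ 0\end{smallmatrix}\right]$ and $A_0\maps m \to r$ is the $r\times m$ matrix with $d_i$ in position $(i,i)$ for $i \le r$ and zeros elsewhere. Setting $A \Defeq A_0 Q$ and $B \Defeq P B_0$ gives the candidate factorization $R = BA$; it remains to verify the two properties.

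That $B$ is a split mono is straightforward: $B_0$ has $\left[\begin{smallmatrix}I_r & 0\end{smallmatrix}\right]$ as a retraction, so composing with $P^{-1}$ produces the retraction $\left[\begin{smallmatrix}I_r & 0\end{smallmatrix}\right]P^{-1}$ of $B$.

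The main subtlety, and the step I expect to be least transparent, is verifying that $A$ is epi. It is tempting but incorrect to identify epis in $\mat\pk$ with surjections: the category is not abelian, and a homomorphism of free $\pk$-modules is epi exactly when its cokernel is torsion, since any two extensions along such a map into a free (hence torsion-free) target must then agree. Because $Q$ is iso, it suffices to check $A_0$: if $g,h\maps r \to \ell$ satisfy $gA_0 = hA_0$, then $(g-h)(d_i e_i) = 0$ for each $i \le r$, and torsion-freeness of $\pk^\ell$ together with $d_i \neq 0$ forces $(g-h)e_i = 0$, so $g=h$.
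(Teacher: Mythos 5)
Your proof is correct and follows essentially the same route as the paper's: both invoke the Smith normal form $R=PDQ$ over the PID $\pk$, peel off a split mono of the shape $P\left[\begin{smallmatrix}I_r\\0\end{smallmatrix}\right]$, and reduce the epi claim to the full-rank diagonal part. The only difference is that you explicitly verify that the full-rank diagonal block is epi via torsion-freeness of free $\pk$-modules, a fact the paper simply asserts from $\pk$ being a PID, so your write-up is if anything slightly more complete on that point.
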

\begin{proof}
We use the Smith normal form~\cite[Section~6.3]{Ka}. In 
Appendix~\ref{app.prop.matfactorisation}.
\end{proof}
A more careful examination of Proposition~\ref{prop.matfactorisation} yields
the following.

\begin{corollary} \label{cor.episplitmono}
  Let $\mathcal E$ be the subcategory of epis and
  $\mathcal M$ the subcategory of split monos. The pair $(\mathcal
  E,\mathcal M)$ is a factorisation system on $\mat\pk$, with $\mathcal M$
  stable under pushout.
\end{corollary}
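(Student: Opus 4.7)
Proposition~\ref{prop.matfactorisation} already provides the existence of factorisations, so the plan is to verify the remaining axioms of an orthogonal factorisation system and then separately establish the pushout-stability claim. For the closure properties: every isomorphism lies in both $\mathcal E$ (an iso is epi) and $\mathcal M$ (its inverse gives a splitting); composition of epis is epi; and if $r_i m_i = \idn$ for composable $m_1, m_2$, then $r_1 r_2$ retracts $m_2 m_1$ since $r_1 r_2 m_2 m_1 = r_1 m_1 = \idn$.

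For the diagonal fill-in, suppose we have a commutative square $v e = m u$ with $e \in \mathcal E$ and $m \in \mathcal M$, and pick a retract $r$ of $m$ (so $rm = \idn$). Setting $d := rv$, I get $de = rve = rmu = u$ immediately. Moreover $(md)e = mu = ve$, so $(md - v)e = 0$; since $\mat\pk$ is additive and $e$ is epi, this forces $md = v$. Uniqueness of $d$ is immediate from $e$ being epi.

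For stability under pushout, let $m \colon A \to B$ be a split mono with retract $r$, and let $f \colon A \to C$ be arbitrary, with pushout $(m' \colon C \to D,\, f' \colon B \to D)$. The pair $(fr \colon B \to C,\, \idn_C \colon C \to C)$ forms a compatible cocone over the span $B \xleftarrow{m} A \xrightarrow{f} C$, since $(fr)m = f(rm) = f = \idn_C \cdot f$. The universal property of the pushout then produces a unique $r' \colon D \to C$ with $r' f' = fr$ and $r' m' = \idn_C$, exhibiting $m'$ as a split mono.

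No single step is really hard; the only place that requires care is the orthogonality verification, where additivity of $\mat\pk$ is essential to pass from $(md - v)e = 0$ to $md = v$. In a non-additive setting one would need the stronger statement that $v$ factors through $m$, which would no longer follow from the epi/split-mono data alone without further structure.
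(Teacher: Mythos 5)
Your proof is correct, and the factorisation-system half is essentially the paper's argument in different packaging: the paper verifies the unique-comparison axiom between two factorisations by setting $t = m_2'\,v\,m_1$ and cancelling the epi, which is the same computation as your $d = rv$ for the orthogonality square (the two formulations are equivalent given closure under isomorphisms and composition, which you check and the paper leaves implicit). Where you genuinely diverge is pushout stability. The paper works concretely: it passes to $\fmod\pk$, writes the pushout explicitly as the quotient $(d\oplus e)/\im\left[\begin{smallmatrix} A \\ -M\end{smallmatrix}\right]$, defines a candidate retraction on equivalence classes by $\overline{u+v}\mapsto u+AM'v$, and verifies well-definedness by hand. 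You instead observe that $(fr,\idn_C)$ is a cocone on the span and invoke the universal property to produce the retraction $r'$ of $m'$ --- the standard proof that split monos are stable under pushout in \emph{any} category. Your route is shorter, avoids the well-definedness check, and makes clear that nothing about $\pk$ or modules is being used; the paper's route has the minor virtue of exhibiting the retraction explicitly in the ambient algebra. One small correction to your closing remark: additivity is not actually needed to pass from $mde = ve$ to $md = v$ --- that is just right-cancellation of the epi $e$, valid in any category (and it is exactly how the paper argues); your detour through $(md-v)e=0$ introduces the additive structure gratuitously.
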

\begin{proof}
In Appendix~\ref{app.cor.episplitmono}. See Appendix~\ref{app.factorisation} for background.
\end{proof}

Given finite colimits and an $(\mathcal E,\mathcal M)$-factorisation system with
$\mathcal M$ stable under pushouts, we may define a category of corelations. The
morphisms are isomorphism classes of cospans $X \xrightarrow{i} S \xleftarrow{j}
Y$ where the copairing $[i,j]\maps X+Y \to S$ is in $\mathcal E$.  Composition
is given by pushout, as in the category of cospans, followed by factorising the
copairing of the resulting cospan. This is a dualisation of a well-known
construction of relations from spans~\cite{JW}; the details can be found
in~\cite{Fo}.

\begin{definition}
  The prop $\corel \mat \pk$ has as morphisms equivalence classes of
  jointly-epic cospans in $\mat\pk$.  
  %Composition is given by the jointly-epic part of the pushout.
\end{definition}

We have a full morphism
\[
  F\maps \cospan \mat \pk \longrightarrow \corel \mat \pk
\]
mapping a cospan to its jointly-epic counterpart given by the
 factorisation system. %We will show that 
Then $\cospanfunrest$
factors through $F$, giving a commutative triangle:
\[
  \xymatrixrowsep{2pc}
  \xymatrix{
    \cospan\mat\pk \ar[d]_F \ar[dr]^{\cospanfunrest} \\
    \corel\mat\pk \ar[r]_-{\Phi} & \ltids
  }
\]
The morphism $\Phi$ along the base of this triangle is an isomorphism of props, 
and this is our main technical result, Theorem~\ref{thm.main}. The proof relies
on the following celebrated, beautiful result of systems theory.

\begin{proposition}[Willems {\cite[p.565]{Wi3}}] \label{prop.magic}
  Let $M,N$ be matrices over $\pk$. Then $\ker \vectfun M \subseteq \ker \vectfun N$ iff there exists a matrix $X$ such that $XM = N$.
\end{proposition}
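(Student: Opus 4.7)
The \emph{if} direction is immediate from functoriality of $\vectfun$: if $N = XM$, then $\vectfun N = (\vectfun X)(\vectfun M)$, hence $\ker(\vectfun M) \subseteq \ker(\vectfun N)$.

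For the \emph{only if} direction, my plan is to translate the problem into module theory. Let $m$ be the number of columns of $M$ and let $L \subseteq \pk^m$ be the left $\pk$-submodule generated by the rows of $M$. The equation $XM = N$ has a solution in $X$ iff every row of $N$ lies in $L$, so it suffices, contrapositively, to show that for every $r \in \pk^m \setminus L$ there is some $w \in \ker(\vectfun M)$ with $\vectfun(r) w \neq 0$. Under the canonical identification $(\k^\z)^m \cong \mathrm{Hom}_{\pk}(\pk^m, \k^\z)$, elements $w \in \ker(\vectfun M)$ correspond exactly to $\pk$-linear maps $\pk^m \to \k^\z$ that vanish on $L$, i.e., to elements of $\mathrm{Hom}_{\pk}(\pk^m/L, \k^\z)$. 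The statement thus reduces to showing that $\k^\z$ is an \emph{injective cogenerator} of the category of $\pk$-modules.

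Since $\pk$ is a PID (a localisation of $\k[s]$), injectivity of $\k^\z$ is equivalent to divisibility. Given nonzero $p \in \pk$, after multiplying by a power of $s$ I may assume $p$ has both nonzero constant and leading coefficients; then the recurrence $pw = v$ can be solved forwards (using the leading coefficient) and backwards (using the constant term) from any initialisation of a window of $\deg p$ consecutive values, so the action of $p$ on $\k^\z$ is surjective. For the cogenerator property, by the structure theorem for PID modules together with injectivity it suffices to embed each cyclic module $\pk/(p)$ into $\k^\z$. Concretely, the $\k$-solution space $\{w \in \k^\z : pw = 0\}$ has dimension $\deg p$, and in the ``initial window'' basis the action of $s$ is the companion matrix of $p$; since the companion module is cyclic with annihilator $(p)$, this yields the desired embedding $\pk/(p) \hookrightarrow \k^\z$.

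The main technical obstacle is the cogenerator half of this step: one must verify that the solution space of $pw = 0$ is not merely a module annihilated by $p$ but is actually \emph{cyclic}, so that it realises $\pk/(p)$ rather than a proper quotient. The companion matrix observation handles this cleanly. Once the injective cogenerator property is in hand, applying it to $\pk^m/L$ and $0 \neq \bar{r} \in \pk^m/L$ produces $\phi \in \mathrm{Hom}_{\pk}(\pk^m/L, \k^\z)$ with $\phi(\bar r) \neq 0$, and pulling back along the quotient $\pk^m \twoheadrightarrow \pk^m/L$ gives the required witness $w \in \ker(\vectfun M)$ separating $r$ from $L$.
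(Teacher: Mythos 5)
The paper offers no proof of this proposition: it is imported from Willems \cite{Wi3}, with a pointer to Schumacher \cite{Sc} for its history, so there is nothing in-paper to compare against line by line. What you give is essentially the standard module-theoretic argument (in this form due to Oberst \cite{Ob}): solvability of $XM=N$ is membership of the rows of $N$ in the row module $L$ of $M$; the identification $\ker(\vectfun M)\cong\mathrm{Hom}_{\pk}(\pk^m/L,\k^\z)$ is correct; and the heart of the matter is indeed that $\k^\z$ is an injective cogenerator over the PID $\pk$. Your divisibility argument is sound (one cosmetic slip: with the paper's convention $(s\cdot w)(t)=w(t-1)$ the \emph{forward} recursion divides by the constant coefficient and the \emph{backward} one by the leading coefficient, the reverse of what you wrote), and you correctly isolate the genuinely non-trivial point for the cogenerator property on torsion modules, namely that the solution space of $pw=0$ is a \emph{cyclic} module with annihilator exactly $(p)$, which the companion-matrix observation settles.

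There is one loose end: the torsion-free case. Your cogenerator step, as written, reduces via the structure theorem to embedding the cyclic modules $\pk/(p)$ and then only treats $p\neq 0$ (the solution space of $pw=0$ has dimension $\deg p$ only in that case). But $\pk^m/L$ generally has free summands---take $M=0$, where the proposition degenerates to faithfulness of $\vectfun$---and the class $\bar r$ you need to separate may be a torsion-free element, so you also need a $\pk$-linear map $\pk\to\k^\z$ not killing a prescribed nonzero polynomial. This is a one-line fix: the map $q\mapsto q\cdot\delta_0$, where $\delta_0\in\k^\z$ is the stream supported at time $0$ with value $1$, sends $q$ to its coefficient stream and is therefore injective. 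With that case added, your proof is complete.
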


Further details and a brief history of the above proposition can be found in
Schumacher \cite[pp.7--9]{Sc}. 

\begin{theorem}\label{thm.main}
  There is an isomorphism of props 
  \[
    \Phi\maps \corel\mat\pk \longrightarrow \ltids
  \]
  taking a corelation $\xrightarrow{A}\xleftarrow{B}$ 
  to $\cospanfunrest(\xrightarrow{A}\xleftarrow{B}) = {\ker\theta [A \ -B]}$.
  %mapping each natural number $n$ to the space $(\k^n)^\z$ and each corelation
  %\[
  %  n \stackrel{f}\longrightarrow a \stackrel{g}\longleftarrow m
  %\]
  %to the difference kernel $\ker\vectfun (f\ -g)$.
\end{theorem}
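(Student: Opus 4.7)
The plan is to verify three things: that $\Phi$ is well-defined as a prop morphism, that $\Phi$ is full, and that $\Phi$ is faithful. Well-definedness on isomorphism classes of corelations follows from the observation that if $P$ is an apex isomorphism with $PA_1=A_2$ and $PB_1=B_2$, then $P[A_1\ -B_1]=[A_2\ -B_2]$ and $\theta P$ is invertible, so $\ker\theta[A_2\ -B_2]=\ker(\theta P\circ\theta[A_1\ -B_1])=\ker\theta[A_1\ -B_1]$. The prop morphism structure of $\Phi$ then transfers for free from the commuting triangle $\cospanfunrest=\Phi\circ F$, since $F$ is itself a prop morphism that is surjective on each hom-set.

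For fullness, I would chase the triangle: $\cospanfunrest$ is full (established in the proof of Proposition~\ref{prop.ltidsiswelldefined} via Theorem~\ref{thm.kernelreps}), so for any $\bb\in\ltids(m,n)$ there is a cospan $\gamma$ with $\cospanfunrest(\gamma)=\bb$, whence $\Phi(F(\gamma))=\bb$.

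The main obstacle is faithfulness, and it is here that Willems' Proposition~\ref{prop.magic} is essential. Suppose two corelations $m\xrightarrow{A_i}d_i\xleftarrow{B_i}n$, $i\in\{1,2\}$, satisfy $\ker\theta[A_1\ -B_1]=\ker\theta[A_2\ -B_2]$. Two applications of Proposition~\ref{prop.magic}, one in each direction, produce matrices $X\colon d_1\to d_2$ and $Y\colon d_2\to d_1$ in $\mat\pk$ with $X[A_1\ -B_1]=[A_2\ -B_2]$ and $Y[A_2\ -B_2]=[A_1\ -B_1]$. Composing gives $YX[A_1\ B_1]=[A_1\ B_1]$ and $XY[A_2\ B_2]=[A_2\ B_2]$. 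This is exactly the point at which the corelation hypothesis bites: by Corollary~\ref{cor.episplitmono} the copairings $[A_i\ B_i]$ are split epis in $\mat\pk$, hence right-cancellable. Therefore $YX=\idn_{d_1}$ and $XY=\idn_{d_2}$, making $X$ an isomorphism of apices with $XA_1=A_2$ and $XB_1=B_2$; the two corelations are thus equal as morphisms of $\corel\mat\pk$. Hence $\Phi$ is faithful, completing the proof that $\Phi$ is an isomorphism of props.
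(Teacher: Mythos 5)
Your proof is correct, and on the crux of the theorem---faithfulness---it is essentially the paper's own argument: both reduce to showing that two jointly-epic matrices with the same kernel differ by an invertible matrix, obtained by applying Proposition~\ref{prop.magic} in both directions and cancelling against the epi. Where you genuinely diverge is in establishing that $\Phi$ is a (full) prop morphism in the first place. The paper does this structurally: it shows $\vectfun\maps\mat\pk\to\vect_\k$ preserves the factorisation system (epis to epis, split monos to monos) and colimits, so that it extends to a functor $\Psi\maps\corel\mat\pk\to\corel\vect_\k\cong\linrel_\k$, and then corestricts using Theorem~\ref{thm.kernelreps}. You instead transfer functoriality and fullness along the hom-surjective prop morphism $F$ via the triangle $\cospanfunrest=\Phi\circ F$; this is more elementary and avoids the detour through $\corel\vect_\k$, at the cost of leaning on the commutativity of that triangle. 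That commutativity is not quite covered by your well-definedness check: you verify invariance of $\ker\theta[A\ \ -B]$ under apex \emph{isomorphisms}, but $\Phi(F(\gamma))=\cospanfunrest(\gamma)$ requires the slightly stronger observation that factoring the copairing through a \emph{split mono} $M$ does not change the kernel, i.e.\ that $\theta M$ is injective. This is a one-line addition (and is exactly the ingredient the paper's route makes explicit), so I would not call it a gap, but it should be said. One further cosmetic slip: Corollary~\ref{cor.episplitmono} makes the copairings of corelations \emph{epis}, not split epis; plain epi-ness is all your cancellation step needs.
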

\begin{proof}
  To prove functoriality, start from the functor $\vectfun\maps\mat\pk
  \linebreak \to
  \vect_\k$. Now (i) $\vect_k$ has an epi-mono factorisation system, (ii)
  $\vectfun$ maps epis to epis and (iii) split monos to monos, so $\vectfun$
  preserves factorisations. Since it is a corollary of Prop. \ref{prop.funct}
  that $\theta$ preserves colimits, it follows that $\vectfun$ extends to a prop
  functor $\Psi\maps\corel\mat\pk \to \corel\vect_\k$. But $\corel\vect_\k$ is
  isomorphic to $\linrel_\k$ (see Appendix \ref{app.corellinrel} and~\cite{Fo}).
  By Theorem \ref{thm.kernelreps}, the image of $\Psi$ is $\ltids$, and taking
  the corestriction to gives us precisely $\Phi$, which is therefore a full
  morphism of props.

  As corelations $n \to m$ are in one-to-one correspondence with epis out of
  $n+m$, to prove faithfulness it suffices to prove that if two epis $R$ and $S$
  with the same domain have the same kernel, then there exists an invertible
  matrix $U$ such that $UR =S$. This is immediate from
  Proposition~\ref{prop.magic}: if $\ker R= \ker S$, then we can find $U, V$
  such that $UR = S$ and $VS = R$ Since $R$ is an epimorphism, and since $VUR =
  VS = R$, we have that $VU=1$ and similarly $UV =1$. This proves that any two
  corelations with the same image are isomorphic, and so $\Phi$ is full and
  faithful.  
\end{proof}

%%%%%%%%%%%%%%%%%%%%%%%%%%%%%%%%%%%%%%%%%%%%

\subsection{Presentation of $\corel\mat\pk$}

%%%%%%%%%%%%%%%%%%%%%%%%%%%%%%%%%%%%%%%%%%%%

As a consequence of Theorem~\ref{thm.main}, the task of obtaining a presentation of $\ltids$
is that of obtaining one for $\corel\mat\pk$. 
%One way to do this is to focus on the full morphism $F$ from cospans to corelations.
%\[
%  \cospan \mat \pk \longrightarrow \corel \mat \pk.
%\]
%
To do this, we start with the presentation
$\mathbb{IH}^{\textsf{Csp}}$ for $\cospan\mat\pk$ of \S\ref{subsec:eqcospan}; the task of
this section is to identify the additional equations that equate
exactly those cospans that map via $F$ to the same corelation.
%; i.e. those cospans $c,c'$ for which $Fc=Fc'$.  

%As such, our task 
%now is to understand the conditions under which two cospans have the same 
%jointly-epic parts.

In fact, only one extra equation is required, the ``white bone law'':
\begin{equation}\label{eq.whitebone}
  \lower8pt\hbox{$\includegraphics[height=.7cm]{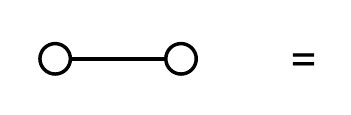}$}\qquad\qquad
\end{equation}
Expressed in terms of cospans, \eqref{eq.whitebone} asserts that  
$0\rightarrow 1\leftarrow 0$ and $0\rightarrow 0\leftarrow 0$ are identified: indeed, 
the two clearly yield the same corelation.
Let $\mathbb{IH}^{\mathsf{Cor}}$ be the SMT obtained
from the equations of $\mathbb{IH}^{\mathsf{Csp}}$ and equation~\eqref{eq.whitebone}.
\begin{theorem}
$\corel \mat \pk \cong \mathbb{IH}^{\mathsf{Cor}}$.
\end{theorem}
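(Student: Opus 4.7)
The plan is to build on the isomorphism $\mathbb{IH}^{\mathsf{Csp}} \cong \cospan\mat\pk$ already established (Proposition~\ref{prop:cospanpresentation}) and the full functor $F\maps \cospan\mat\pk \to \corel\mat\pk$ of the previous subsection. First I would verify soundness: the white bone law~\eqref{eq.whitebone}, read as an equation between cospans, identifies $0\to 1\leftarrow 0$ with $0\to 0\leftarrow 0$. The copairing $0+0\to 1$ is the zero map, whose epi-split-mono factorisation has apex $0$, so both cospans share the jointly-epic factorisation $0\to 0\leftarrow 0$. Hence $F$ respects the white bone law, and we obtain an induced prop morphism $G\maps \mathbb{IH}^{\mathsf{Cor}} \to \corel\mat\pk$. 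Fullness of $G$ is inherited immediately from fullness of $F$.

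The substantive content is faithfulness of $G$, and my strategy is to show that in $\mathbb{IH}^{\mathsf{Cor}}$ every arrow is provably equal to (the image under $\mathbb{IH}^{\mathsf{Csp}}\cong\cospan\mat\pk$ of) its jointly-epic factorisation. Given a cospan $m\xrightarrow{A} d \xleftarrow{B} n$, factor $[A\mid B]$ as an epi $[A'\mid B']\maps m+n \twoheadrightarrow e$ followed by a split mono $i\maps e \to d$. Since split monos in $\mat\pk$ are precisely direct summand inclusions (because $\pk$ is a PID and finitely generated projective $\pk$-modules are free), there is a decomposition $d \cong e \oplus e''$ under which $i$ becomes the coprojection. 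This iso is, up to the cospan equivalence already present in $\cospan\mat\pk$, an equality, and exhibits the original cospan as the monoidal product of the jointly-epic cospan $m\xrightarrow{A'}e\xleftarrow{B'}n$ with the ``all-zero'' cospan $0\to e''\leftarrow 0$.

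Now the white bone law kicks in: applied $\dim e''$ times, together with the monoidal coherence of $\mathbb{IH}^{\mathsf{Csp}}$, it rewrites $0\to e''\leftarrow 0$ to the identity $0\to 0\leftarrow 0 = \idn_0$. Monoidal unitality then collapses the product to the jointly-epic cospan itself. So every cospan reduces in $\mathbb{IH}^{\mathsf{Cor}}$ to its jointly-epic factorisation. Two cospans with equal image under $F$ have isomorphic jointly-epic factorisations, and since the apex of a jointly-epic cospan is determined up to unique iso making both legs commute, these factorisations are already equal in $\cospan\mat\pk$, hence in $\mathbb{IH}^{\mathsf{Cor}}$. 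This yields faithfulness.

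The main obstacle is executing step two cleanly: turning the semantic statement ``split mono equals direct summand inclusion'' into a \emph{diagrammatic} rewriting that uses only the equations of $\mathbb{IH}^{\mathsf{Csp}}$ to expose the $(0\to e''\leftarrow 0)$-summand, since the isomorphism $d\cong e\oplus e''$ lives in $\mat\pk$ and must be transported across the presentation. The cleanest route is to work cospan-theoretically: use that cospan equivalence is already quotiented by apex isomorphisms in $\cospan\mat\pk$, so the iso can be absorbed ``for free'' on the semantic side, leaving only the application of the white bone law to the residual summand.
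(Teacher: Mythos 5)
Your proposal is correct and follows essentially the same route as the paper: soundness of the white bone law yields the full morphism $\ihcor \to \corel\mat\pk$, and faithfulness is obtained by proving in $\ihcor$ that every cospan equals its jointly-epic factorisation, absorbing the invertible part of the split mono via cospan equivalence and eliminating the residual $0\to e''\leftarrow 0$ summand by repeated use of the white bone law. The only cosmetic difference is that the paper reads off the coprojection form of the split mono from the explicit Smith-normal-form construction of the factorisation, whereas you derive it from the abstract fact that split monos over the PID $\pk$ are direct summand inclusions.
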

\begin{proof}
Since equation~\eqref{eq.whitebone} holds in $\corel\mat\pk$, we have
a full morphism $\ihcor \to \corel\mat\pk$; it remains
to show that it is faithful. It clearly suffices to show that in the equational
theory $\ihcor$ one can prove that every cospan is equal
to its corelation. Suppose then that $m \xrightarrow{A} k \xleftarrow{B} n$ is a cospan
and $m \xrightarrow{A'} k' \xleftarrow{B'} n$ its corelation. Then, by definition, 
there exists a split mono $M\maps k'\to k$ such that $MA'=A$ and $MB'=B$. Moreover, by the construction
of the epi-split mono factorisation in $\mat\pk$, $M$ is of the form $\lower10pt\hbox{$\includegraphics[height=1cm]{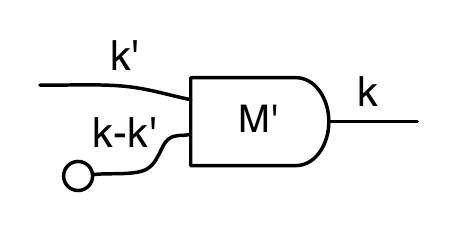}$}$ where $M'\maps k\to k$ is invertible. We can now give the derivation
in $\mathbb{IH}^{\mathsf{Cor}}$:
\begin{align*}
\lower7pt\hbox{$\includegraphics[height=.7cm]{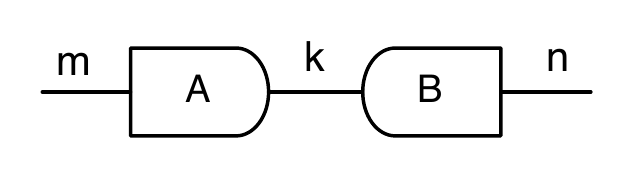}$} \quad
&\stackrel{\ihcsp}{=} \quad
\lower10pt\hbox{$\includegraphics[height=1cm]{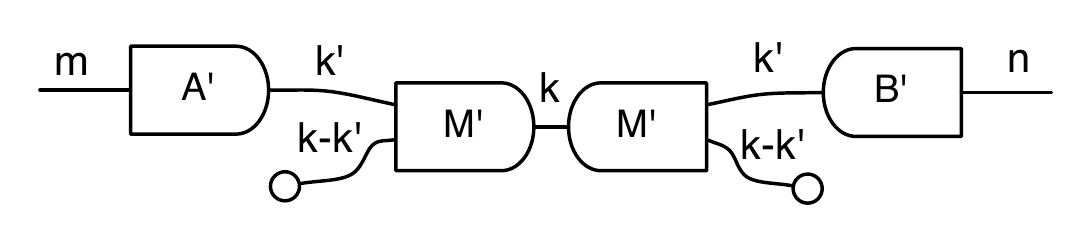}$} \\
\quad &\stackrel{\ihcsp}{=} \qquad 
\lower8pt\hbox{$\includegraphics[height=.8cm]{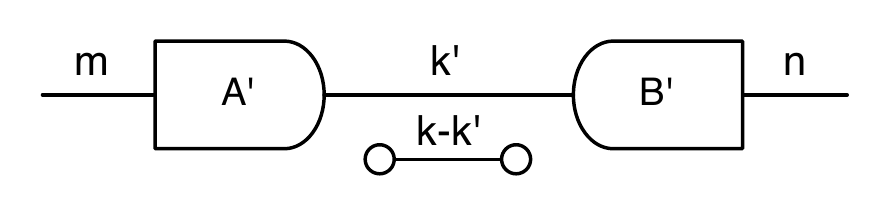}$} \\
&\stackrel{\eqref{eq.whitebone}}{=} \qquad\qquad
\lower7pt\hbox{$\includegraphics[height=.7cm]{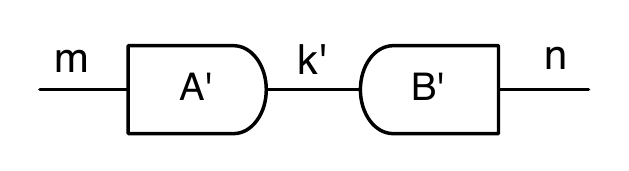}$}
\end{align*}
\end{proof}

%The new construction rule we have for axioms is that we may replace split-monic
%matrices followed by a cozero with just a cozero. Split monic matrices are just
%the tensor of identity and zero matrices followed by an invertible matrix. The
%rules to see that an invertible matrix followed by a cozero is equal to a cozero
%already exist. So the only additional rule is that a zero followed by a cozero
%is a $0 \to 0$ cozero: ie. nothing. This is the bone law.
%\[
%  \includegraphics[height=.7cm]{pics/wbone.pdf}
%\]
%For the reader's convenience, we collect all the equations in Appendix~\ref{app.equations}%
%
%Thus a cospan may be reduced to its jointly-epic part by repeated application of
%the usual laws and the white bone law.  Moreover, two cospans then have the same
%jointly-epic part if and only if the are the same up to such applications.

We therefore have a sound and fully complete equational theory for LTI systems,
and also a normal form for each LTI system: every such system can be written, in
an essentially unique way, as a jointly-epic cospan of terms in $\ha_\pk$ in
normal form.

\begin{remark}
 $\ihcor$ (summarised in Appendix~\ref{app.equations}) can also be described as having the equations of $\ih_\pk$~\cite{BSZ2,Za}, but
\emph{without} $\!\!\lower4pt\hbox{$\includegraphics[height=.5cm]{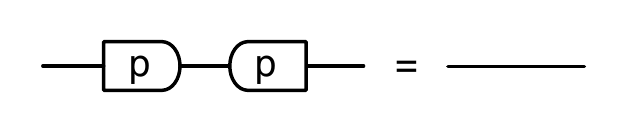}$}$,
and the related $\!\!\lower7pt\hbox{$\includegraphics[height=.8cm]{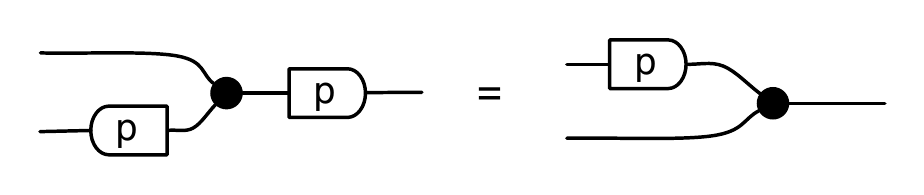}$}$
and $\!\!\lower7pt\hbox{$\includegraphics[height=.8cm]{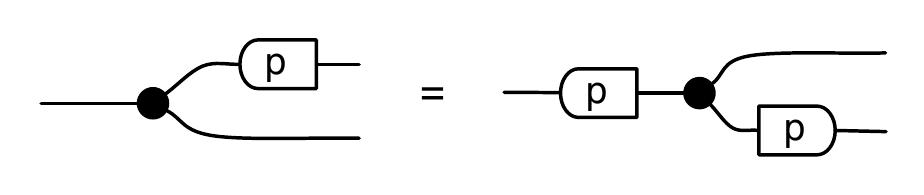}$}$.
Our results generalise; given any PID $R$ we have (informally speaking):
\begin{align*}
  \ihcor_R &= \ih_R -
  \left\{
    \begin{array}{c}
      \lower6pt\hbox{$\includegraphics[height=.6cm]{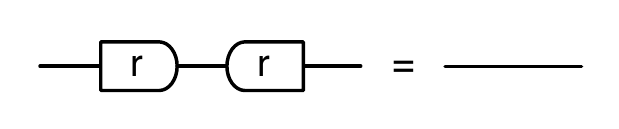}$}, \\
      \lower7pt\hbox{$\includegraphics[height=.7cm]{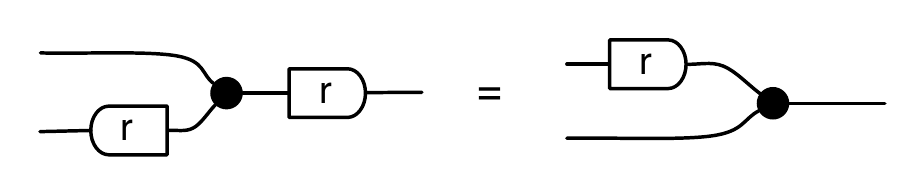}$},\\
      \lower7pt\hbox{$\includegraphics[height=.7cm]{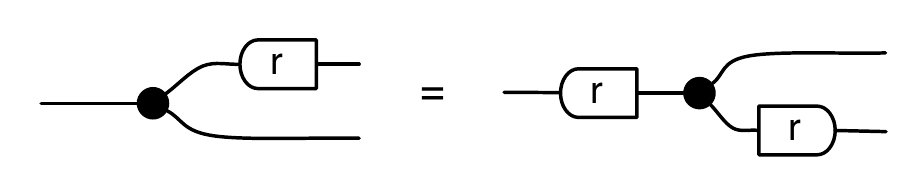}$}
    \end{array}
    \middle|\,r\neq 0\in R\,
  \right\} \\
  &\cong \corel \mat R
\end{align*}
and, because of the transpose duality of matrices:
\begin{align*}
  \ih^{\mathsf{Rel}}_R &= \ih_R - 
  \left\{
    \begin{array}{c} 
      \lower6pt\hbox{$\includegraphics[height=.6cm]{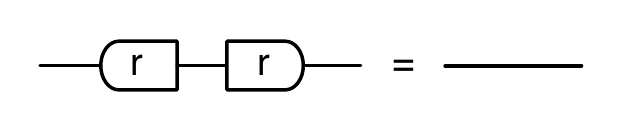}$},\\
      \lower7pt\hbox{$\includegraphics[height=.7cm]{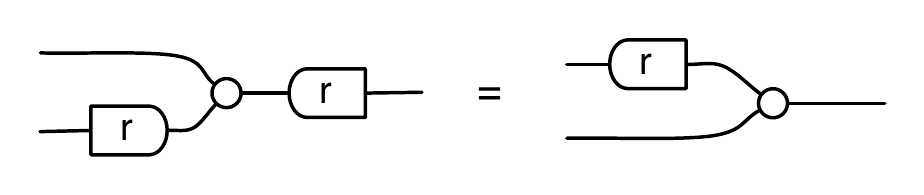}$}, \\
      \lower7pt\hbox{$\includegraphics[height=.7cm]{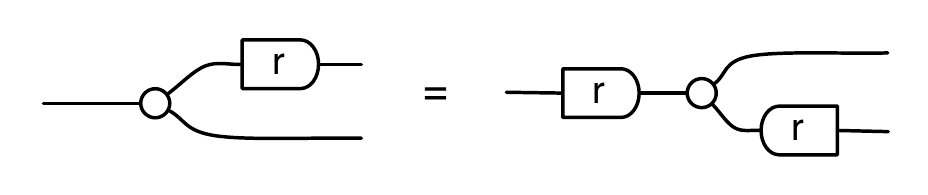}$}
    \end{array}
    \middle|\,r\neq 0\in R\,
  \right\}\\
  &\cong \mathsf{Rel} \mat R.
\end{align*}
\end{remark}

The similarity between our equational presentation of $\ihcor_{\k[s,s^{-1}]}$ and
the equational presentation of $\ih_{\k[s]}$  given in~\cite{BSZ1,BSZ3} is remarkable,
considering the differences between the intended semantics of signal flow graphs
that string diagrams in those theories represent, as well as the underlying mathematics
of streams, which for us are elements of the $\k$ vector space $\k^{\mathbb{Z}}$ and
in~\cite{BSZ1,BSZ3} are Laurent series. We contend that this is evidence of the \emph{robustness}
of the algebraic approach: the equational account of how the various components of signal
flow graphs interact is, in a sense, a higher-level specification than the technical details of the underlying
mathematical formalisation.

\section{Operational semantics} \label{sec.opsem}

In this section we relate the denotational account given in previous sections
with an operational view. 

Operational semantics is given to $\Sigma^*$-terms---that is, to arrows of the
prop $\syntax^*=\mathbf{S}_{(\Sigma^*,\varnothing)}$---where $\Sigma^*$ is
obtained from set of generators in~\eqref{eq:generators} by replacing the formal
variables $s$ and $s^{-1}$ with a family of registers, indexed by the scalars of
$\k$:
\[
\delaygen \leadsto \{ \delaygenk | \, k\in\k \}
\]
\[
\delayopgen \leadsto \{ \delayopgenk |\, k\in\k \}
\]
The idea is that at any time during a computation the register holds the signal
it has received on the previous `clock-tick'. There are no equations, apart from
the laws of symmetric monoidal categories.

Next we introduce the structural rules: the transition relations that occur at
each clock-tick, turning one $\Sigma^*$-term into another. Each transition is given
two labels, written above and below the arrow.  The upper label refers to the
signals observed at the `dangling wires' on the left-hand side of the term, and
the lower label to those observed on the right hand side.  Indeed, transitions
out of a term of type $m \to n$ have upper labels in $\k^m$ and lower labels in
$\k^n$.

Because---for the purposes of the operational account---we consider these terms
to be syntactic, we must also account for the twist $\twist$ and identity $\id$.
A summary of the structural rules is given below; the rules for the mirror image
generators are symmetric, in the sense that upper and lower labels are swapped.

%\begin{figure*}[ht]
\[
\copygen \dtrans{\,k\,}{k \, k} \copygen \quad 
\discardgen \dtrans{k}{} \discardgen   
\]
\[
\addgen \dtrans{k \labelSep\, l }{k+l} \addgen \quad 
\zerogen \dtrans{\phantom{b}}{0} \zerogen
\]
%\[
%\copyopgen \! \dtrans{k \labelSep k}{k} \!\copyopgen \ 
%\discardopgen \dtrans{\phantom{k}}{k} \discardopgen \ 
%\addopgen\! \dtrans{ k+l}{k \labelSep l }\! \addopgen \ 
%\zeroopgen \dtrans{0}{} \zeroopgen
%\]
\[
\scalargen  \dtrans{\,\,l\,\,}{al} \scalargen \quad 
\delaygenl \dtrans{k}{l} \delaygenk 
%\scalaropgen  \dtrans{al}{\,l\,} \scalaropgen \ 
%\circuitXopT ^{\labelSep l}\! \dtrans{l}{k} \circuitXopT ^{\labelSep k} \quad \quad
\]
%\BmultT\! \dtrans{k \labelSep k}{k} \!\BmultT \quad\quad
%\BunitT \dtrans{\phantom{k}}{k} {\BunitT} \quad\quad
%\scalaropT  \dtrans{kl}{\,l\,} \scalaropT \quad \quad
%\circuitXopT ^{\labelSep l}\! \dtrans{l}{k} \circuitXopT ^{\labelSep k} \quad \quad
%\WcomultT\! \dtrans{ k+l}{k \labelSep l }\! \WcomultT\quad\quad
%\WcounitT \dtrans{0}{} \WcounitT
%\]
\[
\id  \dtrans{k}{k} \id \quad 
\twist \dtrans{k \labelSep l}{l \labelSep k} \twist 
\]
\[
  \frac{s\dtrans{\mathbf{u}}{\mathbf{v}} s' \quad
  t\dtrans{\mathbf{v}}{\mathbf{w}} t'}{s \mathrel{;} t \dtrans
  {\mathbf{u}}{\mathbf{w}} s' \mathrel{;} t'}
 \quad 
 \frac{s\dtrans{\mathbf{u_1}}{\mathbf{v_1}} s'\quad
 t\dtrans{\mathbf{u_2}}{\mathbf{v_2}} t'}
 {s\oplus t \dtrans{\mathbf{u_1 \labelSep u_2}}{\mathbf{v_1 \labelSep v_2}} s'\oplus t'}
\]
%\caption{Summary of structural rules for the operational semantics; $k,l\in\k$.\label{fig.opsem}}
%\end{figure*}
Here $k, l,a \in \k$, $s,t$ are $\Sigma^*$-terms, and
$\mathbf{u,v,w,u_1,u_2,v_1,v_2}$
are vectors in $\k$ of the appropriate length. Note that the only generators
that change as a result of computation are the registers; it follows this is the
only source of state in any computation.

The structural rules are identical to those given in~\cite[\S 2]{BSZ3}.
Differently from~\cite{BSZ3}, however, we can be more liberal with our
assumptions about the \emph{initial states} of computations.  In~\cite{BSZ3}
each computation starts off with all registers initialised with the 0($\in\k$)
value.  For us, systems can be initialised with arbitrary elements of $\k$. 
%Let $\sigma$ range over such assignments of values to delays.

Let $\tm\colon m\to n \in \syntax$ be a $\Sigma$-term. Fixing an ordering of
the delays $\delaygen$ in $\tm$ allows us to identify the set of delays with a
finite ordinal $[d]$.  A \define{register assignment} is then simply a function
$\sigma: [d]\to \k$.  We may instantiate the $\Sigma$-term $\tm$ with the
register assignment $\sigma$ to obtain the $\Sigma^\ast$-term $\tm_\sigma
\in\syntax^\ast$ of the same type: for all $i \in [d]$ simply replace the $i$th
delay with a register in state $\sigma(i)$.

A computation on $\tm$ initialised at $\sigma$ is an infinite sequence of register
assignments and transition relations 
\[
  \tm_\sigma \dtrans{\mathbf{u_1}}{\mathbf{v_1}} \tm_{\sigma_1}
  \dtrans{\mathbf{u_2}}{\mathbf{v_2}} \tm_{\sigma_2}  
  \dtrans{\mathbf{u_3}}{\mathbf{v_3}} \dots
\]
The \define{trace} of this computation is the sequence $\mathbf{(u_1,v_1),
(u_2,v_2), \dots}$ of elements of $\k^m \oplus \k^n$.

To relate these structural rules to our denotational semantics, we introduce the
notion of biinfinite trace. A biinfinite trace is a trace with an infinite past
as well as future. To discuss these, we use the notion of a reverse computation: 
a computation using the operational rules above, but with the rules for delay having their
  left and right hand sides swapped, e.g.
\[
\delaygenk \dtrans{k}{l} \delaygenl.
\]

\begin{definition}
  Given $\tm\colon m \to n \in \syntax$, a \define{biinfinite trace on} $\tm$ is a sequence
$w \in (\k^m)^\z \oplus (\k^n)^\k$ such that there exists
\begin{enumerate}[(i)]
\item a register assignment $\sigma$; 
\item an infinite \emph{forward trace} $\phi_\sigma$ of a computation on $\tm$
  initialised at $\sigma$; and,
\item an infinite \emph{backward trace} $\psi_\sigma$ of a reverse computation
  on $\tm$ initialised at $\sigma$,
%It may be the case 
%The problems of deadlock
%and non-effectiveness, present in~\cite{BSZ3}, remain. Although out of scope of this paper, the solution is to use the \emph{realisability theorem}~\cite[Th.\ 5]{BSZ3}, which remains true in $\ihcor$.
\end{enumerate}
obeying
\[
w(t) = \begin{cases}
\phi(t) & t\geq 0; \\
\psi(-(t+1)) & t < 0.
\end{cases}
\]
We write $\bit(\tm)$ for the set of all biinfinite traces on $\tm$.
\end{definition}

%Given a term $c$, $bit(c)$ is the set of biinfinite traces
%obtained from the operational rules by (i) initialising the registers with all possible values in $\k$ at time $0\in\z$ and, (ii) executing both forwards and backwards in time. 

The following result gives a tight correspondence between the operational and
denotational semantics, and follows via a straightforward structural induction
on $\tm$.
\begin{lemma}
For any $\tm\maps m\to n \in \syntax$, we have 
\[
  \llbracket \tm \rrbracket = \bit(\tm) 
\]
as subsets of $(\k^m)^\z\times (\k^n)^\z$.
\end{lemma}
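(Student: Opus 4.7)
The proof goes by structural induction on $\tm$, following the inductive construction of $\syntax = \mathbf{S}_{(\Sigma,\varnothing)}$ from generators, identity, twist, composition and monoidal product.

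The plan is first to dispatch the base cases. For each \emph{stateless} generator (add, zero, copy, discard, scalar and their opposites), the set of register assignments is a singleton, and at each clock-tick the structural rule fully determines the allowed pair of upper/lower labels. Hence $\bit(\tm)$ is exactly the set of biinfinite sequences whose pointwise values satisfy the corresponding relation, which matches $\llbracket \tm\rrbracket$ on the nose. The identity and twist cases are analogous. The only genuinely stateful base case is the delay $\delaygen$ (and its mirror). Here I would argue that given the initial register value $\sigma(0)=k$, the forward trace forces the output stream at times $t\geq 0$ to be the input stream shifted by one, while the backward trace does the symmetric thing for $t<0$, with consistency at the boundary enforced by the shared $\sigma$. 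Unfolding the definitions, this gives exactly $\{(\tau, s\cdot\tau)\mid \tau\in\k^\z\}$, matching $\llbracket \delaygen\rrbracket$.

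For the inductive steps, consider the composition rule: a transition $s\mathrel{;}t\dtrans{\mathbf{u}}{\mathbf{w}} s'\mathrel{;}t'$ exists iff there is an intermediate vector $\mathbf{v}$ with matching transitions $s\dtrans{\mathbf{u}}{\mathbf{v}} s'$ and $t\dtrans{\mathbf{v}}{\mathbf{w}} t'$. The register assignment for $s\mathrel{;}t$ splits canonically into one for $s$ and one for $t$ (the delay set of $s\mathrel{;}t$ is the disjoint union), and this splitting is preserved by both forward and reverse computation. Therefore a biinfinite trace on $s\mathrel{;}t$ with outer labels $(\mathbf{u},\mathbf{w})$ exists iff there is a biinfinite intermediate $\mathbf{v}\in(\k^d)^\z$ with $(\mathbf{u},\mathbf{v})\in\bit(s)$ and $(\mathbf{v},\mathbf{w})\in\bit(t)$, which is exactly relational composition. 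By the inductive hypothesis, this equals $\llbracket s\rrbracket \mathrel{;} \llbracket t\rrbracket = \llbracket s\mathrel{;}t\rrbracket$. The monoidal product case is entirely parallel, using the splitting of labels and of register assignments across the direct sum.

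The main obstacle I expect is the delay base case, not the inductive step. One has to be careful that a biinfinite trace is not two independent one-sided traces glued arbitrarily: the definition insists on a single register assignment $\sigma$ seeding \emph{both} the forward and the backward computations. For the delay, this coupling is precisely what forces the output stream to be a genuine one-step shift of the input on all of $\z$ rather than just on $\{t\geq 0\}$ and $\{t<0\}$ separately. After handling this coupling cleanly for $\delaygen$ and $\delayopgen$, the rest of the induction is essentially an unfolding of the structural rules matched against the definitions of the generators' denotations and of relational composition and direct sum in $\linrel_\k$.
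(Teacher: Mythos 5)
Your proposal is correct and follows essentially the same route as the paper, which proves this lemma precisely by ``a straightforward structural induction on $\tm$'' and supplies no further detail. Your additional care over the delay base case --- noting that the single shared register assignment $\sigma$ is what couples the forward and backward traces into a genuine one-step shift on all of $\z$ --- is exactly the point where the induction is least trivial, and is a welcome elaboration rather than a divergence.
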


\section{Controllability} \label{sec.control}

Suppose we are given a current and a target trajectory for a system. Is it
always possible, in finite time, to steer the system onto the target trajectory? If so, the 
 system is deemed controllable, and the problem of controllability of systems is at
the core of control theory. The following definition is due to
Willems~\cite{Wi2}.

\begin{definition}\label{def:contr}
A system $(T,W,\bb)$ (or simply the behaviour $\bb$) is \define{controllable} if
for all $w,w' \in \bb$ and all times $t_0\in\mathbb{Z}$, there exists $w'' \in
\bb$ and $t_0^\prime\in\mathbb{Z}$ such that $t_0'>t_0$ and $w''$ obeys
\[
  w''(t) = 
  \begin{cases} 
    w(t) & t \le t_0 \\
    w'(t-t_0') & t \ge t_0'.
  \end{cases}
\]
\end{definition}
%$\bb$ is controllable if given the past of some trajectory and the future of
%another, there exists a trajectory in the behaviour that transitions, in finite
%time, from the chosen past to the chosen future. 

As mentioned in the introduction, a novel feature of our graphical calculus is
that it permits the representation of non-controllable behaviours.

\begin{exm} \label{ex.noncontrol}
Consider the system in the introduction, represented by the signal flow
graph~\eqref{eq:examplesfg}. The equation 
\[
\includegraphics[height=1.3cm]{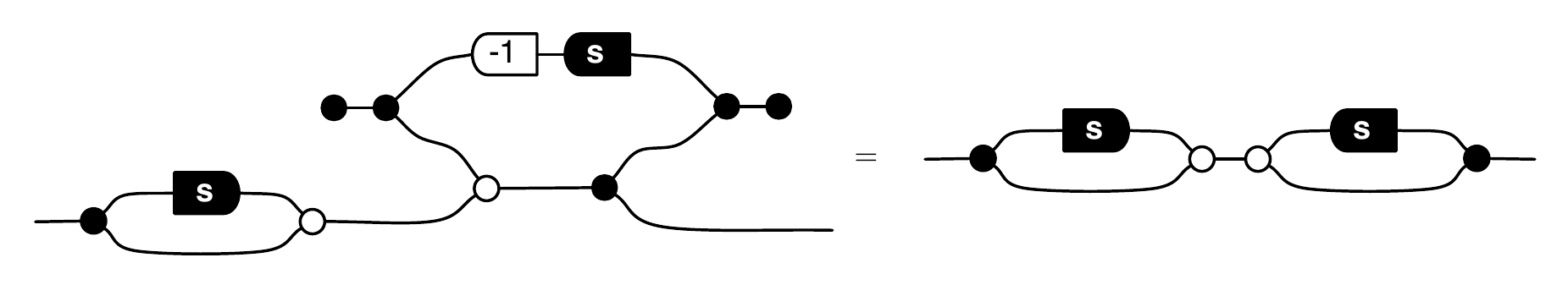}
\]
follows from the equations of $\ltids$, and so as an $\ltids$ system
\eqref{eq:examplesfg} is the system represented by the corelation
%\[
$1 \xrightarrow{[s+1]} 1 \xleftarrow{[s+1]} 1.$
%\]

The trajectories of this system are precisely those sequences $w= (w_1, w_2) \in
\k^\z \oplus \k^\z$ that satisfy $(s+1)\cdot w_1= (s+1)\cdot w_2$; that is, they
satisfy the difference equation
\[
  w_1(t-1)+w_1(t)-w_2(t-1)-w_2(t)=0.
\]

To see that the system is non-controllable, note that this implies that every
trajectory $w$ obeys
\[
  w_1(t-1)-w_2(t-1) = -(w_1(t)-w_2(t)),
\]
so $(w_1-w_2)(t) = (-1)^tc_w$ for some $c_w \in \k$. This $c_w$ is a time-invariant
property of any trajectory. Thus if $w$ and $w'$ are trajectories such that $c_w \ne
c_{w'}$, then it is not possible to transition from the past of $w$ to the
future of $w'$ along some trajectory in $\mathcal B$. 

Explicitly, taking $w(t) = ((-1)^t,0)$ and $w'(t) = ((-1)^t2,0)$ suffices to
show $\mathcal B$ is not controllable.
\end{exm}

\subsection{A categorical characterisation}
We now show that controllable systems are precisely those representable as
\emph{spans} of matrices. This novel characterisation leads to new ways of
reasoning about controllability of composite systems. 

Among the various equivalent conditions for controllability, the existence of
\emph{image representations} is most useful for our purposes.
\begin{proposition}[Willems {\cite[p.86]{Wi}}] \label{thm.imagereps}
  An LTI behaviour $\bb$ is controllable iff there
  exists $M \in \mat\pk$ such that $\bb = \im \theta M$.
\end{proposition}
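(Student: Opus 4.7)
The plan is to prove the two implications separately, using the kernel representation theorem (Thm.~\ref{thm.kernelreps}) and Smith normal form over the PID $\pk$. For the forward direction, suppose $\bb = \im \vectfun M$ for some $M\maps k \to n$ in $\mat \pk$. Each entry of $M$ is a Laurent polynomial with finite $s$-support, so let $d$ bound the absolute exponents appearing. Given $w, w' \in \bb$, pick $v, v' \in (\k^k)^\z$ with $w = \vectfun M \cdot v$ and $w' = \vectfun M \cdot v'$. Setting $t_0' \Defeq t_0 + 2d + 1$, define $v'' \in (\k^k)^\z$ to equal $v$ on $(-\infty, t_0 + d]$, to equal the time-shifted sequence $t \mapsto v'(t - t_0')$ on $[t_0' - d, \infty)$, and to be zero in the gap in between. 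Since $(\vectfun M \cdot v'')(t)$ depends only on values of $v''$ in a window of radius $d$ about $t$, the trajectory $w'' \Defeq \vectfun M \cdot v'' \in \bb$ obeys $w''(t) = w(t)$ for $t \le t_0$ and $w''(t) = w'(t - t_0')$ for $t \ge t_0'$, as required.

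For the backward direction, by Theorem~\ref{thm.kernelreps} write $\bb = \ker \vectfun M$, and put $M$ into Smith normal form $M = UDV$ with $U, V$ invertible over $\pk$ and $D = \mathrm{diag}(d_1, \ldots, d_r, 0, \ldots, 0)$, where $d_1, \ldots, d_r$ are the non-zero invariant factors. Invertibility of $U$ and $V$ over $\pk$ lifts to invertibility of $\vectfun U$ and $\vectfun V$, so $\bb = (\vectfun V)^{-1}(\ker \vectfun D)$. The kernel of $\vectfun D$ consists of sequences whose $i$th coordinate ($i \le r$) lies in the kernel of multiplication by $d_i$ on $\k^\z$, with the last $n - r$ coordinates free. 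The key claim is that controllability forces each non-zero $d_i$ to be a unit in $\pk$, i.e., of the form $c s^j$ with $c \in \k^*$. Granting this, the first $r$ coordinates are forced to vanish, so $\ker \vectfun D$ is the image of the inclusion matrix $N\maps n - r \hookrightarrow n$, yielding $\bb = \im \vectfun(V^{-1} N)$, the desired image representation.

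The main obstacle is proving that controllability forces the invariant factors to be units. If some non-zero $d_i$ is a non-unit, then the kernel of $d_i \cdot \maps \k^\z \to \k^\z$ is the finite-dimensional, non-zero solution space of a non-trivial linear recurrence. Because such a recurrence propagates deterministically in \emph{both} time directions (the leading and trailing coefficients are non-zero), any non-zero biinfinite solution cannot vanish on any half-line. Pulling back through $\vectfun V$, this produces a non-zero $w_0 \in \bb$ whose value at $t_0$ can be non-zero even though $w_0$ restricts to an autonomous solution, so no trajectory in $\bb$ can agree with $w_0$ on $(-\infty, t_0]$ and with $w' \Defeq 0$ on $[t_0', \infty)$; this violates the definition of controllability. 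Though conceptually clean once Smith normal form is invoked, verifying carefully that the autonomous components of $\bb$ truly obstruct steering is where the systems-theoretic substance of the proposition really resides.
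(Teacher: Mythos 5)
The paper offers no proof of this proposition: it is imported directly from Willems \cite{Wi}, so there is no in-text argument to compare yours against, and a self-contained proof is genuinely extra content. Your argument is essentially correct and follows the classical behavioural-theoretic route. The forward direction---cut-and-paste on the latent variable, using that $\vectfun M$ has finite memory radius $d$ so that splicing $v$ to a shifted copy of $v'$ yields $w'' = \vectfun M v'' \in \im\vectfun M = \bb$ agreeing with $w$ on $(-\infty,t_0]$ and with the shift of $w'$ on $[t_0',\infty)$---is exactly the standard proof that image representations are controllable, and it is airtight as written. The backward direction via Theorem \ref{thm.kernelreps} and Smith normal form over the PID $\pk$ is also sound, and meshes nicely with the paper's own use of Smith form in Proposition \ref{prop.matfactorisation}: reducing to a diagonal $D$ and showing that controllability forces every non-zero invariant factor to be a unit does yield $\bb = \im\vectfun(V^{-1}N)$. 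The one spot to tighten is the concluding non-controllability argument: the clause ``whose value at $t_0$ can be non-zero'' is not what does the work. The correct chain is: choose $u \in \ker\vectfun D$ whose $i$th coordinate is a non-zero solution of the recurrence $d_i u_i = 0$ and set $w_0 = \vectfun(V^{-1})u$; any $w'' \in \bb$ witnessing controllability against $w'=0$ would have $(\vectfun V w'')_i$ solving the same recurrence, vanishing on a right half-line (hence identically zero, by the two-sided determinism you correctly isolate) yet agreeing with $u_i \neq 0$ on a left half-line---a contradiction. You state the key lemma, so this is a matter of wiring rather than a missing idea; just note that half-line agreement of $w''$ with $w_0$ transfers to half-line agreement of $\vectfun V w''$ with $u$ only after shrinking the half-lines by the support radius of the entries of $V$.
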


Restated in our language, Prop. \ref{thm.imagereps} states that controllable
systems are precisely those representable as \emph{spans} of matrices. 

\begin{theorem} \label{cor.spanreps}
  Let $m \xrightarrow{A} d \xleftarrow{B} n$ be a corelation in $\corel\mat\pk$. Then
  $\Phi(\xrightarrow{A}\xleftarrow{B})$ is controllable iff
  there exist matrices $R: e \to m$, $S: e\to n$ such that 
  \[
    m \xleftarrow{R} e \xrightarrow{S} n = m \xrightarrow{A} d \xleftarrow{B} n
  \]
  as morphisms in $\corel\mat\pk$. 
\end{theorem}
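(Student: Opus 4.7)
The plan is to reduce the theorem to Willems's image representation result (Proposition \ref{thm.imagereps}) via the isomorphism $\Phi$ of Theorem \ref{thm.main}. Morally: a cospan encodes a behaviour as a kernel, while a span encodes it as an image, and Willems's theorem identifies image-representability with controllability.

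The preparatory step is to make sense of equating a span with a cospan as morphisms in $\corel\mat\pk$. Since $\mat\pk$ has pushouts, any span $m \xleftarrow{R} e \xrightarrow{S} n$ canonically determines a cospan $m \xrightarrow{\bar A} \bar d \xleftarrow{\bar B} n$ by pushout; this cospan is jointly epic, hence a bona fide corelation. I then need to verify that under $\Phi$ this corelation maps to the behaviour $\im \vectfun \left[\begin{smallmatrix} R \\ S \end{smallmatrix}\right]$. Concretely, in $\mat\pk$ (over the PID $\pk$) the copairing $[\bar A\ -\bar B]$ of the pushout is the cokernel of $\left[\begin{smallmatrix} R \\ S \end{smallmatrix}\right]$, and Proposition \ref{prop.magic} lifts this cokernel--image duality from the matrix world to biinfinite streams, yielding $\ker \vectfun[\bar A\ -\bar B] = \im \vectfun\left[\begin{smallmatrix} R \\ S \end{smallmatrix}\right]$.

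With this interpretation in place, the theorem becomes essentially a restatement. For the forward direction, suppose $\bb \Defeq \Phi(\xrightarrow{A}\xleftarrow{B})$ is controllable. By Proposition \ref{thm.imagereps}, $\bb = \im \vectfun M$ for some $M \in \mat\pk$ of type $e \to m+n$; partitioning $M = \left[\begin{smallmatrix} R \\ S \end{smallmatrix}\right]$ row-wise produces matrices $R\maps e \to m$ and $S\maps e \to n$ whose associated span, viewed as a corelation by the pushout construction, is carried by $\Phi$ to $\im \vectfun M = \bb = \Phi(\xrightarrow{A}\xleftarrow{B})$. Faithfulness of $\Phi$ (Theorem \ref{thm.main}) then gives equality of corelations. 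Conversely, if $m \xrightarrow{A} d \xleftarrow{B} n$ coincides in $\corel\mat\pk$ with the corelation of some span $(R,S)$, then $\bb = \im \vectfun \left[\begin{smallmatrix} R \\ S \end{smallmatrix}\right]$, which is controllable by Proposition \ref{thm.imagereps}.

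The main obstacle is the preparatory step: establishing the span-to-corelation correspondence cleanly and justifying the passage from the matrix-level identity $[\bar A\ -\bar B] = \coker\left[\begin{smallmatrix} R \\ S \end{smallmatrix}\right]$ to the stream-level identity $\ker \vectfun[\bar A\ -\bar B] = \im \vectfun\left[\begin{smallmatrix} R \\ S \end{smallmatrix}\right]$, since $\vectfun$ is not a priori exact. Proposition \ref{prop.magic} is precisely the reflection principle that makes this go through; once it is in hand, the remainder of the proof is a direct application of the image representation theorem.
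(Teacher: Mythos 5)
Your overall strategy is the same as the paper's: identify $\Phi$ of the span (interpreted as a corelation via pushout) with $\im\vectfun\left[\begin{smallmatrix}R\\S\end{smallmatrix}\right]$ and then quote Proposition~\ref{thm.imagereps}; the two directions of the equivalence, using faithfulness of $\Phi$, are then routine and correct. The problem is the step you yourself flag as the main obstacle. Proposition~\ref{prop.magic} compares two \emph{kernels}: it reflects an inclusion $\ker\vectfun M\subseteq\ker\vectfun N$ to a matrix factorisation $XM=N$. It gives no mechanism for lifting an element of $\ker\vectfun[\bar A\ {-\bar B}]$ to a preimage under $\vectfun\left[\begin{smallmatrix}R\\S\end{smallmatrix}\right]$, which is what the hard inclusion $\ker\vectfun(\mathrm{coker}\,T)\subseteq\im\vectfun T$ requires (the other inclusion is trivial from $QT=0$). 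To make your route work you would first need to know that $\im\vectfun\left[\begin{smallmatrix}R\\S\end{smallmatrix}\right]$ is itself a \emph{complete} LTI behaviour, so that Theorem~\ref{thm.kernelreps} furnishes a kernel representation $N$; then $NT=0$ and the universal property of the cokernel give $N=YQ$ and hence the desired inclusion. But completeness of the image is exactly the elimination theorem --- essentially the statement being proved --- and it is not a consequence of Proposition~\ref{prop.magic}. So as written the central identity is asserted, not established.

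The paper avoids this entirely, and you can too: the span is the composite in $\corel\mat\pk$ of the corelations $m\xrightarrow{\idn}m\xleftarrow{R}e$ and $e\xrightarrow{S}n\xleftarrow{\idn}n$ (their pushout is precisely your $\bar A,\bar B$). Since $\Phi$ is a functor into $\ltids\subseteq\linrel_\k$ by Theorem~\ref{thm.main}, its value on this composite is the relational composite $\ker\vectfun[\idn_m\ {-R}]\mathrel{;}\ker\vectfun[S\ {-\idn_n}]$, and unwinding the set-theoretic definition of relational composition gives exactly $\{(\mathbf{x},\mathbf{y})\mid\exists\mathbf{z}.\ \mathbf{x}=\vectfun R\mathbf{z},\ \mathbf{y}=\vectfun S\mathbf{z}\}=\im\vectfun\left[\begin{smallmatrix}R\\S\end{smallmatrix}\right]$. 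This yields the identity you need and, for free, the fact that the joint image is an LTI behaviour; no exactness property of $\vectfun$ and no appeal to Proposition~\ref{prop.magic} is required. With that substitution your proof is complete.
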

\begin{proof}
  To begin, note that the behaviour of a span is its joint image. That is,
  $\Phi(\xleftarrow{R}\xrightarrow{S})$ is the composite of linear
  relations $\ker\vectfun[\mathrm{id}_m \; -R]$ and $\ker\vectfun[S\;
  -\mathrm{id}_n]$, which comprises all $(\mathbf{x},\mathbf{y}) \in (\k^m)^\z
  \oplus (\k^n)^\z$ such that there exists $\mathbf{z} \in (\k^e)^\z$ with
  $\mathbf{x} = \vectfun R \mathbf{z}$ and $\mathbf{y} = \vectfun S
  \mathbf{z}$. Thus
  \[
    \Phi(\xleftarrow{R}\xrightarrow{S}) = \im \vectfun \left[
    \begin{matrix} R \\ S \end{matrix} \right].
  \]
  The result then follows immediately from Prop. \ref{thm.imagereps}.  
\end{proof}

In terms of the graphical theory, this means that a term of in the canonical
form $\ha_\pk ; \ha_\pk^{op}$ (`cospan form') is controllable iff we can find a
derivation, using the rules of $\ihcor$, that puts it in the form $\ha_\pk^{op}
; \ha_\pk$ (`span form').  This provides a general, easily recognisable
representation for controllable systems. 

The idea of a span representation also leads to a test for controllability: take
the pullback of the cospan and check whether the system described by it
coincides with the original one. Indeed, note that as $\pk$ is a PID, the
category $\mat\pk$ has pullbacks. A further consequence of Th.
\ref{cor.spanreps}, together with Prop.  \ref{prop.magic}, is the following. 

\begin{proposition} \label{prop.ctrlablepart}
  Let $m \xrightarrow{A} d \xleftarrow{B} n$ be a cospan in $\mat\pk$, and write
  the pullback of this span $m \xleftarrow{R} e \xrightarrow{S} n$. Then the
  behaviour of the pullback span $\Phi(\xleftarrow{R}\xrightarrow{S})$ is
  the maximal controllable sub-behaviour of
  $\Phi(\xrightarrow{A}\xleftarrow{B})$.
\end{proposition}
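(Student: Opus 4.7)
The plan is to verify three conditions: (i) $\Phi(\xleftarrow{R}\xrightarrow{S}) \subseteq \Phi(\xrightarrow{A}\xleftarrow{B})$, so the pullback genuinely gives a sub-behaviour; (ii) $\Phi(\xleftarrow{R}\xrightarrow{S})$ is controllable; and (iii) every controllable sub-behaviour of $\Phi(\xrightarrow{A}\xleftarrow{B})$ is contained in $\Phi(\xleftarrow{R}\xrightarrow{S})$. Together these exhibit the pullback-span behaviour as the \emph{largest} controllable sub-behaviour, which is the intended reading of ``maximal''. Statement (ii) is immediate from Theorem \ref{cor.spanreps}, since the behaviour is presented as a span. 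For (i), I would invoke the pullback identity $AR = BS$ in $\mat\pk$ and functoriality of $\vectfun$ to obtain $\vectfun A\,\vectfun R\,\mathbf z = \vectfun B\,\vectfun S\,\mathbf z$ for every $\mathbf z$, placing $(\vectfun R\,\mathbf z,\, \vectfun S\,\mathbf z)$ in $\ker \vectfun [A,\, -B] = \Phi(\xrightarrow{A}\xleftarrow{B})$.

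For (iii), I would take an arbitrary controllable sub-behaviour $\bb' \subseteq \Phi(\xrightarrow{A}\xleftarrow{B})$. By Proposition \ref{thm.imagereps}, there is a span $m \xleftarrow{R'} e' \xrightarrow{S'} n$ with $\bb' = \im \vectfun \begin{bmatrix} R' \\ S' \end{bmatrix}$. The inclusion $\bb' \subseteq \ker \vectfun [A,\, -B]$ forces $\vectfun(AR' - BS')\,\mathbf z = 0$ for every $\mathbf z$, hence $\vectfun(AR' - BS') = 0$ as a linear map. Faithfulness of $\vectfun$, noted just after its definition in \S\ref{sec.systems}, upgrades this to the matrix identity $AR' = BS'$ in $\mat\pk$. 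The universal property of the pullback $(R,S)$ then produces a unique $T\maps e' \to e$ with $R' = RT$ and $S' = ST$, so that
\[
\bb' = \im \vectfun \begin{bmatrix} RT \\ ST \end{bmatrix} = \im\!\left( \vectfun \begin{bmatrix} R \\ S \end{bmatrix} \vectfun T \right) \subseteq \im \vectfun \begin{bmatrix} R \\ S \end{bmatrix} = \Phi(\xleftarrow{R}\xrightarrow{S}).
\]

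The main obstacle is the passage from the set-theoretic inclusion $\bb' \subseteq \bb$ to the algebraic identity $AR' = BS'$ in $\mat\pk$; without faithfulness of $\vectfun$ one would only see that $AR'$ and $BS'$ induce the same linear map on $(\k^{e'})^\z$. Once this bridge is crossed, the universal property of pullbacks in $\mat\pk$ (available because $\pk$ is a PID, as observed in \S\ref{sec.systems}) finishes things off mechanically. I expect the role of Prop.\ \ref{prop.magic}, flagged in the statement, to surface either inside Prop.\ \ref{thm.imagereps} when producing the image representation of $\bb'$, or as a direct algebraic route from $\im \vectfun \begin{bmatrix} R' \\ S' \end{bmatrix} \subseteq \ker \vectfun [A,\, -B]$ to $AR' = BS'$ without appealing to faithfulness explicitly.
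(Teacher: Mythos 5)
Your proposal is correct and follows essentially the same route as the paper: represent an arbitrary controllable sub-behaviour as the image of a span via Prop.~\ref{thm.imagereps}, then use the universal property of the pullback to factor that span through $(R,S)$ and conclude containment of images. You are in fact more careful than the paper's own (very terse) proof, which skips the verification that $AR'=BS'$ holds in $\mat\pk$ before invoking the universal property --- the faithfulness argument you supply is exactly the missing bridge --- and which also omits your checks (i) and (ii).
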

\begin{proof}
  Suppose we have another controllable behaviour $\mathscr{C}$ contained in
  $\ker\vectfun [A\;-B]$. Then this behaviour is the $\Phi$-image of
  some span $m \xleftarrow{R'} e' \xrightarrow{S'}n$. But by the universal
  property of the pushout, there is then a map $b' \to b$ such that the relevant
  diagram commutes.  This implies that the controllable behaviour $\mathscr{C} =
  \im\vectfun\begin{bmatrix} R' \\ S'\end{bmatrix}$ is
  contained in $\im\vectfun \begin{bmatrix} R \\ S\end{bmatrix}$,
  as required. 
\end{proof}

\begin{corollary}
  Suppose that an LTI behaviour $\bb$ has cospan representation
  \[
    m \stackrel{A}\longrightarrow d \stackrel{B}\longleftarrow n.
  \]
  Then $\bb$ is controllable iff the $\Phi$-image of the pullback of this cospan
  in $\mat\pk$ is equal to $\bb$.
\end{corollary}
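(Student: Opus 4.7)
The plan is to derive this corollary as an immediate consequence of Proposition~\ref{prop.ctrlablepart} together with Theorem~\ref{cor.spanreps}. Both directions should be essentially a one-line observation once one unpacks what Proposition~\ref{prop.ctrlablepart} provides: namely, that the $\Phi$-image of the pullback span of a cospan representation is \emph{the} maximal controllable sub-behaviour of $\bb$.

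First I would form the pullback $m \xleftarrow{R} e \xrightarrow{S} n$ of the given cospan $m \xrightarrow{A} d \xleftarrow{B} n$ in $\mat\pk$ (which exists since $\pk$ is a PID and hence $\mat\pk$ has pullbacks). Let $\mathscr{C} = \Phi(\xleftarrow{R}\xrightarrow{S})$ denote the image of the pullback span. By Proposition~\ref{prop.ctrlablepart}, $\mathscr{C}$ is the maximal controllable sub-behaviour of $\bb$; in particular $\mathscr{C} \subseteq \bb$ and $\mathscr{C}$ is itself controllable.

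For the forward implication, suppose $\bb$ is controllable. Then $\bb$ is a controllable sub-behaviour of itself, and since $\mathscr{C}$ is the \emph{maximal} such sub-behaviour we must have $\bb \subseteq \mathscr{C}$, so $\bb = \mathscr{C}$, i.e.\ the $\Phi$-image of the pullback equals $\bb$. For the reverse implication, if $\bb = \mathscr{C}$ then $\bb$ is the $\Phi$-image of a span $m \xleftarrow{R} e \xrightarrow{S} n$, and by Theorem~\ref{cor.spanreps} this is precisely the characterisation of controllable LTI behaviours. Hence $\bb$ is controllable.

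There is no real obstacle here: the only subtlety is making sure that the pullback span's image is interpreted correctly (via $\Phi$ applied to the span read as a corelation, which by the proof of Theorem~\ref{cor.spanreps} equals $\im \vectfun \bigl[\begin{smallmatrix} R \\ S\end{smallmatrix}\bigr]$), and that the ``maximal'' in Proposition~\ref{prop.ctrlablepart} is being used to close the loop when $\bb$ itself is assumed controllable. Once these identifications are made, the corollary is immediate.
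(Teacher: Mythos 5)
Your proof is correct and follows exactly the route the paper intends: the corollary is stated without an explicit proof precisely because it is the immediate combination of Proposition~\ref{prop.ctrlablepart} (the pullback span's $\Phi$-image is the maximal controllable sub-behaviour) with Theorem~\ref{cor.spanreps} (controllable iff representable as a span). Both directions are handled as the paper would have them, so there is nothing to add.
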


Moreover, taking the pushout of this pullback span gives another cospan. The
morphism from the pushout to the original cospan, given by the universal
property of the pushout, describes the way in which the system fails to be
controllable.

Graphically, the pullback may be computed by using the axioms of the theory of
interacting Hopf algebras $\ih_\pk$~\cite{BSZ2,Za}. For example, the pullback
span of the system of Ex. \ref{ex.noncontrol} is simply the identity span, as
derived in equation \eqref{eq:exampleproof} of the introduction. In the
traditional matrix calculus for control theory, one derives this by noting the
system has kernel representation $\ker\theta\begin{bmatrix} s+1 & -(s+1)
\end{bmatrix}$, and eliminating the common factor $s+1$ between the entries.
Either way, we conclude that the maximally controllable subsystem of $1
\xrightarrow{[s+1]} 1 \xleftarrow{[s+1]} 1$ is simply the identity system $1
\xrightarrow{[1]} 1 \xleftarrow{[1]} 1$.

% Brendan, I think we should skip this for now, we can bring it back for the journal version
%\begin{proposition}
%  Suppose we have $f\maps n \to m$ and $g\maps n \leftarrow m$ are equal as
%  corelations. Then $g = f^{-1}$.
%\end{proposition}
%\begin{proof}
%  This means the following commutes:
%  \[
%    \xymatrix{
%      & n \\
%      n \ar@{=}[ur] \ar[dr]_f & & m \ar[ul]_g \ar@{=}[dl] \\
%      & m
%    }
%  \]
%\end{proof}

\subsection{Control and interconnection}
From this vantage point we can make useful observations about controllable
systems and their composites: we simply need to ask whether we can write them
as spans. This provides a new proof of the following.

\begin{proposition}
  Suppose that $\bb$ has cospan representation
  %\[
    $m \xrightarrow{A} d \xleftarrow{B} n.$
  %\]
  Then $\bb$ is controllable if $A$ or $B$ is invertible.
\end{proposition}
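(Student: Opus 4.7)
The plan is to apply Theorem \ref{cor.spanreps}: it suffices to exhibit a span $m \xleftarrow{R} e \xrightarrow{S} n$ that equals the given cospan as a morphism in $\corel\mat\pk$, i.e., whose $\Phi$-image is $\ker\theta[A\ -B]$. Since $\pk$ is a ring, invertibility of $A$ or $B$ means it is a square matrix admitting a two-sided matrix inverse in $\mat\pk$. The two cases are symmetric, so I would focus on the case $B$ invertible and handle the other dually at the end.

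Suppose first that $B$ is invertible; in particular $d = n$ and $B^{-1}\maps n \to n$ exists in $\mat\pk$. Take $e = m$, $R = \mathrm{id}_m$, and $S = B^{-1}A$. Following the computation in the proof of Theorem \ref{cor.spanreps}, this span maps under $\Phi$ to
\[
  \im\theta\begin{bmatrix}\mathrm{id}_m \\ B^{-1}A\end{bmatrix}
  = \{(\mathbf{x}, \theta(B^{-1}A)\mathbf{x}) \mid \mathbf{x} \in (\k^m)^\z\}.
\]
Using that $\theta$ is a functor and $\theta B$ is invertible with inverse $\theta(B^{-1})$, the condition $\mathbf{y} = \theta(B^{-1}A)\mathbf{x}$ is equivalent to $\theta A\,\mathbf{x} = \theta B\,\mathbf{y}$. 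Hence the set above equals $\ker\theta[A\ -B] = \Phi(\xrightarrow{A}\xleftarrow{B})$, so the span and the cospan represent the same corelation. By Theorem \ref{cor.spanreps}, $\bb$ is controllable.

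When $A$ is invertible, I would symmetrically take $R = A^{-1}B$, $S = \mathrm{id}_n$ and $e = n$, and verify that $\im\theta\begin{bmatrix}A^{-1}B\\ \mathrm{id}_n\end{bmatrix} = \ker\theta[A\ -B]$ by an identical manipulation. There is no serious obstacle: the only thing to watch is that the manipulation takes place inside $\mat\pk$, where inverses of invertible matrices do live, and that $\theta$ commutes with matrix inversion, which is immediate from functoriality. The whole argument is essentially the observation that if one of the legs of the cospan is an iso, the cospan is isomorphic to a functional cospan, and every functional cospan admits an obvious span representation.
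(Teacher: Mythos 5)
Your proof is correct and takes essentially the same route as the paper: when one leg is invertible the cospan is (isomorphic to) the graph of the map $B^{-1}A$ (resp.\ $A^{-1}B$), which immediately yields an equivalent span, and Theorem~\ref{cor.spanreps} then gives controllability. The paper states this in one line; your version merely adds the explicit verification that the $\Phi$-images coincide.
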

\begin{proof}
  If $A$ is invertible, then $m \xleftarrow{A^{-1}} d \xleftarrow{B} n$ is an
  equivalent span; if $B$ is invertible, then $m \xrightarrow{A} d
  \xrightarrow{B^{-1}} n$.
\end{proof}

More significantly, the compositionality of our framework aids understanding of
how controllability behaves under the interconnection of systems---an active
field of investigation in current control theory. We give an example application
of our result.

First, consider the following proposition.
\begin{proposition}\label{prop:veryexciting}
  Let $\bb,\mathscr{C}$ be controllable systems, given by the respective
  $\Phi$-images of the spans
  %\[
    $m \xleftarrow{B_1} d \xrightarrow{B_2} n$ % \qquad \mbox{$
    and %} \qquad
    $n \xleftarrow{C_1} e \xrightarrow{C_2} l$.
  %\]
  Then the composite $\mathscr{C} \circ \bb: m \to l$ is controllable
  if $\Phi(\xrightarrow{B_2}\xleftarrow{C_1})$ is
  controllable.
\end{proposition}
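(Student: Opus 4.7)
The plan is to apply Theorem \ref{cor.spanreps} twice: first to convert the three controllability hypotheses into image representations of matrices, and then to certify the composite as controllable by exhibiting such a representation for it as well.

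First I unfold the two given controllable systems. By Theorem \ref{cor.spanreps}, the assumption that $\bb$ and $\mathscr C$ are controllable with the displayed spans means
\[
  \bb = \im \vectfun \begin{bmatrix} B_1 \\ B_2 \end{bmatrix}, \qquad \mathscr C = \im \vectfun \begin{bmatrix} C_1 \\ C_2 \end{bmatrix}.
\]
Computing the relational composite directly then gives
\[
  \mathscr{C} \circ \bb \;=\; \bigl\{\,(\vectfun B_1\,\mathbf u,\,\vectfun C_2\,\mathbf v) \,\big|\, \vectfun B_2\,\mathbf u = \vectfun C_1\,\mathbf v\,\bigr\}.
\]
The side condition is precisely membership of $(\mathbf u,\mathbf v)$ in the intermediate system $\mathcal D \Defeq \Phi(\xrightarrow{B_2}\xleftarrow{C_1})$, so the composite behaviour is the image of $\mathcal D$ under the block-diagonal map $\vectfun\bigl[\begin{smallmatrix}B_1 & 0 \\ 0 & C_2\end{smallmatrix}\bigr]$.

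Next I would invoke the hypothesis on $\mathcal D$. Since it is controllable, Theorem \ref{cor.spanreps} supplies a span $d \xleftarrow{D_1} f \xrightarrow{D_2} e$ with $\mathcal D = \{(\vectfun D_1\,\mathbf w,\vectfun D_2\,\mathbf w) \mid \mathbf w \in (\k^f)^\z\}$. A concrete candidate is the pullback of $B_2, C_1$ in $\mat\pk$: by Proposition \ref{prop.ctrlablepart} its $\Phi$-image is the maximal controllable sub-behaviour of $\mathcal D$, which coincides with $\mathcal D$ by hypothesis. Substituting $\mathbf u = \vectfun D_1\,\mathbf w$ and $\mathbf v = \vectfun D_2\,\mathbf w$ into the previous display collapses the existential and yields
\[
  \mathscr{C} \circ \bb \;=\; \im \vectfun \begin{bmatrix} B_1 D_1 \\ C_2 D_2 \end{bmatrix}.
\]
This is the image representation associated with the span $m \xleftarrow{B_1 D_1} f \xrightarrow{C_2 D_2} l$, so one final application of Theorem \ref{cor.spanreps} delivers controllability of $\mathscr C \circ \bb$.

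The only delicate point is that the substitution in the last step produces an \emph{equality}, not merely an inclusion: every $(\mathbf u,\mathbf v)$ satisfying $\vectfun B_2\,\mathbf u = \vectfun C_1\,\mathbf v$ must be hit by some $\mathbf w$. This is exactly what controllability of $\mathcal D$ buys us. Dropping that hypothesis would leave only the maximal controllable sub-behaviour of $\mathcal D$ in play, and the span $(B_1 D_1, C_2 D_2)$ would represent a controllable sub-system of $\mathscr C \circ \bb$ rather than the entire composite — the familiar mechanism by which controllability can fail under interconnection, and the reason the implication in the statement goes only one way.
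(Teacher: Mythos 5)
Your proof is correct and is essentially the paper's own argument, which reads in full: ``Replacing $\xrightarrow{B_2}\xleftarrow{C_1}$ with an equivalent span gives a span representation for $\mathscr{C} \circ \bb$.'' You have simply unfolded that one-line categorical substitution into an explicit relational computation, correctly identifying via Theorem~\ref{cor.spanreps} that the resulting span $m \xleftarrow{B_1 D_1} f \xrightarrow{C_2 D_2} l$ is an image representation of the composite, with the surjectivity of the parametrisation being exactly what controllability of the intermediate system supplies.
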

\begin{proof}
  Replacing $\xrightarrow{B_2}\xleftarrow{C_1}$ with an equivalent span gives a span
  representation for $\mathscr{C} \circ \bb$.
\end{proof}

\begin{exm}
  Consider LTI systems
  \[
  \lower25pt\hbox{$\includegraphics[height=2cm]{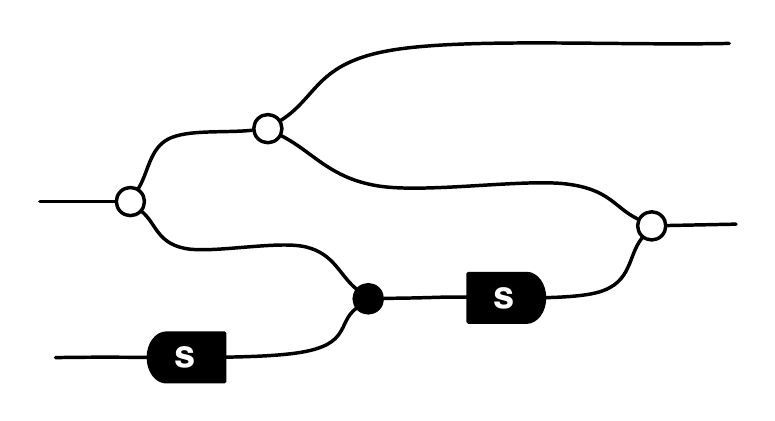}$}
  \quad\text{and}\quad
  \lower22pt\hbox{$\includegraphics[height=1.8cm]{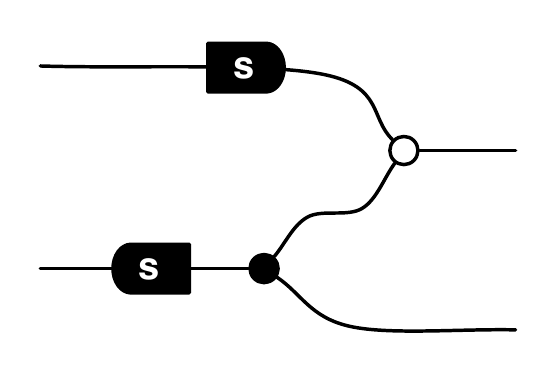}$}.
  \]
  These systems are controllable because each is represented by a span in
  $\mat\pk$. Indeed, recall that each generator of $\ltids=\corel\mat\pk$ arises
  as the image of a generator in $\mat\pk$ or $\mat\pk^{\mathrm{op}}$; for
  example, the white monoid map $\addgen$ represents a morphism in $\mat\pk$,
  while the black monoid map $\copyopgen$ represents a morphism in
  $\mat\pk^{\mathrm{op}}$. The above diagrams are spans as we may partition the
  diagrams above so that each generator in $\mat\pk^{\mathrm{op}}$ lies to the
  left of each generator in $\mat\pk$.

  To determine controllability of the interconnected system
  \[
  \lower25pt\hbox{$\includegraphics[height=2cm]{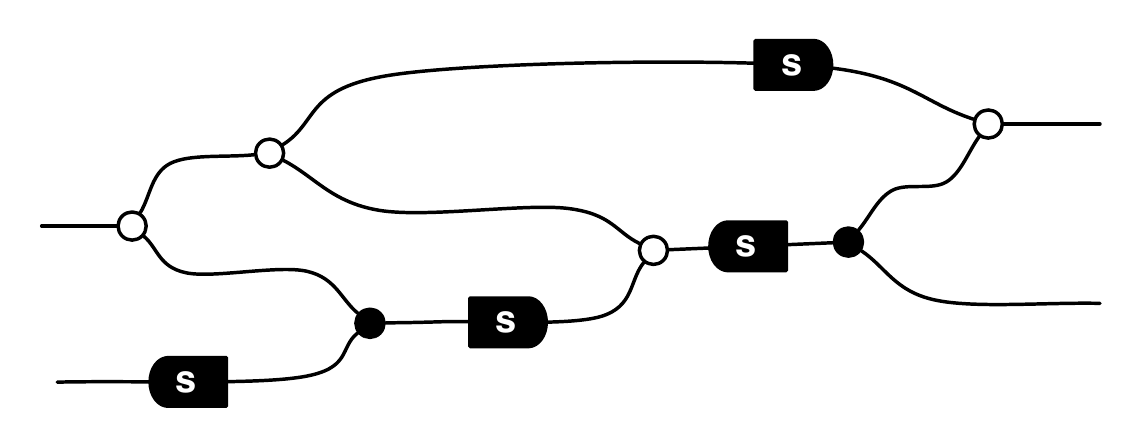}$}
  \]
  Prop. \ref{prop:veryexciting} states that it is enough to consider the
  controllability of the subsystem
  \[
  \lower17pt\hbox{$\includegraphics[height=1.4cm]{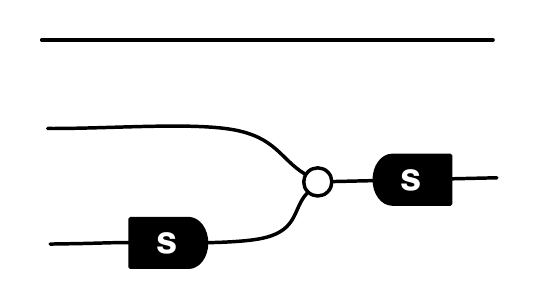}$}.
  \]
  The above diagram gives a representation of the subsystem as a cospan in
  $\mat\pk$. We can prove it is controllable by rewriting it as a span using an
  equation of $\ltids$:
  \[
  \lower17pt\hbox{$\includegraphics[height=1.4cm]{pics/68diag4.pdf}$}
    =
  \lower17pt\hbox{$\includegraphics[height=1.4cm]{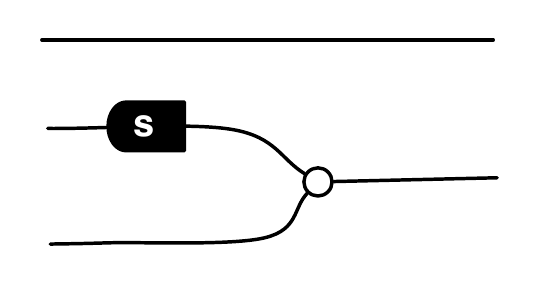}$}.
  \]
  Thus the composite system is controllable.
\end{exm}

\subsection{Comparison to matrix methods}
The facility with which the graphical calculus formalises and solves such
controllability issues is especially appealing in view of potential applications
in the analysis of controllability of \emph{systems over networks} (see
\cite{OFM}). To make the reader fully appreciate such potential, we sketch how
complicated such analysis is using standard algebraic methods and
dynamical system theory for the case of two single-input, single-output systems.
See also pp. 513--516 of \cite{FH}, where a generalization of the result of
Prop. \ref{prop:veryexciting} is given in a polynomial- and operator-theoretic
setting. 

In the following we abuse notation by writing a matrix for its image under the
functor $\theta$.  The following is a useful result for proving the
controllability of kernel representations in the single-input single-output case.

\begin{proposition}[Willems {\cite[p.75]{Wi}}] \label{prop.controlkernel}
  Let $\mathscr B \subseteq (\k^2)^\z$ be a behaviour given by the kernel of the
  matrix $\begin{bmatrix} A & B\end{bmatrix}$, where $A$ and $B$ are column
  vectors with entries in $\k[s,s^{-1}]$. Then $\mathscr B$ is controllable if
  and only if the greatest common divisor $\gcd(A,B)$ of $A$ and $B$ is $1$.
\end{proposition}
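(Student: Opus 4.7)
The plan is to leverage Proposition~\ref{prop.ctrlablepart} together with Willems' Proposition~\ref{prop.magic}. Represent $\mathscr B = \ker\theta[A\ B]$ by the cospan $1 \xrightarrow{A} 1 \xleftarrow{-B} 1$ in $\cospan\mat\pk$, write $d=\gcd(A,B)$, and factor $A=da$, $B=db$ with $\gcd(a,b)=1$ in the PID $\pk$. The pullback in $\mat\pk$ consists of pairs $(x,y)\in\pk^2$ with $Ax + By = 0$, which (since $\pk$ is a domain and $d\neq 0$) reduces to $ax + by = 0$; by coprimality of $a$ and $b$, any such pair has the form $(bt,-at)$ for a unique $t\in\pk$, so the pullback is free of rank one and the pullback span is $1 \xleftarrow{b} 1 \xrightarrow{-a} 1$. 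By Proposition~\ref{prop.ctrlablepart}, its $\Phi$-image $\mathscr C = \{(bt,-at) : t \in \k^\z\}$ is the maximal controllable sub-behaviour of $\mathscr B$, so $\mathscr B$ is controllable iff $\mathscr B = \mathscr C$.

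The heart of the argument is to show that $\mathscr B = \mathscr C$ is equivalent to $\theta d$ being injective on $\k^\z$. One direction is easy: if $\theta d$ is injective, any $(x,y) \in \mathscr B$ satisfies $d(ax+by)=0$, hence $ax+by=0$, so $(x,y)\in\mathscr C$ by the coprimality argument above. Conversely, suppose there exists a nonzero $z \in \ker\theta d$; by B\'ezout for the coprime pair $(a,b)$, choose $u,v \in \pk$ with $ua+vb=1$ and set $(x,y) = (uz, vz)$. Then $ax+by = (ua+vb)z = z$ and $d(ax+by) = dz = 0$, so $(x,y)\in\mathscr B$. However, if $(x,y) = (bt,-at)$ for some $t\in\k^\z$, multiplying the first coordinate by $a$ and the second by $b$ and adding yields $(au+bv)z = abt - abt = 0$, i.e.\ $z=0$, a contradiction. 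So $(x,y) \in \mathscr B\setminus \mathscr C$, and $\mathscr B$ is not controllable.

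Finally, applying Proposition~\ref{prop.magic} to the $1\times 1$ matrices $M = d$ and $N=1$, $\theta d$ is injective iff $\ker\theta d \subseteq \ker\theta 1 = 0$ iff there exists $X\in\pk$ with $Xd=1$, i.e.\ iff $d$ is a unit in $\pk$---and this is precisely the condition $\gcd(A,B)=1$ (gcds being defined up to units). The main obstacle is the B\'ezout bookkeeping of the second paragraph: the categorical machinery (pullback description of the maximal controllable subbehaviour together with Willems' lemma reducing kernel containment to matrix divisibility) does the structural heavy lifting, and the genuinely non-trivial content is this translation between coprimality of $(a,b)$ and unit-ness of $d$.
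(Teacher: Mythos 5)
The paper offers no proof of Proposition~\ref{prop.controlkernel}: it is imported from Willems \cite{Wi} and used as a black box in the comparison with matrix methods. Your argument is therefore not paralleling a proof in the text but supplying one, and the route you take---deducing the classical scalar gcd test from the paper's own machinery, namely the pullback description of the maximal controllable sub-behaviour (Proposition~\ref{prop.ctrlablepart} and its corollary) together with Willems' divisibility lemma (Proposition~\ref{prop.magic}) applied to the $1\times 1$ matrices $d$ and $1$---is genuinely different from the standard one, which goes directly through image representations (Proposition~\ref{thm.imagereps}) and exhibits an autonomous subsystem when the gcd is nontrivial. Your computation of the pullback as $1\xleftarrow{b}1\xrightarrow{-a}1$ is correct, and reducing ``$d$ is a unit'' to ``$\theta d$ is injective'' via Proposition~\ref{prop.magic} with $N=1$ is a nice way to avoid any hands-on recurrence argument. (You read $A,B$ as $1\times 1$, which is the reading consistent with the single-input single-output application in the paper; and, like the statement itself, you implicitly assume $[A\;\,B]\neq 0$, without which the claim fails, since $\ker\theta[0\;\,0]=(\k^2)^\z$ is controllable while $\gcd(0,0)\neq 1$.)

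One step needs a different justification. In the easy direction you pass from $\theta(a)x+\theta(b)y=0$ to $(x,y)\in\mathscr C$ ``by the coprimality argument above'', but that argument was a divisibility argument inside the PID $\pk$ (namely $b\mid ax$ and $\gcd(a,b)=1$ imply $b\mid x$), and it does not transfer to the $\pk$-module $\k^\z$: that module has plenty of torsion (for instance $s-1$ annihilates every constant stream), so divisibility reasoning is not available for stream pairs. The repair is the same B\'ezout computation you already use in the converse direction: with $ua+vb=1$, put $t=\theta(v)x-\theta(u)y$; then $\theta(b)t=\theta(vb)x-\theta(ub)y=\theta(vb)x+\theta(ua)x=x$ and likewise $-\theta(a)t=y$, so $(x,y)=(\theta(b)t,-\theta(a)t)\in\mathscr C$. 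With that line inserted the proof is complete.
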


Using the notation of Prop. \ref{prop:veryexciting}, the trajectories of
$\mathscr{B}$ and $\mathscr{C}$ respectively are those $(w_1,w_2) \in (\k^n)^\z
\oplus (\k^m)^\z$ and $(w_2',w_3) \in (\k^m)^\z \oplus (\k^p)^\z$ such that 
\begin{equation}\label{eq:imBnC}
\begin{bmatrix} w_1\\w_2 \end{bmatrix}
=
\begin{bmatrix} B_1\\B_2 \end{bmatrix} \ell_1 
\quad \mbox{ \rm and} \quad  
\begin{bmatrix} w_2^\prime\\w_3 \end{bmatrix}
=
\begin{bmatrix} C_1\\C_2 \end{bmatrix} \ell_2\; 
\end{equation}
for some $\ell_1 \in (\k^b)^\z$, $\ell_2 \in (\k^c)^\z$. These are the explicit
image representations of the two systems.  We assume without loss
of generality that the representations (\ref{eq:imBnC}) are \emph{observable}
(see \cite{Wi}); this is equivalent to $\gcd(B_1,B_2)=\gcd(C_1,C_2)=1$. Augmenting
(\ref{eq:imBnC}) with the interconnection constraint
$w_2=B_2\ell_1=C_1\ell_2=w_2^\prime$ we obtain the representation of the
interconnection: 
\begin{eqnarray}\label{eq:hybBnC}
\begin{bmatrix}
w_1\\w_2\\w_2^\prime\\w_3\\0
\end{bmatrix}&=&\begin{bmatrix} B_1&0\\B_2&0\\0&C_1\\ 0&C_2 \\ B_2&-C_1 \end{bmatrix} \begin{bmatrix} \ell_1\\\ell_2\end{bmatrix}\; .
\end{eqnarray}
Prop. \ref{prop:veryexciting} concerns the controllability of the set
$\mathscr{C} \circ \mathscr{B}$ of trajectories $(w_1,w_3)$ for which there
exist trajectories $w_2$, $w_2^\prime$, $\ell_1$, $\ell_2$ such that
(\ref{eq:hybBnC}) holds. 

To obtain a representation of such behavior the variables $\ell_1$, $\ell_2$,
$w_2$ and $w_2^\prime$ must be eliminated from (\ref{eq:hybBnC}) via algebraic
manipulations (see the discussion on p. 237 of \cite{Wi2}). Denote
$G=\gcd(B_1,C_2)$, and write $C_2=G C_2^\prime$ and $B_1=G B_1^\prime$, where
$\gcd(B_1^\prime, C_2^\prime)=1$.  Without entering in the algebraic details, it
can be shown that a kernel representation of the projection of the behavior of
(\ref{eq:hybBnC}) on the variables $w_1$ and $w_3$ is
\begin{equation}\label{eq:extafterelim}
  \begin{bmatrix} C_2^\prime B_2& -B_1^\prime C_2\end{bmatrix} 
  \begin{bmatrix} w_1\\ w_3\end{bmatrix}=0\; .
\end{equation}
We now restrict to the single-input single-output case. Recalling Prop.
\ref{prop.controlkernel}, the behavior represented by (\ref{eq:extafterelim}) is
controllable if and only if $\gcd(C_2^\prime B_2, B_1^\prime C_1)=1$. 

Finally then, to complete our alternate proof of the single-input single-output
case of Prop. \ref{prop:veryexciting}, note that
$\cospanfunrest(\xrightarrow{B_2}\xleftarrow{C_1})$ is controllable if
$\gcd(B_2, C_1)=1$.  Given the observability assumption, this implies
$\gcd(C_2^\prime B_2, B_1^\prime C_1)=1$, and so the interconnected behaviour
$\mathscr{C} \circ \mathscr{B}$ represented by (\ref{eq:extafterelim}) is
controllable. 

In the multi-input, multi-output case stating explicit conditions on the
controllability of the interconnection on the basis of properties of the
representations of the individual systems and their interconnection is rather
complicated. This makes the simplicity of Prop. \ref{prop:veryexciting} and the
straightforward nature of its proof all the more appealing.

\section{Conclusion} \label{sec.close}

Willems concludes~\cite{Wi} with 
\begin{quote}
Thinking of a dynamical system as a behavior, and of interconnection as variable sharing, gets the physics right. 
\end{quote}
In this paper we have shown that the algebra of symmetric monoidal categories gets the mathematics right. 

We characterised the prop $\ltids$ of linear time invariant dynamical systems as the prop of corelations of matrices over $\k[s,s^{-1}]$ and used this fact to present it as a symmetric monoidal theory. As a result, we obtained the language of string diagrams as a syntax for LTI systems. From the point of view of formal semantics, the syntax was endowed with denotational and operational interpretations that are closely related, as well as a sound and complete system of equations for diagrammatic reasoning. 

We harnessed the compositional nature of the language to provide a new characterisation of the fundamental notion of controllability, and argued that this approach is well-suited to some of the problems that are currently of interest in systems theory, for example in systems over networks, where compositionality seems to be a vital missing ingredient. 

The power of compositionality will be of no surprise to researchers in concurrency theory or formal semantics of programming languages. Our theory--which, as we show in the final section, departs radically from traditional techniques---brings this insights to control theory. By establishing links between our communities, our ambition is to open up control theory to formal specification and verification.

%More things to say
%\begin{itemize}
%\item we are redoing the foundations of control theory: a revolutionary new perspectivee
%\item the smt perspective is robust: equational presentation is
%remarkably robust even as the ``underlying maths'' is radically different
%\end{itemize}

%We have achieved two things: (1) a sound and complete signal flow calculus for
%LTIDS and (2) an understanding of controllability as the existence of a
%`span' representation.
%
%My hope is that the above contains the kernel of a new framework for behavioural
%control that incorporates the notions of both behaviour and interconnection
%within the same algebraic structure.

% We will put this in the final accepted version
%\paragraph{Acknowledgements.}
%We thank Jonathan Mayo, Thabiso Maupong, Vidit Nanda, and John Baez for
%useful discussions. BF thanks the Clarendon Fund, Hertford College, and the
%Queen Elizabeth Scholarship, Oxford, for their support.

\begin{appendix}

\newpage 
%%%%%%%%%%%%%%%%%%%%%%%%%%%%%%%%%%%%%%%%%%%%%

\section{Background on cospans} \label{app.cospans}
Suppose $\mathcal C$ is a category with pushouts. A \define{cospan} from $X$ to
$Y$ in $\mathcal C$ is an object $N$ with a pair of morphisms $X\xrightarrow{i}
N \xleftarrow{o}Y$.
Given cospans $X\xrightarrow{i_X} N \xleftarrow{o_Y}Y$, $Y\xrightarrow{i_Y} M
\xleftarrow{o_Z}Z$, we may compose them by taking a pushout over $Y$:
\[
  \xymatrix{
    X \ar[dr]_{i_X} & & Y \ar[dl]^{o_Y}  
    \ar[dr]_{i_Y} & & Z \ar[dl]^{o_Z} {} 
    \\
    & N \ar[dr]_{\iota_N} & & M \ar[dl]^{\iota_M}  \\
    & & P {\save*!<0cm,-.5cm>[dl]@^{|-}\restore}
  }
\]
The composite is the cospan $X\xrightarrow{i_X;\iota_N} P
\xleftarrow{o_Z;\iota_M}Z$.

Given cospans $X\xrightarrow{i} N \xleftarrow{o}Y$, $X\xrightarrow{i'} N'
\xleftarrow{o'}Y$, a \define{map of cospans} is a morphism $n\colon  N \to N'$ in $\mathcal C$
such that
\[
  \xymatrix{
    & N \ar[dd]^n  \\
    X \ar[ur]^{i} \ar[dr]_{i'} && Y \ar[ul]_{o} \ar[dl]^{o'}\\
    & N'
  }
\]
commutes. We may define a category $\cospan \mathcal{C}$ with objects the
objects of $\mathcal C$ and morphisms isomorphism classes of cospans \cite{Be}.

\section{Background on factorisation systems}\label{app.factorisation}

%%%%%%%%%%%%%%%%%%%%%%%%%%%%%%%%%%%%%%%%%%%%%

Recall that a \define{factorisation system} $(\mathcal L,\mathcal R)$ in a category
$\mathcal C$ comprises subcategories $\mathcal L$, $\mathcal R$ of $\mathcal C$
such that 
\begin{enumerate}[(i)]
  \item $\mathcal L$ and $\mathcal R$ contain all isomorphisms of $\mathcal C$
  \item every morphism $f$ of $\mathcal C$ can be factored as $f = l;r$,
    where $l \in \mathcal L$ and $r \in \mathcal R$ 
  \item given morphisms $f,f'$, with factorisations $f = l;r$, $f' = l';r'$ of the
    above sort, for every $u$, $v$ such that $f;v = u;f'$ there exists a unique
    morphism $t$ such that
    \[
      \xymatrixcolsep{3pc}
      \xymatrixrowsep{3pc}
      \xymatrix{
  \ar[r]^l \ar[d]_u & \ar[r]^r \ar@{.>}[d]^{\exists! t} &  \ar[d]^v \\
  \ar[r]_{l'}& \ar[r]_{r'} & 
      }
    \]
    commutes.
\end{enumerate}
%We refer to property (iii) as the functoriality of the factorisation system.

Also recall that we say that a subcategory $\mathcal R$ of $\mathcal C$ is
\define{stable under pushouts} if whenever
\[
  \xymatrixcolsep{3pc}
  \xymatrixrowsep{3pc}
  \xymatrix{
    \ar[r]^{\hat r} & \\
    \ar[u] \ar[r]^{r} & \ar[u]
  }
\]
is a pushout square in $\mathcal C$ such that $r \in \mathcal R$, we also have
$\hat r \in \mathcal R$.

%%%%%%%%%%%%%%%%%%%%%%%%%%%%%%%%%%%%%%%%%%%%%%

\section{Proofs} \label{app.proofs}

%%%%%%%%%%%%%%%%%%%%%%%%%%%%%%%%%%%%%%%%%%%%%%

\subsection{Proof of Proposition~\ref{prop.matfactorisation}\label{app.prop.matfactorisation}}
\begin{proof}
  We use Smith normal form~\cite[Section~6.3]{Ka}. Given any matrix $R$, the Smith normal form gives
  us  $R = VDU$, where $U$ and $V$ are invertible, and $D$
  is diagonal. In graphical notation we can write it thus:
  \[
\includegraphics[height=1.2cm]{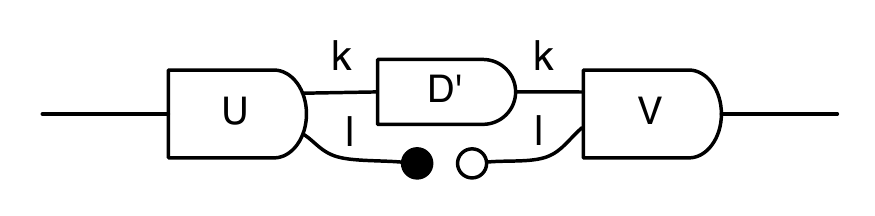}
  \]
  This implies we may write it as $R = U';D';V'$, where $U'$ is a split
  epimorphism, $D'$ diagonal of full rank, and $V'$ a split monomorphism.
  Explicitly, the construction is given by
  \[
\includegraphics[height=2cm]{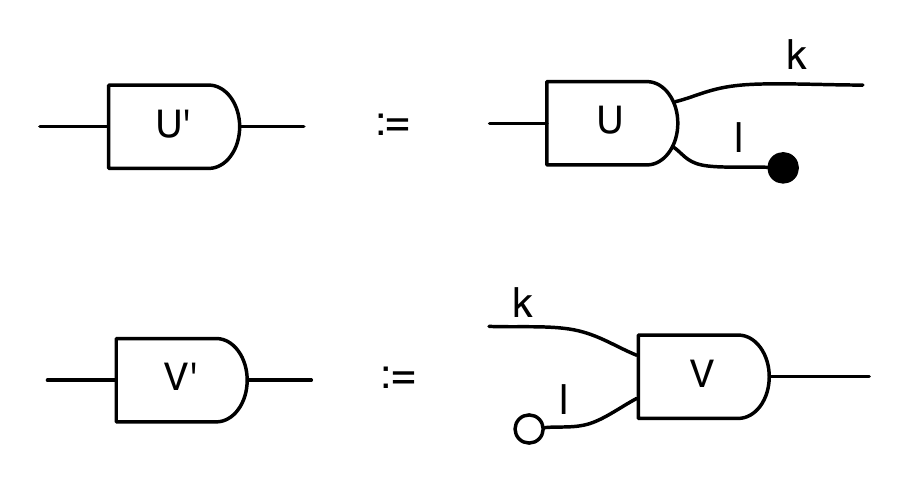}
  \]
  Recall that $\pk$ is a PID, so the full rank diagonal matrix $D'$ is epi. Then
  $R = V'(D'U')$ is an epi-split mono factorisation. 
\end{proof}

\subsection{Proof of Corollary~\ref{cor.episplitmono}\label{app.cor.episplitmono}}
\begin{proof}
That $(\mathcal E, \mathcal M)$ is a factorisation system follows from a more
general fact, true in any category, that if every arrow can be factorised as an
epi followed by a split mono, then this results in a factorisation system.  The
only non-trivial part to check is the uniqueness condition: given epi $e_1,e_2$,
split mono $m_1,m_2$ and commutative diagram
    \[
      \xymatrixcolsep{3pc}
      \xymatrixrowsep{3pc}
      \xymatrix{
  \ar[r]^{e_1} \ar[d]_u & \ar[r]^{m_1} \ar@{.>}[d]^{\exists! t} &  \ar[d]^v \\
  \ar[r]_{e_2}& \ar[r]_{m_2} & 
      }
    \]
we must show that there is a unique $t$ that makes the diagram commute.
Indeed let $t\Defeq m_2'vm_1$ where $m_2'$ satisfies $m_2'm_2=id$. 
To see that the right square commutes, first
\[
m_2 t e_1 =  m_2 m_2' v m_1 e_1 = m_2 m_2' m_2 e_2 u = m_2 e_2 u = v m_1 e_1
\]
and since $e_1$ is epi we have $m_2 t = v m_1$. For the left square,
$t e_1 = m_2' v m_1 e_1 = m_2' m_2 e_2 u = e_2 u$. Uniqueness is immediate,
since, e.g.\ $m_2$ is mono. 

To prove that $\mathcal M$ is stable under pushouts, we make use of the fact
that $\mat\pk$ is equivalent to $\fmod \pk$, and use an explicit formula for the
pushout in $\fmod \pk$. Let 
\[
  \xymatrixcolsep{3pc}
  \xymatrixrowsep{3pc}
  \xymatrix{
    d \ar[r]^{\hat M} & p\\
    n \ar[u]^A \ar[r]^{M} & e \ar[u]
  }
\]
be a pushout square in $\fmod \pk$, where $M$ is a split mono. Write $M'\colon e
\to n$ for the map such that $M'M = \idn$. We show that $\hat M$ also splits.

First, note that the pushout $p$ is given by $(d\oplus e)/\im [\begin{smallmatrix} A \\ -M
\end{smallmatrix}]$, with $\hat M$ mapping each $v \in e$ to the equivalence
class $\overline v$; we abuse notation to write $v$ for both elements of $e$ and
their image in $d \oplus e$ under the coproduct injection, and similar write $u$
for elements of $d$ and their images in $d \oplus e$. Define $\hat M'\colon p
\to e$ by $\overline{u+v}$ maps to $u+AM'v$. Should this be well-defined, it is clear
that $\hat M'\hat M = \idn$. 

We check that $\hat M'$ is well-defined. Suppose $\overline{u'+v'} =
\overline{u+v}$, $u,u' \in
d$, $v,v' \in e$. Then $u-u'+v-v' \in \im [\begin{smallmatrix} A \\ -M
\end{smallmatrix}]$. This implies there exists $w \in n$ such that $u-u'=Aw$,
$v-v' = -Mw$. Thus
\begin{align*}
  u+AM'v &= (u'-Aw)+AM'(v'-Mw) \\
  &= u'-Aw+AM'v'+AM'Mw \\
  &= u'+AM'v'. 
\end{align*}
This shows that $\hat M$ is a split mono, as required.
\end{proof}

\subsection{Proof that $\corel \vect_\k \cong \linrel_\k$\label{app.corellinrel}}
\begin{proof}
  We show that the map 
  \[
    \corel \vect_\k \longrightarrow \linrel_\k
  \]
  sending each vector space to itself and each jointly-epic cospan
  \[
    U \xrightarrow{f} A \xleftarrow{g} V
  \]
  to the linear subspace $\ker[f\;-g]$ is a full, faithful, and
  bijective-on-objects functor.

  Indeed, (equivalence classes of) jointly-epic cospans $U \xrightarrow{f} A \xleftarrow{g} V$ are in
  one-to-one correspondence with surjective linear maps $U\oplus V \to A$, which
  are in turn, by the isomorphism theorem, in one-to-one correspondence with
  subspaces of $U\oplus V$. These correspondences are described by the kernel
  construction above. Thus our map is evidently full, faithful, and
  bijective-on-objects. It also maps identities to identities. It remains to
  check that it preserves composition.

  Suppose we have jointly epic cospans $U \xrightarrow{f} A \xleftarrow{g} V$
  and $V \xrightarrow{h} B \xleftarrow{k} W$. Then their pushout is given by
  $P=A \oplus B/\ker[g\;-h]$, and we may draw the pushout diagram
  \[
    \xymatrix{
      U \ar[dr]_{f} & & V \ar[dl]^{g}  
      \ar[dr]_{h} & & W \ar[dl]^{k} {} 
      \\
      & A \ar[dr]_{\iota_A} & & B \ar[dl]^{\iota_B}  \\
      & & P {\save*!<0cm,-.5cm>[dl]@^{|-}\restore}
    }
  \]
  We wish to show the equality of relations
  \[
    \ker[f\;-g];\ker[h\;-k] = \ker[\iota_A f\; -\iota_B g].
  \]
  Now $(\mathbf{u},\mathbf{w}) \in U \oplus W$ lies in the composite relation
  $\ker[f\;-g];\ker[h\;-k]$ iff there exists $\mathbf{v} \in V$ such that
  $f\mathbf{u} = g\mathbf{v}$ and $h\mathbf{v} = k\mathbf{w}$. But as $P$ is the
  pushout, this is true iff 
  \[
    \iota_A f \mathbf{u} = \iota_A g \mathbf{v} = \iota_B h \mathbf{v} =
    \iota_B k \mathbf{w}.
  \]
  This in turn is true iff $(\mathbf{u}, \mathbf{w}) \in \ker[\iota_Af\;
  -\iota_Bk]$, as required. 
\end{proof}
%  Observe that linear relations are jointly-monic spans in $\vect_\k$. Taking
%  relations with respect to the epi-mono factorisation system, we see that $\rel
%  \vect_\k = \linrel_\k$. But $\vect_\k$ is isomorphic to ${\vect_\k}^{\op}$, via
%  the dual vector space construction. Taking the same construction in the
%  opposite category, we see that 
%  \[
%    \corel\vect_\k = \rel{\vect_\k}^{\op} \cong \rel\vect_\k = \linrel_\k.
%  \]

%%%%%%%%%%%%%%%%%%%%%%%%%%%%%%%%%%%%%%%%%%%%%%%%%%%%%

\onecolumn

%%%%%%%%%%%%%%%%%%%%%%%%%%%%%%%%%%%%%%%%%%%%

\section{Equations of $\ihcor$} \label{app.equations}
\[
  \includegraphics[height=.95\textheight]{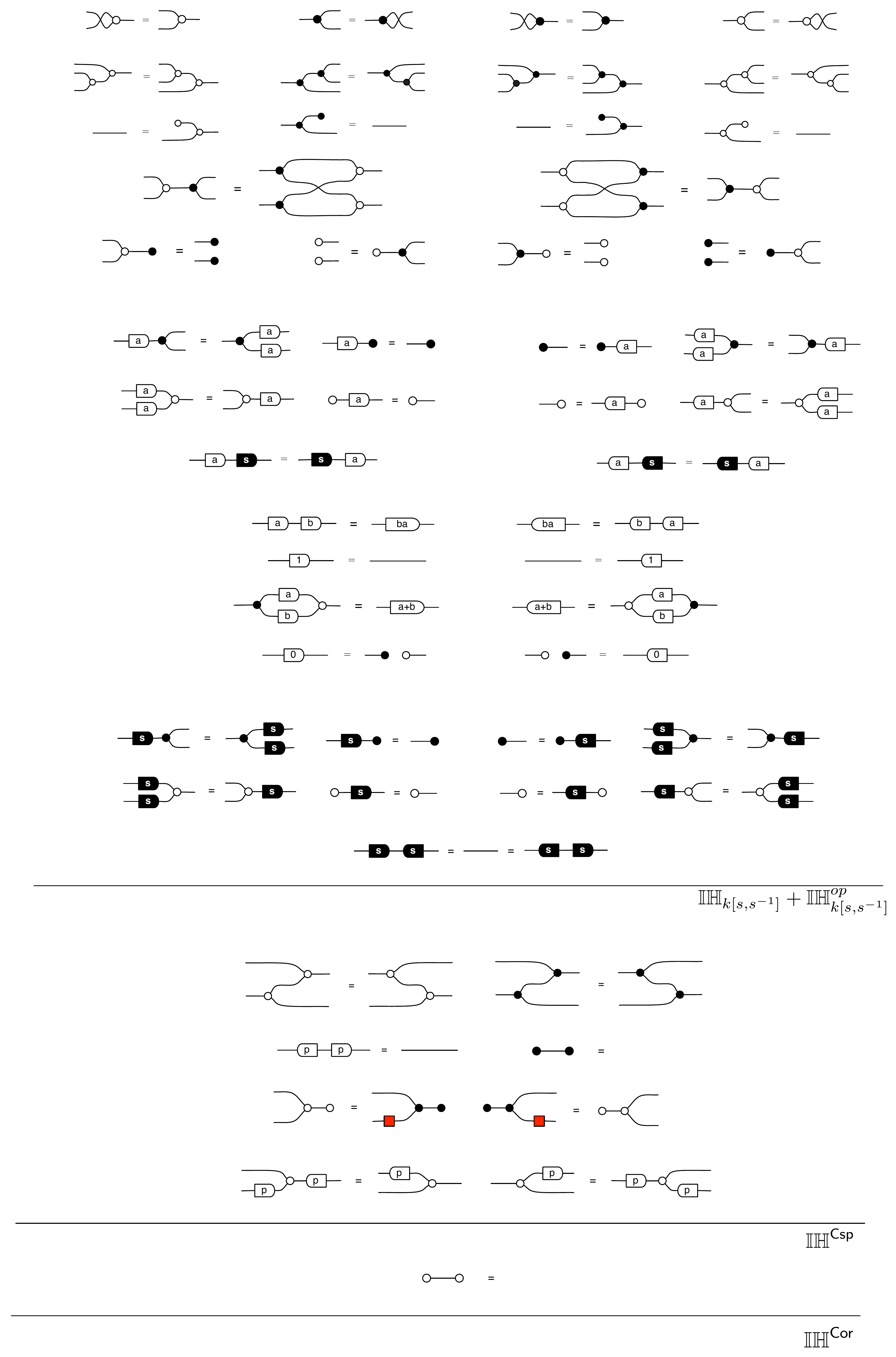}
\]

\end{appendix}

\end{document}